\documentclass[lettersize,onecolumn]{IEEEtran}
\usepackage{amsmath,amsfonts}
\usepackage{algorithm}
\usepackage{array}
\usepackage[caption=false,font=normalsize,labelfont=sf,textfont=sf]{subfig}
\usepackage{textcomp}
\usepackage{stfloats}
\usepackage{url}
\usepackage{verbatim}
\usepackage{graphicx}
\usepackage{cite}
\hyphenation{op-tical net-works semi-conduc-tor IEEE-Xplore}

\usepackage{makecell}

\usepackage{algpseudocode}

\usepackage{tikz}
\usetikzlibrary{arrows.meta,positioning,fit,calc,decorations.pathreplacing,shapes.geometric}

\usepackage{amsthm}
\usepackage{mathrsfs}
\usepackage{bbm}
\usepackage{booktabs}

\newtheorem{definition}{Definition}
\newtheorem{example}{Example}[section]
\newtheorem{theorem}{Theorem}
\newtheorem{remark}{Remark}

\newtheorem{corollary}{Corollary}
\newtheorem{fact}{Fact}
\newtheorem{lemma}{Lemma}

\newcommand{\para}[1]{\emph{#1}\quad}

\usepackage{orcidlink}

\allowdisplaybreaks[4]

\begin{document}

\title{Mutual Information Bounds in the Shuffle Model}

\author{Pengcheng Su\textsuperscript{\orcidlink{0009-0003-9895-4995}}, Haibo Cheng\textsuperscript{\orcidlink{0000-0001-6677-463X}}, Ping Wang\textsuperscript{\orcidlink{0000-0002-8854-2079}},~\IEEEmembership{Senior Member,~IEEE}
\thanks{Pengcheng Su is with the School of Computer Science, Peking University, Beijing, 100091, China. (e-mail: pcs@pku.edu.cn)
Haibo Cheng and Ping Wang are with the National Engineering Research Center of Software Engineering, Peking University, Beiiing,100091,China, and with Key Laboratory of High Confidence Software Technologies, Ministry of Education (Peking University), China. (e-mail: hbcheng@pku.edu.cn, pwang@pku.edu.cn) Haibo Cheng and Ping Wang are the corresponding authors.
}
}

% The paper headers
\markboth{}{Su \MakeLowercase{\textit{et al.}}: Mutual Information Bounds in the Shuffle Model}

\maketitle

\begin{abstract}
The shuffle model enhances privacy by anonymizing users’ reports through random permutation. This paper presents the first systematic study of the single-message shuffle model from an information-theoretic perspective. We analyze two regimes: the shuffle-only setting, where each user directly submits its message ($Y_i=X_i$), and the shuffle-DP setting, where each user first applies a local $\varepsilon_0$-differentially private mechanism before shuffling ($Y_i=\mathcal{R}(X_i)$).
Let \(\boldsymbol{Z} = (Y_{\sigma(i)})_i\) denote the shuffled sequence produced by a uniformly random permutation \(\sigma\), and let \(K = \sigma^{-1}(1)\) represent the position of user~1’s message after shuffling. 

For the shuffle-only setting, we focus on a tractable yet expressive \emph{basic configuration}, where the target user’s message follows \(Y_1 \sim P\) and the remaining users’ messages are i.i.d.\ samples from \(Q\), i.e., \(Y_2,\dots,Y_n \sim Q\). 
We derive asymptotic expressions for the mutual information quantities \(I(Y_1;\boldsymbol{Z})\) and \(I(K;\boldsymbol{Z})\) as \(n \to \infty\), and demonstrate how this analytical framework naturally extends to settings with heterogeneous user distributions.

For the shuffle-DP setting, we establish information-theoretic upper bounds on total information leakage. When each user applies an $\varepsilon_0$-DP mechanism, the overall leakage satisfies $I(K; \boldsymbol{Z}) \le 2\varepsilon_0$ and $I(X_1; \boldsymbol{Z}\mid (X_i)_{i=2}^n) \le (e^{\varepsilon_0}-1)/(2n) + O(n^{-3/2})$. These results bridge shuffle differential privacy and mutual-information-based privacy.
\end{abstract}

\begin{IEEEkeywords}
Mutual Information, Shuffle Model, Differential Privacy, Information-Theoretic Bounds
\end{IEEEkeywords}

\section{Introduction}

The shuffle model introduces a \emph{trusted shuffler} that collects messages from users, applies a random permutation, and releases the anonymized multiset to enhance privacy~\cite{Prochlo,cheu19}. 
Such a shuffler can be realized using cryptographic primitives~\cite{Tor,ando_et_al:LIPIcs.ITC.2021.9,10.1145/3319535.3354238,Blinder,Express263836,Atom,10.1145/3576915.3623215}. 
Depending on the number of messages each user submits, the model is categorized as either the \emph{single-message} or the \emph{multi-message} shuffle model.

This work focuses on the \emph{single-message shuffle model}, which has been shown to substantially \emph{amplify} the privacy guarantees of local differential privacy (LDP) mechanisms (see Section~\ref{sec_background_dp})~\cite{Erlingsson19,cheu19,Balle2019,Feldman2021}. 
Moreover, this amplification effect is remarkably \emph{universal} and \emph{robust}: it applies to all LDP mechanisms, depends only on the number of honest users, and remains unaffected by the behavior of corrupted ones~\cite{Cheu22,CheuThesis}. 
In practice, the single-message shuffle model captures essential aspects of real-world anonymization systems such as \emph{anonymous voting} and \emph{honey techniques}~\cite{Juels2013,SecurityAnalysisHoneyword,WangHoneyword22,Su26CSF} (see Section~\ref{sec_intro_honeyword}).

Despite the extensive body of work on differential privacy amplification, the \emph{mutual-information} perspective of the shuffle model has received relatively little attention. 
This paper bridges this gap by providing the first systematic information-theoretic analysis of leakage in the \emph{single-message shuffle model}. 
We study two regimes: the \emph{shuffle-only} setting (\(Y_i = X_i\)) and the \emph{shuffle-DP} setting (\(Y_i = \mathcal{R}(X_i)\)), as illustrated in Fig.~\ref{fig:shuffle_model}.
Prior research has shown that for an \emph{uninformed adversary}—one who knows only the population distribution from which the inputs \(X_i\) are drawn identically and independently—shuffling alone can already provide substantial privacy protection~\cite{10221886}. 
In contrast, for an \emph{informed adversary}\footnote{Also referred to as a \emph{strong adversary} in some prior literature~\cite{Cuff16}.}—who knows all users’ inputs except that of user~1—mere shuffling completely reveals the target input $X_1$, making local randomization indispensable.

For the shuffle-only setting, we analyze a simple yet expressive \emph{basic configuration} where \(Y_1 \sim P\) and \(Y_2, \dots, Y_n \stackrel{\text{i.i.d.}}{\sim} Q\). 
Let \(\boldsymbol{Z} = (Y_{\sigma(i)})_i\) denote the shuffled sequence produced by a uniformly random permutation \(\sigma\), and let \(K = \sigma^{-1}(1)\) represent the position of user~1’s message after shuffling. 
We derive asymptotic expressions for the mutual information quantities \(I(Y_1;\boldsymbol{Z})\) and \(I(K;\boldsymbol{Z})\) as \(n \to \infty\). 
The main theoretical results are summarized in Table~\ref{table1}. 
Throughout this paper, the logarithm in 
\(H(X) = -\sum_i p_i \log(p_i)\) 
is taken to be the \textbf{natural logarithm}, which leads to more concise expressions.  
We further show that our analytical framework extends naturally to heterogeneous settings via the \emph{blanket decomposition} technique~\cite{Balle2019,su2025decompositionbasedoptimalboundsprivacy}, in which the users’ inputs are mutually independent with arbitrary marginals \(X_i \sim P_i\).

For the shuffle-DP setting, we show that the same \emph{blanket decomposition} technique reduces the analysis of a local-DP-then-shuffle mechanism to a homogeneous instance that is analytically tractable. Using this reduction, we derive information-leakage bounds against an informed adversary who knows \(\boldsymbol{X}_{-1}=(X_2,X_3,\dots,X_n)\); no independence assumption on the \(X_i\) is required. For the position leakage \(I(K;\boldsymbol{Z})\), the bound even holds for an adversary who knows the entire input vector \(\boldsymbol{X}=(X_1,\dots,X_n)\).

Let \(\mathcal{P}(\mathbb{X}^n)\) denote the set of all joint distributions on the database \(\boldsymbol{X}=(X_1,\dots,X_n)\), allowing arbitrary dependence.  
For any \(P_{\boldsymbol{X}}\in\mathcal{P}(\mathbb{X}^n)\), let \(I_{P_{\boldsymbol{X}}}(\cdot)\) denote mutual information computed under the joint law induced by \(P_{\boldsymbol{X}}\), the local randomizer \(\mathcal{R}\), and the shuffler.  
Our main bounds are as follows: if \(\mathcal{R}\) satisfies \(\varepsilon_0\)-LDP, then, uniformly over all \(P_{\boldsymbol{X}}\in\mathcal{P}(\mathbb{X}^n)\),
\[
I_{P_{\boldsymbol{X}}}\!\bigl(K;\boldsymbol{Z}\mid \boldsymbol{X}\bigr)\;\le\; 2\varepsilon_0,
\qquad
I_{P_{\boldsymbol{X}}}\!\bigl(X_1;\boldsymbol{Z}\mid \boldsymbol{X}_{-1}\bigr)
\;\le\; \frac{e^{\varepsilon_0}-1}{2n} + O(n^{-3/2}).
\]
In particular, the first inequality implies the unconditional bound \(I_{P_{\boldsymbol{X}}}(K;\boldsymbol{Z}) \le 2\varepsilon_0\) because $K$ is independent of $\boldsymbol{X}$.
These results provide a new information-theoretic characterization of privacy amplification by shuffling.

\begin{figure}[t]
\centering
\resizebox{0.6\linewidth}{!}{%
\begin{tikzpicture}[
  font=\large,
  user/.style={rectangle, draw, rounded corners, minimum width=1.5cm, minimum height=0.6cm},
  rnd/.style={ellipse, draw, minimum width=1.8cm, minimum height=0.8cm},
  shuffler/.style={rectangle, draw, fill=gray!10, minimum width=2.8cm, minimum height=1.2cm},
  msg/.style={rectangle, draw, fill=blue!10, minimum width=1.2cm, minimum height=0.6cm}
]

% Users (inputs X_i)
\node[user] (U1) at (0, 0) {$X_1$};
\node[user] (U2) [below=0.8cm of U1] {$X_2$};
\node (Udots) [below=0.6cm of U2] {$\vdots$};
\node[user] (Un) [below=0.6cm of Udots] {$X_n$};

% Local mapping: Id or R
\node[rnd] (R1) [right=1.8cm of U1] {$\mathrm{Id}\ \text{or}\ \mathcal{R}$};
\node[rnd] (R2) [right=1.8cm of U2] {$\mathrm{Id}\ \text{or}\ \mathcal{R}$};
\node[rnd] (Rn) [right=1.8cm of Un] {$\mathrm{Id}\ \text{or}\ \mathcal{R}$};

% Arrows from users to local mapping
\draw[->] (U1) -- (R1);
\draw[->] (U2) -- (R2);
\draw[->] (Un) -- (Rn);

% Messages Y_i
\node[msg] (Y1) [right=1.6cm of R1] {$Y_1$};
\node[msg] (Y2) [right=1.6cm of R2] {$Y_2$};
\node[msg] (Yn) [right=1.6cm of Rn] {$Y_n$};

\draw[->] (R1) -- (Y1);
\draw[->] (R2) -- (Y2);
\draw[->] (Rn) -- (Yn);

% Annotation: Yi = Id(Xi) or R(Xi)
\node[align=left] (annot) [above=0.2cm of Y1] {$Y_i=\mathrm{Id}(X_i)\ \text{or}\ Y_i=\mathcal{R}(X_i)$};

% Shuffler
\node[shuffler] (S) [right=2.1cm of Y2, yshift=-0.2cm] {Shuffler $\mathcal{S}$};

\draw[->] (Y1) -- ([yshift=0.2cm]S.west);
\draw[->] (Y2) -- (S.west);
\draw[->] (Yn) -- ([yshift=-0.2cm]S.west);

% Shuffled outputs Z_i = Y_{sigma(i)}
\node[msg] (Z1) [right=2.6cm of S.north east, yshift=0.9cm] {$Z_1 = Y_{\sigma(1)}$};
\node[msg] (Z2) [below=0.5cm of Z1] {$Z_2 = Y_{\sigma(2)}$};
\node (Zdots) [below=0.5cm of Z2] {$\vdots$};
\node[msg] (Zn) [below=0.5cm of Zdots] {$Z_n = Y_{\sigma(n)}$};

\draw[->] (S.east) -- (Z1.west);
\draw[->] (S.east) -- (Z2.west);
\draw[->] (S.east) -- (Zn.west);

% Permutation note
\node[align=center] (permnote) [above=0.2cm of S] {$\sigma \sim \mathrm{Uniform}(\mathcal{S}_n)$};

\end{tikzpicture}
}
\caption{Illustration of the single-message shuffle model. Each user~$i$ generates a local message $Y_i$, either deterministically ($Y_i=\mathrm{Id}(X_i)$) or via a local randomizer ($Y_i=\mathcal{R}(X_i)$). The shuffler then applies a random permutation $\sigma$, uniformly sampled from the permutation group $\mathcal{S}_n$, to produce the anonymized outputs $Z_i = Y_{\sigma(i)}$.}
\label{fig:shuffle_model}
\end{figure}
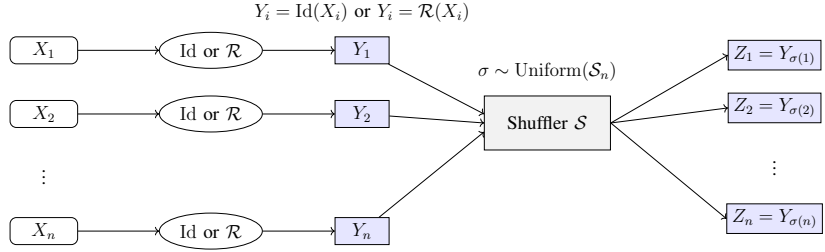

\begin{table}[t]
\centering
\caption{Summary of main results.} 
\label{table1}
\renewcommand{\arraystretch}{2}
\begin{tabular}{l l l}
\toprule
\textbf{Setting} & \textbf{Mutual Information} \(I(K;\boldsymbol{Z})\) & \textbf{Mutual Information} \(I(X_1;\boldsymbol{Z})\) \\ 
\midrule
\textbf{Basic Shuffle-Only (\(P=Q\))} & \makecell[l]{0 \quad (Theorem~\ref{theorem_basic_IK1})} & \makecell[l]{\(\frac{m-1}{2n} + O(n^{-3/2})\) \quad (Theorem~\ref{theorem2})} \\
\textbf{Basic Shuffle-Only (\(P\ll Q\))} & \makecell[l]{\(D_{\mathrm{KL}}(P\Vert Q) - \frac{\chi^2(P\Vert Q)}{2n} + O(n^{-3/2})\) (Theorem~\ref{theorem_IK_general1})} & \makecell[l]{\(\frac{\sum_{i} \frac{p_i - p_i^2}{q_i}}{2n} + O(n^{-3/2})\) (Theorem~\ref{theorem_IY_general1})} \\
\textbf{Basic Shuffle-Only (\(P\not\ll Q\))} & Theorem~\ref{theorem_IK_general2} & Theorem~\ref{theorem_IY_general2} \\
\textbf{General Shuffle-Only (heterogeneous)} & Theorem~\ref{theorem_IK_general_shuffle_only}  & Theorem~\ref{theorem_general_shuffle_IX} \\
\textbf{Shuffle-DP (\(\varepsilon_0\)-LDP then shuffle)} & \makecell[l]{\(I(K;\boldsymbol{Z}\mid \boldsymbol{X})\leq 2\varepsilon_0\) \quad (Theorem~\ref{theorem_shuffle_IK})} & \makecell[l]{\(I(X_1;\boldsymbol{Z}\mid \boldsymbol{X}_{-1})\leq \frac{e^{\varepsilon_0} - 1}{2n} + O(n^{-3/2})\) (Theorem~\ref{theorem_shuffle_IX})} \\
\bottomrule
\end{tabular}

\smallskip
\begin{flushleft}
\footnotesize
Note: \(n\) is the population size; \(m\) denotes the support size of \(P\); 
\(D_{\mathrm{KL}}(P\Vert Q)=\sum_i p_i\log(\frac{p_i}{q_i})\) and \(\chi^2(P\Vert Q)=\sum_i \frac{(p_i-q_i)^2}{q_i}\) are the Kullback–Leibler and chi-squared divergences of \(P\) relative to \(Q\), respectively; 
\(P\ll Q\) means \(\mathrm{Supp}(P) \subseteq \mathrm{Supp}(Q)\). 
The constants hidden in $O(\cdot)$ may depend on $P$, $Q$, or $\mathcal{R}$.
\end{flushleft}
\end{table}

The main contributions of this paper are summarized as follows:
\begin{enumerate}
  \item We provide, to the best of our knowledge, the first information-theoretic characterization of the \emph{single-message shuffle model}. 
  Specifically, we quantify the information leakage from the shuffled output \(\boldsymbol{Z}\) about both the target user’s message position \(K\) and input \(X_1\) via the mutual information quantities \(I(K;\boldsymbol{Z})\) and \(I(X_1;\boldsymbol{Z})\). 
  We analyze both the shuffle-only and shuffle-DP settings, deriving asymptotic bounds on these quantities as functions of the population size \(n\).
  
  \item Our analysis uncovers several structural insights. 
  In the basic shuffle-only setting, \(I(K;\boldsymbol{Z})\) converges asymptotically to the constant \(D_{\mathrm{KL}}(P\Vert Q)\). 
  Interestingly, the distribution \(Q\) that asymptotically minimizes \(I(X_1;\boldsymbol{Z})\) is \emph{not} \(P\). 
  In the shuffle-DP setting, we establish unified upper bounds on the overall mutual-information leakage \(I(X_1;\boldsymbol{Z})\), whose first-order coefficient depends solely on the local privacy parameter~\(\varepsilon_0\).

  \item From a methodological perspective, we extend the \emph{blanket decomposition} technique—originally developed for analyzing privacy amplification in differential privacy—to the information-theoretic domain. 
  This extension highlights the power of decomposition-based approaches in quantifying information leakage within the shuffle model.
\end{enumerate}

The remainder of the paper is organized as follows. 
Section~\ref{sec_background} reviews the necessary background on differential privacy and the shuffle model. 
Section~\ref{sec_shuffle_only} develops our analysis in the shuffle-only setting, and 
Section~\ref{sec_shuffle_dp} extends this framework to the shuffle-DP setting. 
Section~\ref{sec_discussion} discusses an alternative decomposition technique and contrasts it with our approach. 
Section~\ref{sec_conclusion} concludes the paper.

For clarity of presentation, the main body of the paper provides proof sketches and highlights the key ideas, while the detailed technical arguments are deferred to the appendices.

\section{Background}\label{sec_background}

\subsection{Differential Privacy}\label{sec_background_dp}

Differential privacy provides a principled way to quantify the privacy of randomized algorithms. Informally, an algorithm satisfies differential privacy if its output distribution is nearly unchanged when the data of a single individual is modified. In this way, the published outcome reveals very limited information about any particular participant.

The standard (central) model of differential privacy (DP) assumes the existence of a trusted curator that holds the raw data of all $n$ users and applies a DP mechanism on top of it~\cite{Dwork2006}. In many practical deployments, however, the \emph{local} model is preferred: under \emph{local differential privacy} (LDP), each user perturbs their data on the client side before sending it to the collector, thereby removing the need to trust the server~\cite{Cormode18,LDPsurvey,8,Apple17}. The price of this weaker trust assumption, though, is typically a noticeable degradation in utility because the noise must be added per user \cite{6686179,5}.

A line of work therefore considers the \emph{shuffle model}, which inserts a shuffling component between the clients and the analyzer~\cite{cheu19,Erlingsson19,Prochlo}. The shuffler anonymizes the users’ reports by randomly permuting them before they reach the server, and this additional anonymity can be leveraged to improve accuracy. In this paper we concentrate on the \emph{single-message shuffle model}, where each user sends exactly one randomized report and these reports are shuffled prior to aggregation~\cite{feldman2023soda}. Here we give the formal definitions of DP, LDP, and shuffle DP.

We say that random variables \( P \) and \( Q \) are $(\varepsilon, \delta)$-indistinguishable if for all set $T$:
\[
\Pr[P\in T]\le e^{\varepsilon}\Pr[Q\in T]+\delta.
\]
If two datasets \( X \) and \( X' \) have the same size and differ only by the data of a single individual, they are referred to as neighboring datasets (denoted by \( X \simeq X' \)).

\begin{definition}[Differential Privacy]
An algorithm \( \mathcal{R} : \mathbb{X}^n \to \mathbb{Z} \) satisfies $(\varepsilon, \delta)$-differential privacy if for all neighboring datasets \( X, X' \in \mathbb{X}^n \), \( \mathcal{R}(X) \) and \( \mathcal{R}(X') \) are $(\varepsilon, \delta)$-indistinguishable.    

Here, \( \varepsilon \) is referred to as the privacy budget, which controls the privacy loss, while \( \delta \) allows for a small probability of failure. 
When \( \delta = 0 \), the mechanism is also called \( \varepsilon \)-DP.
\end{definition}

\begin{definition}[Local Differential Privacy]
An algorithm \( \mathcal{R} : \mathbb{X} \to \mathbb{Y} \) satisfies local $(\varepsilon, \delta)$-differential privacy if for all \( x, x' \in \mathbb{X} \), \( \mathcal{R}(x) \) and \( \mathcal{R}(x') \) are $(\varepsilon, \delta)$-indistinguishable.    
\end{definition}

A single-message protocol \( \mathcal{P} \) in the shuffle model is defined as a pair of algorithms \( \mathcal{P} = (\mathcal{R}, \mathcal{A}) \), where \( \mathcal{R} : \mathbb{X} \to \mathbb{Y} \), and \( \mathcal{A} : \mathbb{Y}^n \to \mathbb{O} \). We call \( \mathcal{R} \) the \textit{local randomizer}, \( \mathbb{Y} \) the \textit{message space} of the protocol, \( \mathcal{A} \) the \textit{analyzer}, and \( \mathbb{O} \) the \textit{output space}~\cite{Balle2019,cheu19}. 

The overall protocol implements a mechanism \( \mathcal{P} : \mathbb{X}^n \to \mathbb{O} \) as follows: Each user \( i \) holds a data record \( x_i \), to which they apply the local randomizer to obtain a message \( y_i = \mathcal{R}(x_i) \). The messages \( y_i \) are then shuffled and submitted to the analyzer. Let \( \mathcal{S}(y_1, \dots, y_n) \) denote the random shuffling step, where \( \mathcal{S} : \mathbb{Y}^n \to \mathbb{Y}^n \) is a \textit{shuffler} that applies a random permutation to its inputs. 

In summary, the output of \( \mathcal{P}(x_1, \dots, x_n) \) is given by
\[
\mathcal{A} \circ \mathcal{S} \circ \mathcal{R}^n (\boldsymbol{x}) = \mathcal{A}(\mathcal{S}(\mathcal{R}(x_1), \dots, \mathcal{R}(x_n))).
\]

\begin{definition}[Differential Privacy in the Shuffle Model]
A protocol \( \mathcal{P} = (\mathcal{R}, \mathcal{A}) \) satisfies $(\varepsilon, \delta)$-differential privacy in the shuffle model if for all neighboring datasets \( X, X' \in \mathbb{X}^n \), the distributions \( \mathcal{S} \circ \mathcal{R}^n(X) \) and \( \mathcal{S} \circ \mathcal{R}^n(X') \) are $(\varepsilon, \delta)$-indistinguishable.    
\end{definition}

The ``amplification-by-shuffling" theorem in the shuffle model implies that when each of the \(n\) users randomizes their data using an \(\varepsilon_0\)-LDP mechanism, the collection of shuffled reports satisfies \((\varepsilon(\varepsilon_0, \delta, n), \delta)\)-DP, where \(\varepsilon(\varepsilon_0, \delta, n) \ll \varepsilon_0\) for sufficiently large \(n\) and not too small $\delta$~\cite{feldman2023soda}.

Previous research has primarily focused on analyzing and computing the differential privacy guarantees—namely, the parameters $(\varepsilon, \delta)$—achieved by a given protocol $\mathcal{P}$ in the shuffle model \cite{Erlingsson19,Balle2019,cheu19,Feldman2021,feldman2023soda}. Our work is the first to formally study the information leakage in the single-message shuffle model from an information-theoretic perspective.

\begin{remark}
It is worth emphasizing that the functionality of the \emph{multi-message shuffle model} is fundamentally different from that of the single-message setting, and thus the analytical framework developed in this paper does not directly apply. For instance, in the “split-and-mix” protocol, each user sends Shamir additive secret shares of its input to the shuffler, which can be shown to achieve statistically secure addition in the sense of secure multiparty computation (MPC)~\cite{Ishai06}. 

More recent works have further demonstrated that the multi-message shuffle model enables one-round, non-interactive, and information-theoretically secure MPC for arbitrary functions~\cite{cryptoeprint:2025/1442,HIKR23,Hiw25,BEG25}. In this line of research, privacy is formalized within the MPC paradigm rather than through mutual information.
\end{remark}

\subsection{Mutual-Information-Based Privacy}
A fundamental problem in privacy-preserving data analysis is how to \emph{quantify} information leakage (see surveys~\cite{9524532,Wagner18,9380147,10.1145/3604904}). 
Beyond differential privacy (DP)~\cite{Dwork2006}, alternative frameworks have been proposed from the perspectives of \emph{quantitative information flow} (QIF)~\cite{10.3233/JCS-150528,9da52c96268c449d85dacfcd15f27685,10.1109/CSF.2012.26} and \emph{information theory}~\cite{Calmon2012PrivacyAS,6736724,Cuff16}. 

Differential privacy provides a strong, distribution-independent notion of privacy. 
In contrast, QIF quantifies how much the observation of an output $Y$ improves an adversary’s ability to guess the input $X$. 
Let $H_{\infty}(X) = -\log(\max_x p_X(x))$ denote the \emph{min-entropy} of $X$. 
The QIF leakage is defined as $H_{\infty}(X) - H_{\infty}(X\mid Y)$, where 
$H_{\infty}(X\mid Y) = \mathbb{E}_Y[-\log(\max_x p_{X|Y}(x|Y))]$~\cite{10.3233/JCS-150528}. 

From the classical information-theoretic perspective, privacy leakage is measured by the \emph{mutual information} between $X$ and $Y$. 
Let $H(X) = -\sum_x p_X(x)\log p_X(x)$ denote the Shannon entropy. 
Then the information leakage is given by
\begin{align*}
I(X;Y) 
&= H(X) - H(X\mid Y) \\
&= \sum_{x,y} p_{X,Y}(x,y) \log\!\left(\frac{p_{X,Y}(x,y)}{p_X(x)p_Y(y)}\right).
\end{align*}

A substantial line of work has investigated the relationships among these privacy notions. 
Several studies have established formal connections between \emph{central} DP and mutual information~\cite{Cuff16}, or QIF \cite{10.3233/JCS-150528}. Others have examined the relationship between \emph{local} DP and mutual information~\cite{5670946,wang2016mutualinformationoptimallylocal,6686179,10806910}. 
The study in~\cite{10221886} initiated the exploration of the connection between shuffle-DP and QIF-based measures. 
However, to the best of our knowledge, the link between the \emph{shuffle} model of DP and mutual-information-based privacy has not been explicitly characterized prior to this work. 
A related effort~\cite{10.1007/978-3-030-77883-5_16} derived specialized mutual information bounds in the shuffle model to establish lower bounds on estimation error and sample complexity; however, its objectives and analytical techniques differ fundamentally from ours.

\subsection{An Example of the Shuffle-Only Model}\label{sec_intro_honeyword}

Beyond its role in differential privacy, the shuffle model also captures the essence of ``honey'' techniques in computer security. A representative example is the \emph{honeyword} mechanism, a password-leakage detection scheme designed to mitigate the impact of frequent password breaches~\cite{Juels2013}. 

In a honeyword-based system, the authentication server stores a user's true password along with \(n-1\) plausible decoys—called \emph{honeywords}—and then shuffles them so that the genuine password is indistinguishable from the others~\cite{Juels2013}. If an adversary compromises the password database, they obtain a set of \(n\) indistinguishable candidates rather than the true password alone. When a login attempt is made using one of these passwords, the system can detect misuse if a honeyword is entered and raise an alert~\cite{SecurityAnalysisHoneyword,Juels2013}. 

State-of-the-art honeyword generation typically draws decoys i.i.d. from a trained password probability model~\cite{SecurityAnalysisHoneyword,WangHoneyword22}. Let \(P\) represent the distribution of human-chosen passwords, and \(Q\) the distribution used to generate honeywords. This setup precisely corresponds to our basic shuffle-only setting: \(Y_1 \sim P\) represents the genuine password, while \(\boldsymbol{Z}\) denotes the shuffled password list associated with the user.

Existing analyses have primarily focused on the adversary’s success probability after observing the shuffled list, i.e., the probability of identifying the true password \(Y_1\) among the \(n\) entries in \(\boldsymbol{Z}\). More specifically, they study the probability of correctly guessing the \emph{position} (i.e., \(K\) in our notation) of the real password. When \(P = Q\), this success probability is exactly \(1/n\) (details in Section \ref{sec_posterior})~\cite{SecurityAnalysisHoneyword,Su26CSF}. 

However, i.i.d.\ sampling may generate a honeyword that coincides with the real password. In such a case, the adversary can recover the correct password \emph{value} without actually identifying its position. In the extreme case where all \(n\) honeywords are identical, the adversary learns the true password with probability 1.

Our information-theoretic analysis in Section \ref{sec_basic_ll} shows that, when \(P \ll Q\),
\[
  I(Y_1; \boldsymbol{Z}) = \frac{\sum_i \frac{p_i - p_i^2}{q_i}}{n} + O(n^{-3/2}).
\]
A noteworthy consequence is that, asymptotically, the honeyword distribution \(Q\) that minimizes the information about the true password contained in the shuffled list is \emph{not} \(Q = P\); instead, it should satisfy
\(
  q_i \propto \sqrt{p_i(1 - p_i)}.
\)
This result provides new insights into these honey techniques.

\section{Analysis of the Shuffle-Only Setting}\label{sec_shuffle_only}

This section presents our analysis of the shuffle-only setting. 
We begin with a tractable and expressive case, referred to as the \emph{basic shuffle-only setting}, where the target user’s message follows \(Y_1 \sim P\) and all other users’ messages are i.i.d. samples from \(Q\), i.e., \(Y_2, \dots, Y_n \stackrel{\text{i.i.d.}}{\sim} Q\). 
We first derive closed-form expressions for the posterior distributions of the random position \(K = \sigma^{-1}(1)\) (the location of the first user’s message after shuffling) and the corresponding message value \(Y_1 = Z_K\).
Building upon these formulas, we perform exact and asymptotic analyses for three representative cases: \(P = Q\), \(P \ll Q\), and \(P \not\ll Q\).
Finally, we demonstrate that the analytical framework naturally extends to the heterogeneous setting (where users’ inputs follow arbitrary distributions \(X_i \sim P_i\) ) through the \emph{blanket decomposition} technique.

\subsection{Posterior Distributions under the Basic Setting}\label{sec_posterior}

The following theorem characterizes how an observer of the shuffled sequence \(\boldsymbol{Z}\) updates its belief about both the position (\(K\)) of the distinguished user’s message and the message itself (\(Y_1\)).

\begin{theorem}\label{theorem_posterior}
In the shuffle-only setting with basic configuration \((P,Q)\), conditioning on the shuffled output \(\boldsymbol{Z}=\boldsymbol{z}=(z_1,z_2,\dots,z_n)\), the posterior distributions of \(K\) and \(Y_1\) are given by
\begin{align*}
    \Pr[K = k \mid \boldsymbol{Z}=\boldsymbol{z}]
    &=\frac{\tfrac{P(z_k)}{Q(z_k)}}{\sum_{i=1}^n \tfrac{P(z_i)}{Q(z_i)}},\\[4pt]
    \Pr[Y_1=y \mid \boldsymbol{Z}=\boldsymbol{z}]
    &=\sum_{k=1}^n \Pr[K = k \mid \boldsymbol{Z}=\boldsymbol{z}]\,\mathbbm{1}[z_k=y].
\end{align*}
\end{theorem}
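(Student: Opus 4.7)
The plan is to derive both formulas by direct application of Bayes' rule, exploiting the symmetry provided by the uniformly random permutation $\sigma$. First I would note that $K = \sigma^{-1}(1)$ is uniform on $[n]$, so $\Pr[K=k] = 1/n$ for every $k$. Next I would compute the likelihood $\Pr[\boldsymbol{Z}=\boldsymbol{z}\mid K=k]$: conditional on $K = k$, the coordinate $Z_k$ equals $Y_1 \sim P$, while the remaining coordinates $(Z_j)_{j \neq k}$ form an arrangement of the i.i.d.\ samples $Y_2,\dots,Y_n \sim Q$ according to the conditional distribution of $\sigma$ restricted to the bijections from $[n]\setminus\{k\}$ onto $\{2,\dots,n\}$. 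Since $Y_2,\dots,Y_n$ are exchangeable, this conditional arrangement is again i.i.d.\ $Q$, yielding
\[
\Pr[\boldsymbol{Z}=\boldsymbol{z}\mid K=k] \;=\; P(z_k)\prod_{j\neq k} Q(z_j).
\]

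Applying Bayes' rule and factoring the common product $\prod_{j} Q(z_j)$ out of both the numerator and the denominator immediately produces the first formula. For the second formula I would invoke the deterministic relation $Y_1 = Z_K$: conditional on $K = k$ and $\boldsymbol{Z}=\boldsymbol{z}$, the value $Y_1$ equals $z_k$ almost surely, so $\Pr[Y_1=y\mid K=k,\boldsymbol{Z}=\boldsymbol{z}]=\mathbbm{1}[z_k=y]$. Averaging this indicator against the posterior of $K$ obtained in the previous step, via the tower property, yields the stated mixture expression.

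The only delicate point concerns the case $P \not\ll Q$, where $Q(z_j)$ may vanish for some coordinates. If some observed $z_j$ lies outside $\mathrm{Supp}(Q)$, the likelihood forces $K$ to equal such a coordinate, and the ratios $P(z_k)/Q(z_k)$ must be read in the extended sense ($0/0 := 0$ and $a/0 := +\infty$ for $a>0$); the quotient in the first formula then remains well-defined as a ratio of unnormalized masses and still recovers the correct posterior. Beyond this bookkeeping I do not foresee any real obstacle, since once the exchangeability of the tail $(Y_i)_{i\geq 2}$ has been used to identify the conditional law of $\boldsymbol{Z}$ given $K$, everything else is routine marginalization.
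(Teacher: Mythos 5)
Your proposal is correct and follows essentially the same route as the paper's proof in Appendix~A: Bayes' rule with the uniform prior $\Pr[K=k]=1/n$, the likelihood $\Pr[\boldsymbol{Z}=\boldsymbol{z}\mid K=k]=P(z_k)\prod_{j\neq k}Q(z_j)$, cancellation of the common factor $\prod_j Q(z_j)$, and the law of total probability together with $Y_1=Z_K$ for the second formula. Your extra remark on reading the ratios in the extended sense when $P\not\ll Q$ is a sensible bookkeeping point that the paper leaves implicit, but it does not change the argument.
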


The derivation of the posterior distribution of \(K\) follows the approach in prior work~\cite{SecurityAnalysisHoneyword,Su26CSF}; for completeness, the detailed proof is provided in Appendix~\ref{appendix1}.  
The expression for the posterior of the message value \(Y_1 = Z_K\) follows directly from the law of total probability.

\para{Interpretation.}
Theorem~\ref{theorem_posterior} demonstrates that, upon observing the shuffled sequence \(\boldsymbol{Z}\), the posterior probability that the first user’s message occupies position \(k\) is proportional to the likelihood ratio \(\frac{P(z_k)}{Q(z_k)}\). In particular, when \(P = Q\), this ratio equals one for all \(k\), resulting in a uniform posterior distribution that ensures perfect anonymity.

\subsection{Basic Shuffle-Only Setting: $P=Q$}\label{sec_basic_PeqQ}

In this section, we analyze the shuffle information leakage when all users’ messages are drawn from the same distribution $P$, i.e., $P=Q$.  
We focus on two quantities: the position information $I(K;\boldsymbol{Z})$ and the message information\footnote{In the shuffle-only model, we have $I(X_1;\boldsymbol{Z}) = I(Y_1;\boldsymbol{Z})$ since 
$Y_1 = X_1$.  
Throughout the theorems, we therefore use $I(Y_1;\boldsymbol{Z})$, as the corresponding 
analysis extends directly to the study of $I(Y_1;\boldsymbol{Z})$ in the shuffle-DP setting 
(see Section~\ref{sec_discussion}).
} $I(Y_1;\boldsymbol{Z})$.

\begin{theorem}\label{theorem_basic_IK1}
In the basic shuffle-only setting, if $P = Q$, then $I(K;\boldsymbol{Z}) = 0$.
\end{theorem}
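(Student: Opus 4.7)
The plan is to derive the statement as an immediate corollary of Theorem~\ref{theorem_posterior}. Setting $P = Q$ makes the likelihood ratio $P(z_k)/Q(z_k)$ equal to $1$ at every observed coordinate (with probability one, each $z_k$ lies in the common support), so the posterior formula collapses to
\[
\Pr[K = k \mid \boldsymbol{Z}=\boldsymbol{z}] \;=\; \frac{1}{\sum_{i=1}^{n} 1} \;=\; \frac{1}{n},
\]
uniformly in $k$ and in the observation $\boldsymbol{z}$.

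Next, I would argue that the marginal law of $K = \sigma^{-1}(1)$ is uniform on $\{1, \dots, n\}$ because $\sigma$ is a uniformly random permutation in $\mathcal{S}_n$. Combining this with the previous display, the conditional distribution of $K$ given $\boldsymbol{Z}$ coincides almost surely with its marginal, hence $K$ and $\boldsymbol{Z}$ are independent. Invoking the identity $I(K;\boldsymbol{Z}) = \mathbb{E}_{\boldsymbol{Z}}\!\left[D_{\mathrm{KL}}(P_{K\mid\boldsymbol{Z}} \Vert P_K)\right]$, every KL divergence in the expectation vanishes, yielding $I(K;\boldsymbol{Z}) = 0$.

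There is essentially no obstacle here; the result is a one-line consequence of Theorem~\ref{theorem_posterior}. The only bookkeeping concerns values of $\boldsymbol{z}$ outside $\mathrm{Supp}(P)^n$, which form a null set under the shuffled law and can be ignored. As an alternative viewpoint that does not even require the posterior formula, when $P = Q$ the variables $Y_1, \ldots, Y_n$ are i.i.d., so the joint distribution of $(Y_1, \ldots, Y_n)$ is exchangeable and invariant under shuffling; by the symmetry among user indices, the position carrying user~1's message is uniform in $\{1,\dots,n\}$ conditional on any realization of $\boldsymbol{Z}$, which is precisely the independence needed for $I(K;\boldsymbol{Z}) = 0$.
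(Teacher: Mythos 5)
Your proposal is correct and follows essentially the same route as the paper: invoke Theorem~\ref{theorem_posterior} with $P=Q$ to see that the posterior of $K$ given any realization of $\boldsymbol{Z}$ is uniform on $[n]$, matching the uniform prior, so every KL term in $I(K;\boldsymbol{Z})=\mathbb{E}_{\boldsymbol{Z}}[D_{\mathrm{KL}}(P_{K\mid\boldsymbol{Z}}\Vert P_K)]$ vanishes. The added exchangeability remark is a fine alternative but does not change the substance of the argument.
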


\begin{proof}
By Theorem~\ref{theorem_posterior}, the posterior distribution of $K$ conditioned on $\boldsymbol{Z} = \boldsymbol{z}$ is the uniform distribution $U_n$ over $[n]=\{1,2,\dots,n\}$, for every realization $\boldsymbol{z}$. Hence the posterior coincides with the prior, which is also $U_n$. Therefore,
\[
  I(K;\boldsymbol{Z}) = \mathbb{E}_{\boldsymbol{Z}}\bigl[ D_{\mathrm{KL}}( \mathsf{Law}(K \mid \boldsymbol{Z}) \,\Vert\, U_n ) \bigr] = 0,
\]
since each KL term is zero.
\end{proof}

\begin{theorem}\label{theorem2}
In the basic shuffle-only setting, when $P=Q$ and the support size of $P$ is $m$, we have
\[
I(Y_1;\boldsymbol{Z}) \;=\; \frac{m-1}{2n}\;+O\!\left(n^{-3/2}\right),\qquad n\to\infty,
\]
where $\boldsymbol{Z}=(Y_{\sigma(i)})_{i=1}^n$ is the shuffled output of $Y_1,\dots,Y_n\stackrel{iid}{\sim}P$.
\end{theorem}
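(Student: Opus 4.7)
The plan is to convert the mutual information into the bias of the plug-in entropy estimator and then evaluate this bias via a second-order Taylor expansion of $f(x)=-x\log x$.

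\para{Reduction.} By Theorem~\ref{theorem_posterior} with $P=Q$, the posterior $\Pr[K=k\mid \boldsymbol{Z}=\boldsymbol{z}]$ is uniform over $[n]$, so
\[
\Pr[Y_1=y\mid \boldsymbol{Z}=\boldsymbol{z}]=\sum_{k=1}^{n}\frac{1}{n}\mathbbm{1}[z_k=y]=\frac{N_y(\boldsymbol{z})}{n}=:\hat p_y,
\]
where $N_y(\boldsymbol{z})$ is the count of symbol $y$ in $\boldsymbol{z}$. Therefore $H(Y_1\mid \boldsymbol{Z})=\mathbb{E}[H(\hat P_n)]$, where $\hat P_n=(\hat p_1,\dots,\hat p_m)$ is the empirical distribution of $n$ i.i.d.\ draws from $P$, and since $H(Y_1)=H(P)$,
\[
I(Y_1;\boldsymbol{Z}) \;=\; H(P)-\mathbb{E}[H(\hat P_n)],
\]
which is precisely the (negative) bias of the MLE entropy estimator.

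\para{Leading order.} Write $\Delta_y:=\hat p_y-p_y$ and expand $f(\hat p_y)=f(p_y)+f'(p_y)\Delta_y+\tfrac12 f''(p_y)\Delta_y^2+R_y$ around $p_y$. Since $N_y\sim\mathrm{Bin}(n,p_y)$, I have $\mathbb{E}[\Delta_y]=0$ and $\mathbb{E}[\Delta_y^2]=p_y(1-p_y)/n$; combined with $f''(p_y)=-1/p_y$ this yields $\mathbb{E}[f(\hat p_y)]=f(p_y)-\tfrac{1-p_y}{2n}+\mathbb{E}[R_y]$. Summing over $y$ in the support of size $m$ gives $-\sum_y\tfrac{1-p_y}{2n}=-\tfrac{m-1}{2n}$, producing the claimed leading term.

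\para{Main obstacle: remainder control.} The technical crux is to show $\sum_y\mathbb{E}[R_y]=O(n^{-3/2})$, which is delicate because $f'''(x)=1/x^{2}$ is singular at $0$ while $\hat p_y$ may take small values. I would split on the event $A_y=\{|\Delta_y|\le p_{\min}/2\}$, where $p_{\min}:=\min\{p_y:p_y>0\}>0$ depends only on $P$. On $A_y$, $\hat p_y\ge p_{\min}/2$, so Lagrange's form of the remainder yields $|R_y|\le C(p_{\min})\,|\Delta_y|^3$; Cauchy--Schwarz together with the standard binomial fourth-moment bound $\mathbb{E}[\Delta_y^4]=O(n^{-2})$ then gives $\mathbb{E}[|\Delta_y|^3\mathbbm{1}_{A_y}]=O(n^{-3/2})$. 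On the complement $A_y^c$, Hoeffding's inequality gives $\Pr[A_y^c]\le 2\exp(-np_{\min}^2/2)$, and since $f$ is bounded on $[0,1]$ and $f'(p_y),f''(p_y)$ are bounded by constants depending only on $p_{\min}$, the contribution of $A_y^c$ to $\mathbb{E}[R_y]$ is exponentially small in $n$. Adding over the $m$ symbols yields the claimed $\tfrac{m-1}{2n}+O(n^{-3/2})$ expansion.
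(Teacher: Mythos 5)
Your proof is correct, but it reaches the key identity by a more direct route than the paper. Where the paper expands $I(Y_1;\boldsymbol{Z})=H(Y_1)-I(K;Y_1,\boldsymbol{Z})+I(K;\boldsymbol{Z})$, computes $I(K;Y_1,\boldsymbol{Z})$ from the conditional uniformity of $K$ over the positions carrying $Y_1$ (a $1+\mathrm{Bin}(n-1,p_i)$ count), and then uses the identity $\frac{1}{m}\binom{n-1}{m-1}=\frac{1}{n}\binom{n}{m}$ to arrive at $I(Y_1;\boldsymbol{Z})=H(P)+\sum_i\mathbb{E}_{\mathrm{Bin}(n,p_i)}\bigl[\tfrac{X}{n}\log\tfrac{X}{n}\bigr]$, you read the posterior of $Y_1$ off Theorem~\ref{theorem_posterior} directly: with $P=Q$ it is the empirical distribution of $\boldsymbol{Z}$, so $I(Y_1;\boldsymbol{Z})=H(P)-\mathbb{E}[H(\hat P_n)]$ is exactly the bias of the plug-in entropy estimator. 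This is the same intermediate quantity as the paper's Eq.~\eqref{eq:mainI}, obtained with less bookkeeping, and it has the pleasant side effect of connecting the result to the classical Miller--Madow bias. The asymptotic step is then essentially identical in both arguments: second-order Taylor expansion of $t\log t$ at $p_y$, truncation on the event $\{\hat p_y\ge p_{\min}/2\}$ to tame the singular third derivative, an exponentially small contribution from the complement, and a third-moment bound of order $n^{-3/2}$ --- you get the moment bound via Cauchy--Schwarz with the binomial fourth moment, the paper via the Rosenthal inequality; both are valid. The one thing the paper's longer $K$-decomposition buys is reusability: the same chain-rule identity and the $\mathrm{Bin}(n-1,\cdot)\to\mathrm{Bin}(n,\cdot)$ manipulation are recycled verbatim in the $P\ll Q$ case (Theorem~\ref{theorem_IY_general1}), where $I(K;\boldsymbol{Z})\neq 0$ and your shortcut no longer identifies a standard estimator, so the posterior-of-$Y_1$ route would have to be redone from scratch there.
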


\begin{proof}
We start by expanding the mutual information as
\begin{align}
I(Y_1;\boldsymbol{Z})
&=H(Y_1)-H(Y_1\mid \boldsymbol{Z}) \notag\\
&=H(Y_1)-\big[H(Y_1\mid \boldsymbol{Z},K)+I(Y_1;K\mid \boldsymbol{Z})\big]. \label{eq:info_expand}
\end{align}
Since conditioning on the shuffle index $K$ makes the message $Y_1$ directly observable in $\boldsymbol{Z}$, we have $H(Y_1\mid \boldsymbol{Z},K)=0$.  
Then, using the chain rule for mutual information,
\[
I(Y_1;K\mid \boldsymbol{Z}) = I(K;Y_1,\boldsymbol{Z}) - I(K;\boldsymbol{Z}).
\]
Substituting this into~\eqref{eq:info_expand} yields
\begin{align}
I(Y_1;\boldsymbol{Z})
&=H(Y_1) - I(K;Y_1,\boldsymbol{Z}) + I(K;\boldsymbol{Z}). \label{eq:info_decomp}
\end{align}
By the previous theorem, $I(K;\boldsymbol{Z})=0$ when $P=Q$.  
Thus, the information leakage reduces to computing $I(K;Y_1,\boldsymbol{Z})$.

\para{Computing $I(K;Y_1,\boldsymbol{Z})$.}
Recall that \(K\) denotes the position in \(\boldsymbol{Z}\) where the true message \(Y_1\) is placed. Conditioned on \(Y_1\) and \(\boldsymbol{Z}\), the posterior distribution of \(K\) is uniform over all indices whose entry in \(\boldsymbol{Z}\) equals \(Y_1\).

Fix \(Y_1 = y_i\). In the shuffled sequence \(\boldsymbol{Z}\), the symbol \(y_i\) appears once from user 1 and, independently, each of the remaining \(n-1\) users contributes \(y_i\) with probability \(p_i = P(y_i)\). Thus the total number of occurrences of \(y_i\) in \(\boldsymbol{Z}\) is distributed as
\[
1 + X, \qquad X \sim \mathrm{Bin}(n-1, p_i).
\]
Given this, the entropy of \(K\) conditioned on \((\boldsymbol{Z}, Y_1 = y_i)\) is the logarithm of that count. Hence
\begin{align}
I(K;Y_1,\boldsymbol{Z})
&=H(K)-H(K\mid \boldsymbol{Z},Y_1) \notag\\
&=\log n - \sum_{i=1}^m p_i\,\mathbb{E}_{X\sim \mathrm{Bin}(n-1,p_i)}\!\big[\log(X+1)\big]. \label{eq:KYZ}
\end{align}
Using the combinatorial identity \(\frac{1}{m} \binom{n-1}{m-1} = \frac{1}{n} \binom{n}{m}\), we can rewrite the term as
\begin{align*}
p_i \, \mathbb{E}_{X \sim \mathrm{Bin}(n-1,p_i)} \big[ \log (X+1) \big]
&= \sum_{j=0}^{n-1} \binom{n-1}{j} p_i^{j+1} (1-p_i)^{n-j-1} \log (j+1) \\
&= \sum_{j=0}^{n-1} \frac{j+1}{n} \binom{n}{j+1} p_i^{j+1} (1-p_i)^{n-j-1} \log (j+1) \\
&= \sum_{j=1}^{n} \frac{j}{n} \binom{n}{j} p_i^{j} (1-p_i)^{n-j} \log j \\
&= \mathbb{E}_{X \sim \mathrm{Bin}(n,p_i)} \Big[ \frac{X}{n} \log X \Big].
\end{align*}
Hence,
\begin{align}
I(K; Y_1, \boldsymbol{Z})
&= \log n - \sum_{i=1}^m \mathbb{E}_{X \sim \mathrm{Bin}(n,p_i)} \Big[ \frac{X}{n} \log X \Big] \notag \\
&= - \sum_{i=1}^m \mathbb{E}_{X \sim \mathrm{Bin}(n,p_i)} \Big[ \frac{X}{n} \log \Big( \frac{X}{n} \Big) \Big]. \label{eq:IKYZ}
\end{align}

\para{Combining terms.}
Substituting~\eqref{eq:IKYZ} into~\eqref{eq:info_decomp} gives
\begin{align}
I(Y_1;\boldsymbol{Z})
&= H(Y_1) + \sum_{i=1}^m \mathbb{E}_{X\sim \mathrm{Bin}(n,p_i)}\!\Big[\frac{X}{n}\log\!\left(\frac{X}{n}\right)\Big] \notag\\
&= \sum_{i=1}^m \mathbb{E}_{X\sim \mathrm{Bin}(n,p_i)}\!\Big[\frac{X}{n}\log\!\left(\frac{X}{n}\right)\Big] - p_i\log p_i. \label{eq:mainI}
\end{align}

\para{Asymptotic expansion.}
Define $f(t)=t\log t$.  
A second-order Taylor expansion of $f(t)$ around $t=p_i$ gives
\begin{equation}\label{eq:taylor}
f(t) = p_i\log p_i + (\log p_i + 1)(t-p_i) + \frac{1}{2p_i}(t-p_i)^2 + O\big((t-p_i)^3\big).
\end{equation}
For $X\sim \mathrm{Bin}(n,p_i)$, the known moments are
\[
\mathbb{E}\!\left[\frac{X}{n}-p_i\right]=0,\quad 
\mathbb{E}\!\left[\Big(\frac{X}{n}-p_i\Big)^2\right]=\mathrm{Var}\!\left(\frac{X}{n}\right)=\frac{p_i(1-p_i)}{n}.
\]
Due to the Rosenthal inequality (Theorem \ref{theorem_Rosenthal_ineq} in Appendix \ref{appendix_2}) for random variables,
$$\mathbb{E}\!\left[\Big|\frac{X}{n}-q_i\Big|^3\right]=O\!\left(n^{-3/2}\right).$$

Taking expectations in~\eqref{eq:taylor} thus yields
\begin{align}
\mathbb{E}_{X\sim \mathrm{Bin}(n,p_i)}\!\Big[\frac{X}{n}\log\!\left(\frac{X}{n}\right)\Big]
&=p_i\log p_i + \frac{1-p_i}{2n} + O\!\left(n^{-3/2}\right). \label{eq:binE}
\end{align}
Some subtle details are provided in Appendix \ref{appendix_2}.

\para{Final result.}
Substituting~\eqref{eq:binE} into~\eqref{eq:mainI} gives
\begin{align}
I(Y_1;\boldsymbol{Z})
&=\sum_{i=1}^m \frac{1-p_i}{2n} + O\!\left(n^{-3/2}\right) \notag\\
&=\frac{m-1}{2n} + O\!\left(n^{-3/2}\right), \label{eq:final}
\end{align}
since $\sum_i (1-p_i)=m-1$.  
This establishes the desired asymptotic expression.
\end{proof}

\begin{example}
We perform numerical experiments under two representative distributions for \(P\):
\begin{itemize}
    \item the uniform distribution on \([m]\);
    \item the Zipf distribution \(\mathrm{Zipf}(m,\alpha)\), where \(p_i \propto i^{-\alpha}\) for \(i=1,2,\dots,m\).
\end{itemize}
The results (Fig. \ref{fig:fig1}) show that the exact value \(I(Y_1;\boldsymbol{Z})\), computed from \eqref{eq:mainI}, matches the asymptotic expression \(\frac{m-1}{2n}\) very closely in both cases.

\begin{figure}[t]
  \centering
  \subfloat[]{%
    \includegraphics[width=0.45\linewidth]{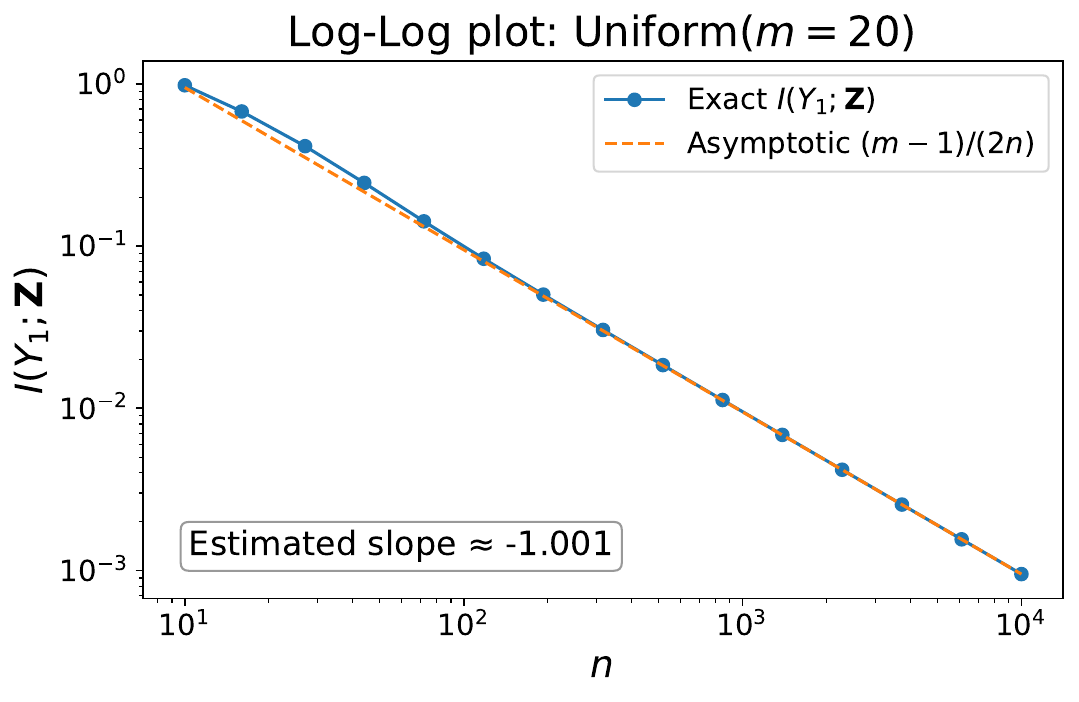}%
    \label{fig1:sub1}
  }
  \hfill
  \subfloat[]{%
    \includegraphics[width=0.45\linewidth]{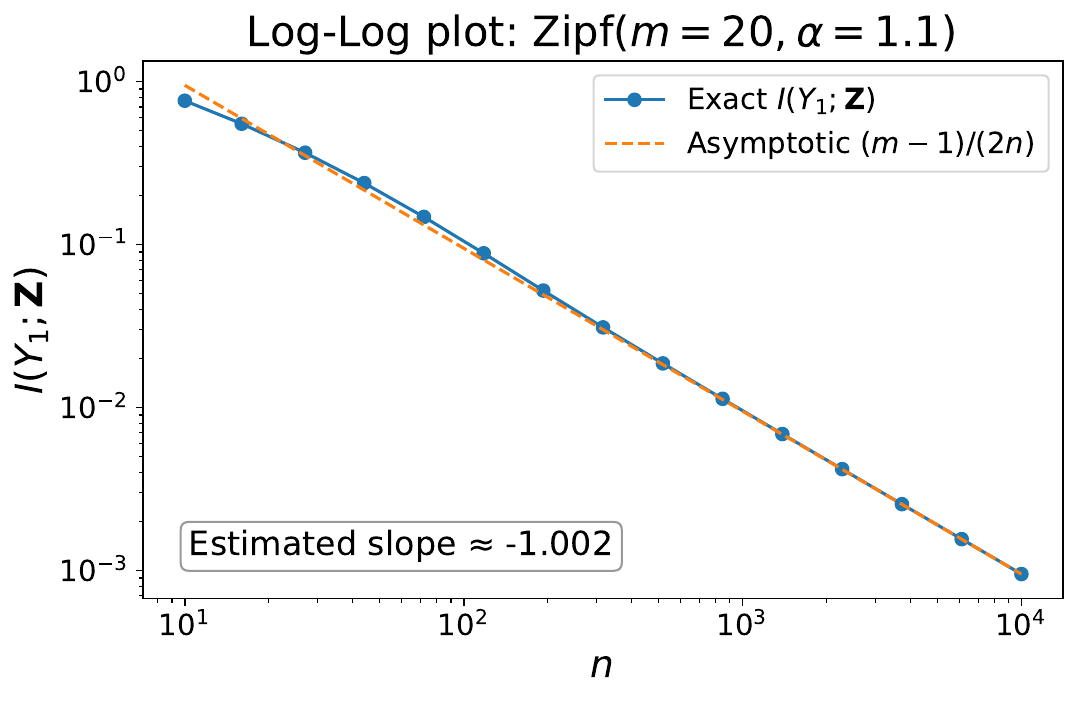}%
    \label{fig1:sub2}
  }
  \caption{Exact vs.\ asymptotic mutual information in the basic shufle-only setting with \(P = Q\).}
  \label{fig:fig1}
\end{figure}
\end{example}

\begin{remark}
When the \(X_i\) are not mutually independent, the leakage \(I(X_1;\boldsymbol{Z})\) can be large even if all marginals are identical. 
For example, under full coherence \(X_1=X_2=\cdots=X_n\) (each with marginal distribution \(P\)), the shuffle-only model outputs \(\boldsymbol{Z}\) as \(n\) copies of the same symbol, thereby revealing \(X_1\) exactly. 
Consequently,
\[
I(X_1;\boldsymbol{Z}) \;=\; H(X_1).
\]    
\end{remark}

\subsection{Basic Shuffle-Only Setting: $P \ll Q$}\label{sec_basic_ll}
Having treated the special case $P = Q$ as a warm-up, we now analyze the basic shuffle-only setting under a general pair of distributions $(P,Q)$. A key distinction here is whether $P$ is absolutely continuous with respect to $Q$ (written $P \ll Q$, i.e., $\mathrm{Supp}(P) \subseteq \mathrm{Supp}(Q)$) or not.

Intuitively, if there exists some symbol $y^{*}$ with $P(y^{*}) > 0$ but $Q(y^{*}) = 0$, then whenever $Y_1 = y^{*}$ the shuffled sequence $\boldsymbol{Z}$ reveals this value completely, leading to nonvanishing leakage. Conversely, when $P \ll Q$, one may expect that the mutual information $I(Y_1; \boldsymbol{Z})$ vanishes as $n \to \infty$.

In what follows, we first handle the “well-aligned’’ case $P \ll Q$, and then extend the analysis to the case $P \not\ll Q$.

\begin{theorem}\label{theorem_IK_general1}
In the basic shuffle-only setting, if \(P \ll Q\), then the mutual information between the shuffle position and the shuffled output satisfies
\[
I(K;\boldsymbol{Z})
\;=\; D_{\mathrm{KL}}(P\Vert Q)\;-\;\frac{\chi^2(P\Vert Q)}{2n}\;+\;O\!\left(n^{-3/2}\right),
\qquad n\to\infty,
\]
where \(\chi^2(P\Vert Q)=\sum_{y} \frac{(P(y)-Q(y))^{2}}{Q(y)}\).
\end{theorem}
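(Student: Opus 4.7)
My plan is to reduce the mutual information to a single expectation of a log-likelihood ratio via the closed-form posterior in Theorem~\ref{theorem_posterior}. Define the per-position likelihood ratio $L_i := P(Z_i)/Q(Z_i)$ (well defined under $P \ll Q$) and its sum $S := \sum_{i=1}^n L_i$. Theorem~\ref{theorem_posterior} gives $\Pr[K=k\mid\boldsymbol{Z}] = L_k/S$, so
\[
I(K;\boldsymbol{Z}) \;=\; \mathbb{E}\!\left[\log\frac{\Pr[K\mid\boldsymbol{Z}]}{1/n}\right] \;=\; \log n \;+\; \mathbb{E}[\log L_K] \;-\; \mathbb{E}[\log S].
\]
Under the joint law, $Z_K$ is precisely the target user's message, so $Z_K\sim P$, giving $\mathbb{E}[\log L_K] = D_{\mathrm{KL}}(P\Vert Q)$. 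The whole problem therefore reduces to showing
\[
\log n \;-\; \mathbb{E}[\log S] \;=\; -\,\frac{\chi^2(P\Vert Q)}{2n} \;+\; O(n^{-3/2}).
\]

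Since $S$ is symmetric in its arguments, I would compute $\mathbb{E}[\log S]$ by conditioning on $K=1$. Then $S = L_1 + T$ with $L_1 = P(Z_1)/Q(Z_1)$, $Z_1\sim P$, and $T = \sum_{i=2}^n L_i$ a sum of $n-1$ i.i.d.\ copies of $P(Z)/Q(Z)$ with $Z\sim Q$. Direct computation gives $\mathbb{E}_{Z\sim P}[L] = 1 + \chi^2(P\Vert Q)$, $\mathbb{E}_{Z\sim Q}[L] = 1$, and $\mathrm{Var}_{Z\sim Q}(L) = \chi^2(P\Vert Q)$. Writing $S = n(1 + W/n)$ with $W := (L_1 - 1) + \sum_{i=2}^n (L_i - 1)$, the second-order Taylor expansion $\log(1+t) = t - t^2/2 + O(|t|^3)$ combined with the moments
\[
\mathbb{E}[W] \;=\; \chi^2(P\Vert Q), \qquad \mathbb{E}[W^2] \;=\; (n-1)\,\chi^2(P\Vert Q) \;+\; O(1)
\]
(the $(n-1)\chi^2$ term coming from $\mathrm{Var}(T)$ by independence, and the $O(1)$ collecting $\mathrm{Var}_{Z\sim P}(L)$ together with the squared mean) yields
\[
\mathbb{E}[\log(1+W/n)] \;=\; \frac{\chi^2(P\Vert Q)}{n} \;-\; \frac{\chi^2(P\Vert Q)}{2n} \;+\; O(n^{-3/2}) \;=\; \frac{\chi^2(P\Vert Q)}{2n} \;+\; O(n^{-3/2}).
\]
Substituting back delivers the claimed expansion.

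The main technical obstacle is controlling the third-order Taylor remainder uniformly at the $O(n^{-3/2})$ level. The pointwise bound $|\log(1+t) - t + t^2/2| \le C|t|^3$ requires $t$ bounded away from $-1$, whereas $W/n$ can in principle approach $-1$. I would handle this by the same truncation-plus-concentration strategy invoked for Theorem~\ref{theorem2}: on the bulk event $\{|W| \le n/2\}$ the cubic bound applies, and Rosenthal's inequality (Theorem~\ref{theorem_Rosenthal_ineq}) furnishes $\mathbb{E}[|W|^3\,\mathbbm{1}\{|W|\le n/2\}] = O(n^{3/2})$, contributing $O(n^{-3/2})$ after division by $n^3$; on the complementary tail, a Bernstein-type concentration bound (available because $L - 1$ is bounded under the finite-support assumption on $P$ and $Q$) makes the probability exponentially small, so even a crude deterministic bound on $|\log S|$ gives a negligible contribution. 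Assembling both pieces completes the $O(n^{-3/2})$ remainder control and hence the proof.
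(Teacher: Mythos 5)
Your proposal is correct, and it differs from the paper's argument in a way worth noting. Both proofs share the same second half: writing $S=n+W$ (the paper's $S=n+U$), extracting $\mathbb{E}[\log S]=\log n+\chi^2/(2n)+O(n^{-3/2})$ from the moments $\mathbb{E}[W]=\chi^2$, $\mathrm{Var}(W)=(n-1)\chi^2+O(1)$, and controlling the cubic remainder by truncation plus the Rosenthal inequality, exactly as in the paper and its appendix. The difference is in the first half. The paper expands $I(K;\boldsymbol{Z})=\log n+\mathbb{E}\bigl[\tfrac{1}{S}\sum_k w(Z_k)\log w(Z_k)\bigr]-\mathbb{E}[\log S]$ and spends a second-order delta method (Lemma~\ref{lem:ratio-delta}, with the mixed moments $B$, $\mathrm{Cov}(T,S)$, $\mathrm{Var}(S)$) to show the ratio term equals $D_{\mathrm{KL}}(P\Vert Q)+O(n^{-3/2})$, the $1/n$ corrections cancelling. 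You instead take the expectation over the \emph{joint} law of $(K,\boldsymbol{Z})$, so the middle term becomes $\mathbb{E}[\log w(Z_K)]$, and conditioning on $K$ gives $Z_K\sim P$ and hence the value $D_{\mathrm{KL}}(P\Vert Q)$ \emph{exactly}; by the tower property this shows the paper's ratio term is not merely $D+O(n^{-3/2})$ but identically $D$, and Lemma~\ref{lem:ratio-delta} is not needed at all for this theorem. That is a genuine simplification with no loss of generality.

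One small point to tighten when writing this up: in the truncation step, besides bounding the cubic remainder on the bulk and $|\log S|$ on the tail, you must also restore the linear and quadratic Taylor terms over the tail event (i.e., bound $\mathbb{E}\bigl[(|W|/n+W^2/n^2)\,\mathbbm{1}\{|W|>n/2\}\bigr]$); since $|W|$ is deterministically $O(n)$ on a finite alphabet with $P\ll Q$ and the tail probability is exponentially small, this contribution is negligible, so the argument goes through as sketched.
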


\begin{proof}
Recall that \(K\) is uniform on \([n]\) and, conditional on \(\boldsymbol{Z}=\boldsymbol{z}=(z_1,z_2,\dots,z_n)\),
\begin{equation}\label{eq:postK}
\Pr[K=k\mid \boldsymbol{Z}=\boldsymbol{z}]
=\frac{w(z_k)}{\sum_{i=1}^n w(z_i)},\qquad
w(y):=\frac{P(y)}{Q(y)}.
\end{equation}
Hence
\begin{align}
I(K;\boldsymbol{Z})
&= \mathbb{E}_{\boldsymbol{Z}}\!\bigl[ D_{\mathrm{KL}}( \mathsf{Law}(K \mid \boldsymbol{Z}) \,\Vert\, U_n ) \bigr]\notag \\
&= \mathbb{E}_{\boldsymbol{Z}}\!\left[\sum_{k=1}^n \Pr(K=k\mid \boldsymbol{Z}) \log\bigl(n\,\Pr(K=k\mid \boldsymbol{Z})\bigr)\right] \notag\\
&= \log n \;+\; \mathbb{E}_{\boldsymbol{Z}}\!\left[\sum_{k=1}^n \frac{w(Z_k)}{S}\log \frac{w(Z_k)}{S}\right] \notag\\
&= \log n \;+\; \mathbb{E}_{\boldsymbol{Z}}\!\left[\frac{1}{S}\sum_{k=1}^n w(Z_k)\log w(Z_k)\right]
\;-\;\mathbb{E}_{\boldsymbol{Z}}[\log S], \label{eq:I-decomp}
\end{align}
where
\[
S:=\sum_{k=1}^n w(Z_k),\qquad \boldsymbol{Z}=(Z_1,\dots,Z_n).
\]
By construction, among the \(n\) summands in \(S\) there is exactly one draw
\(W^{(P)}:=w(Y_1)\) with \(Y_1\sim P\); the remaining \(n-1\) are i.i.d.\ \(W_i^{(Q)}:=w(Y_i)\) with \(Y_i\sim Q\).
We analyze the two expectations in~\eqref{eq:I-decomp} separately.

\para{The ``ratio'' term \(\mathbb{E}\bigl[\frac{1}{S}\sum w\log w\bigr]\).}
Let
\[
T:=\sum_{k=1}^n w(Z_k)\log w(Z_k)
= W^{(P)}\log W^{(P)} + \sum_{i=2}^n W_i^{(Q)}\log W_i^{(Q)}.
\]
An important property is that $\mathbb{E}_Q W=\sum_y Q(y)\cdot\frac{P(y)}{Q(y)}=1$.

Define
\begin{align*}
 D&:=\mathbb{E}_Q[W\log W]=\sum_{y} Q(y)\,\frac{P(y)}{Q(y)}\log \frac{P(y)}{Q(y)}=D_{\mathrm{KL}}(P\Vert Q),\\
 \chi^2&:=\mathrm{Var}_Q(W)=\sum_{y}Q(y)\,\Bigl(\frac{P(y)}{Q(y)}-1\Bigr)^{2}=\chi^2(P\Vert Q),\\
 B&:=\mathbb{E}_Q[W^{2}\log W].
\end{align*}
Then
\begin{equation}\label{eq:mixed-moments}
\mathbb{E}_P[W]=\mathbb{E}_Q[W^{2}]=\sum_y \frac{P(y)^2}{Q(y)}=1+\chi^2,
\qquad
\mathbb{E}_P[W\log W]=\mathbb{E}_Q[W^{2}\log W]=B.
\end{equation}
With \(\mu_S:=\mathbb{E}[S]=n+\chi^2\) and \(\mu_T:=\mathbb{E}[T]=nD+(B-D)\), a second-order delta-method (Lemma~\ref{lem:ratio-delta} in Appendix \ref{appendix_11}) gives
\begin{equation}\label{eq:ratio-expansion}
\mathbb{E}\!\left[\frac{T}{S}\right]
= \frac{\mu_T}{\mu_S}
\;-\;\frac{\mathrm{Cov}(T,S)}{\mu_S^{2}}
\;+\;\frac{\mu_T\,\mathrm{Var}(S)}{\mu_S^{3}}
\;+\;O(n^{-3/2}).
\end{equation}
Using
\begin{align*}
\mathrm{Var}(S)&=(n-1)\mathrm{Var}_Q(W) + \mathrm{Var}_P(W)= n\chi^2 + O(1),\\
\mathrm{Cov}(T,S)&=(n-1)\,\mathrm{Cov}_Q\!\bigl(W,\,W\log W\bigr)+O(1)\\
&=(n-1)\left( \mathbb{E}_Q[W^2\log W]-\mathbb{E}_Q[W]\cdot \mathbb{E}_Q[W\log W] \right)+O(1)\\
&=(n-1)\,(B-D)+O(1),
\end{align*}
substituting into~\eqref{eq:ratio-expansion} yields
\begin{align}
\mathbb{E}\!\left[\frac{T}{S}\right]
&= \frac{nD+(B-D)}{n+\chi^2}
-\;\frac{\mathrm{Cov}(T,S)}{\mu_S^{2}}
\;+\;\frac{\mu_T\,\mathrm{Var}(S)}{\mu_S^{3}}
\;+\;O(n^{-3/2})\notag \\
&= D \;+\; \frac{1}{n}\Bigl((B-D)-D\chi^2\Bigr)
\;-\;\frac{1}{n}(B-D) \;+\; \frac{1}{n}D\chi^2
\;+\;O(n^{-3/2}) \notag\\
&= D \;+\; O(n^{-3/2}). \label{eq:ratio-final}
\end{align}
Thus the ratio term contributes \(D_{\mathrm{KL}}(P\Vert Q)\) up to \(O(n^{-3/2})\).

\para{The term \(\mathbb{E}[\log S]\).}
Write \(S=n+U\) with \(\mathbb{E}[U]=\chi^2\) and
\(\mathrm{Var}(U)=\mathrm{Var}(S)=n\chi^2+O(1)\).
Expanding \[\log(n+U)=\log n + \frac{U}{n} - \frac{U^{2}}{2n^{2}} + O(\frac{|U|^{3}}{n^{3}}),\] and using
\begin{align*}
\mathbb{E}[U^{2}]&=\mathrm{Var}[U]+(\mathbb{E}[U])^2=n\chi^2+O(1),\\
\mathbb{E}[|U|^3]&=O\Bigl((\mathrm{Var}[U])^{3/2}+(n-1)\mathbb{E}_Q|W-1|^3+\mathbb{E}_P|W-1|^3+|\mathbb{E}[U]|^3\Bigr)=O\!\left(n^{3/2}\right),
\end{align*}
 (by Rosenthal inequality, Theorem \ref{theorem_Rosenthal_ineq}), we obtain
\begin{equation}\label{eq:logS}
\mathbb{E}[\log S]
= \log n + \frac{\chi^2}{n} - \frac{n\chi^2+O(1)}{2n^{2}} + O(n^{-3/2})
= \log n + \frac{\chi^2}{2n} + O(n^{-3/2}).
\end{equation}

\para{Conclusion.}
Combining \eqref{eq:I-decomp}, \eqref{eq:ratio-final}, and \eqref{eq:logS},
\[
I(K;\boldsymbol{Z})
= \log n + D_{\mathrm{KL}}(P\Vert Q)
- \left(\log n + \frac{\chi^2(P\Vert Q)}{2n}\right)
+ O(n^{-3/2}),
\]
which simplifies to the claimed expansion.
\end{proof}

\begin{theorem}\label{theorem_IY_general1}
In the basic shuffle-only setting, when $P\ll Q$, the mutual information between the targeted message and the shuffled output satisfies
\[
I(Y_1;\boldsymbol{Z})
\;=\; \frac{\sum_{i}\frac{p_i-p_i^2}{q_i}}{2n}\;+\;O(n^{-3/2}),
\qquad n\to\infty.
\]
\end{theorem}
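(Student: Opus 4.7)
\para{Proof plan.}
The plan is to reuse the chain-rule decomposition from Theorem~\ref{theorem2}, namely $I(Y_1;\boldsymbol{Z}) = H(Y_1) - I(K;Y_1,\boldsymbol{Z}) + I(K;\boldsymbol{Z})$, which is valid in full generality because $(K,\boldsymbol{Z})$ always determines $Y_1 = Z_K$. Since $H(Y_1) = -\sum_i p_i\log p_i$ is known exactly and $I(K;\boldsymbol{Z})$ has already been expanded to $O(n^{-3/2})$ in Theorem~\ref{theorem_IK_general1}, the entire task reduces to obtaining a matching-order expansion of $I(K;Y_1,\boldsymbol{Z})$.

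For $I(K;Y_1,\boldsymbol{Z})$, I would mimic the argument in Theorem~\ref{theorem2}. By the symmetry of the uniform shuffle, conditional on $Y_1 = y_i$ and $\boldsymbol{Z}$, the position $K$ is uniformly distributed over the set of indices where $\boldsymbol{Z}$ equals $y_i$, whose cardinality is $1 + N_i$ with $N_i \sim \mathrm{Bin}(n-1,q_i)$ (the hypothesis $P\ll Q$ guarantees $q_i>0$ on $\mathrm{Supp}(P)$, so no division-by-zero arises). This yields a closed form for $I(K;Y_1,\boldsymbol{Z})$ in terms of $\mathbb{E}[\log(1+N_i)]$. Applying the same binomial reindexing identity $\binom{n-1}{j} = \tfrac{j+1}{n}\binom{n}{j+1}$ used in Theorem~\ref{theorem2} rewrites each inner expectation as $(1/q_i)\,\mathbb{E}_{M\sim\mathrm{Bin}(n,q_i)}[(M/n)\log M]$, after which a second-order Taylor expansion of $t\mapsto t\log t$ about $t=q_i$, combined with $\mathrm{Var}(M/n)=q_i(1-q_i)/n$ and Rosenthal's inequality (Theorem~\ref{theorem_Rosenthal_ineq}) for the cubic remainder, delivers
\[
I(K;Y_1,\boldsymbol{Z}) = -\sum_i p_i\log q_i - \frac{1}{2n}\sum_i \frac{p_i(1-q_i)}{q_i} + O(n^{-3/2}).
\]

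Substituting this and Theorem~\ref{theorem_IK_general1} into the chain-rule identity, the leading logarithmic terms telescope ($-\sum_i p_i\log p_i + \sum_i p_i\log q_i + D_{\mathrm{KL}}(P\Vert Q) = 0$) and the two $O(1/n)$ corrections combine via the algebraic identity
\[
\frac{p_i(1-q_i)}{q_i} - \frac{(p_i-q_i)^2}{q_i} = \frac{p_i-p_i^2}{q_i} + (p_i-q_i),
\]
which upon summation (using $\sum_i p_i = \sum_i q_i = 1$) leaves exactly $\sum_i (p_i-p_i^2)/q_i$ as the first-order coefficient, as claimed. The main obstacle I expect is the bookkeeping for this cancellation---two separate $O(1/n)$ contributions from $I(K;Y_1,\boldsymbol{Z})$ and $I(K;\boldsymbol{Z})$ must line up precisely---together with verifying uniformity of the Rosenthal cubic estimate across the finite support of $P$ (the constants implicit in $O(n^{-3/2})$ depend on $\min\{q_i : p_i>0\}$, which is positive under $P\ll Q$).
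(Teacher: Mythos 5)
Your proposal is correct and follows essentially the same route as the paper: the same chain-rule decomposition $I(Y_1;\boldsymbol{Z}) = H(Y_1) - I(K;Y_1,\boldsymbol{Z}) + I(K;\boldsymbol{Z})$, the same conditional-uniformity/binomial-count argument with parameter $q_i$, the same reindexing identity, and the same Taylor-plus-Rosenthal expansion, with Theorem~\ref{theorem_IK_general1} supplying $I(K;\boldsymbol{Z})$. The only difference is bookkeeping—you expand $I(K;Y_1,\boldsymbol{Z})$ on its own and cancel via $\frac{p_i(1-q_i)}{q_i}-\frac{(p_i-q_i)^2}{q_i}=\frac{p_i-p_i^2}{q_i}+(p_i-q_i)$, whereas the paper groups $H(Y_1)-I(K;Y_1,\boldsymbol{Z})$ first; the two are algebraically equivalent.
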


\begin{proof}
We analyze the mutual information using the same decomposition as in Eq.~(\ref{eq:info_decomp}):
\[
I(Y_1;\boldsymbol{Z}) = H(Y_1) - I(K;Y_1,\boldsymbol{Z}) + I(K;\boldsymbol{Z}).
\]
Since $I(K;\boldsymbol{Z})$ has already been derived in Theorem \ref{theorem_IK_general1} as
\[
I(K;\boldsymbol{Z}) = D_{\mathrm{KL}}(P\Vert Q) - \frac{\chi^2(P\Vert Q)}{2n} + O(n^{-3/2}),
\]
it remains to compute the term $H(Y_1) - I(K;Y_1,\boldsymbol{Z})$.

\para{Expanding $H(Y_1) - I(K;Y_1,\boldsymbol{Z})$.}
From the derivation of Theorem \ref{theorem2}, we have
\begin{align}
H(Y_1) - I(K;Y_1,\boldsymbol{Z})
&=\sum_{i=1}^m p_i\,\mathbb{E}_{X\sim \mathrm{Bin}(n-1,q_i)}\!\Big[\frac{X+1}{n}\log\!\left(\frac{X+1}{n}\right)\Big]
 - p_i\log p_i. \label{eq:HY1-exp}
\end{align}
Using similar techniques (the combinatorial identity \(\frac{1}{m} \binom{n-1}{m-1} = \frac{1}{n} \binom{n}{m}\)), we have
\begin{align}
H(Y_1) - I(K;Y_1,\boldsymbol{Z})
&=\sum_{i=1}^m \frac{p_i}{q_i}\,
\mathbb{E}_{X\sim \mathrm{Bin}(n,q_i)}\!\Big[\frac{X}{n}\log\!\left(\frac{X}{n}\right)\Big]
 - p_i\log p_i. \label{eq:HY1I}
\end{align}

Let $f(t)=t\log t$.  Expanding around $t=q_i$, we have
\[
f(t) = q_i\log q_i + (\log q_i + 1)(t-q_i) + \frac{1}{2q_i}(t-q_i)^2 + O((t-q_i)^3).
\]
For $X\sim \mathrm{Bin}(n,q_i)$, the moments are
\[
\mathbb{E}\!\left[\frac{X}{n}\right]=q_i,\qquad 
\mathrm{Var}\!\left(\frac{X}{n}\right)=\frac{q_i(1-q_i)}{n}.
\]
Due to the Marcinkiewicz–Zygmund inequality,
$\mathbb{E}\!\left[\Big|\frac{X}{n}-q_i\Big|^3\right]=O\!\left(n^{-3/2} \right)$.

Taking expectations in the expansion gives
\begin{align}
\mathbb{E}_{X\sim \mathrm{Bin}(n,q_i)}\!\Big[\frac{X}{n}\log\!\left(\frac{X}{n}\right)\Big]
&= q_i\log q_i + \frac{1-q_i}{2n} + O(n^{-3/2}). \label{eq:binomialE}
\end{align}

Plugging~\eqref{eq:binomialE} into~\eqref{eq:HY1I} yields
\begin{align}
H(Y_1) - I(K;Y_1,\boldsymbol{Z})
&= \sum_i p_i\log\!\left(\frac{q_i}{p_i}\right)
   + \frac{p_i(1-q_i)}{2q_i n}
   + O(n^{-3/2}). \label{eq:HminusI}
\end{align}
The first summation equals $-D_{\mathrm{KL}}(P\Vert Q)$, and since
\[
\sum_i \frac{p_i(1-q_i)}{q_i}
= -1 + \sum_i \frac{p_i}{q_i},
\]
we can rewrite~\eqref{eq:HminusI} as
\begin{align}
H(Y_1) - I(K;Y_1,\boldsymbol{Z})
= -D_{\mathrm{KL}}(P\Vert Q)
+ \frac{-1+\sum_i \frac{p_i}{q_i}}{2n}
+ O(n^{-3/2}). \label{eq:HminusI2}
\end{align}

Finally, substituting~\eqref{eq:HminusI2} and $I(K;\boldsymbol{Z})$ into the decomposition gives
\begin{align}
I(Y_1;\boldsymbol{Z})
&= H(Y_1) - I(K;Y_1,\boldsymbol{Z}) + I(K;\boldsymbol{Z}) \notag\\
&= \frac{1}{2n}\sum_i \frac{p_i - p_i^2}{q_i}
   + O(n^{-3/2}). \label{eq:IY1Z-final}
\end{align}

This completes the proof.
\end{proof}

\begin{theorem}\label{theorem_optimal_Q}
In the basic shuffle-only setting, for a fixed \(P\), the choice of \(Q\) that asymptotically minimizes the leakage about \(Y_1\) satisfies
\(
q_i \;\propto\; \sqrt{p_i(1-p_i)}.
\)
\end{theorem}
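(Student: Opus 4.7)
\textbf{Proof Proposal for Theorem \ref{theorem_optimal_Q}.}
The plan is to read off the leading-order coefficient from Theorem~\ref{theorem_IY_general1} and then minimize that coefficient by a standard constrained optimization. Under the absolute continuity assumption $P\ll Q$ (the regime in which the expansion applies), Theorem~\ref{theorem_IY_general1} gives
\[
I(Y_1;\boldsymbol{Z}) \;=\; \frac{1}{2n}\sum_i \frac{p_i(1-p_i)}{q_i} \;+\; O(n^{-3/2}),
\]
so ``asymptotically minimizing'' the leakage (in the sense of the leading-order coefficient) reduces to minimizing
\[
F(Q) \;:=\; \sum_i \frac{p_i(1-p_i)}{q_i}
\]
over probability vectors $Q=(q_i)$ with $\sum_i q_i=1$ and $q_i>0$ whenever $p_i>0$.

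Next I would solve this convex program by Cauchy--Schwarz or, equivalently, by Lagrange multipliers. The Cauchy--Schwarz route yields
\[
\Bigl(\sum_i \sqrt{p_i(1-p_i)}\Bigr)^{2} \;=\; \Bigl(\sum_i \sqrt{\tfrac{p_i(1-p_i)}{q_i}}\,\sqrt{q_i}\Bigr)^{2} \;\le\; F(Q)\,\sum_i q_i \;=\; F(Q),
\]
with equality iff $\sqrt{p_i(1-p_i)}/q_i$ is constant in $i$, that is, $q_i\propto \sqrt{p_i(1-p_i)}$. The Lagrangian approach applied to $F(Q)-\lambda(\sum_i q_i-1)$ gives the stationarity condition $-p_i(1-p_i)/q_i^{2}+\lambda=0$, hence $q_i\propto \sqrt{p_i(1-p_i)}$ after normalization, recovering the same answer.

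The main subtlety---more bookkeeping than genuine obstacle---is the handling of degenerate coordinates. If some $p_i=1$, then $P$ is a point mass and $p_j(1-p_j)=0$ for every $j$, so the leading-order leakage vanishes for any admissible $Q$ and the statement is vacuous. If $p_i=0$ for some $i$, that coordinate contributes nothing to $F(Q)$ and can be left outside $\mathrm{Supp}(Q)$ without violating $P\ll Q$. After pruning such indices, the optimizer $q_i^{*}\propto \sqrt{p_i(1-p_i)}$ is automatically strictly positive on $\mathrm{Supp}(P)$, so the absolute-continuity hypothesis of Theorem~\ref{theorem_IY_general1} is satisfied at the optimum and the claimed asymptotic minimizer is established.
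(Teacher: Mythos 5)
Your proposal is correct and follows essentially the same route as the paper: read the leading coefficient $\tfrac{1}{2n}\sum_i p_i(1-p_i)/q_i$ from Theorem~\ref{theorem_IY_general1} and minimize it over $Q$ by Cauchy--Schwarz, with equality exactly when $q_i \propto \sqrt{p_i(1-p_i)}$. The extra remarks on degenerate coordinates are fine but not needed beyond what the paper's argument already covers.
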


\begin{proof}
According to Theorem \ref{theorem_IY_general1}, the leading term of \(I(Y_1;\boldsymbol{Z})\) is
\(
\frac{1}{2n}\sum_i \frac{p_i(1-p_i)}{q_i}.
\)
Thus, in asymptotic sense, minimizing \(I(Y_1;\boldsymbol{Z})\) over \(Q\) reduces to
\[
\min_{\{q_i\}}\ \sum_i \frac{a_i}{q_i}
\quad \text{subject to}\quad q_i\ge 0,\ \sum_i q_i=1,\quad
\text{where } a_i:=p_i(1-p_i).
\]

By Cauchy--Schwarz inequality,
\[
\sum_i \frac{a_i}{q_i}
=\sum_i \frac{(\sqrt{a_i})^2}{q_i}
\;\ge\;
\frac{\bigl(\sum_i \sqrt{a_i}\bigr)^2}{\sum_i q_i}
=\bigl(\sum_i \sqrt{a_i}\bigr)^2,
\]
with equality if and only if \(\frac{a_i}{q_i}=\lambda q_i\) for some \(\lambda>0\), i.e.,
\(q_i\propto \sqrt{a_i}=\sqrt{p_i(1-p_i)}\). Normalizing to \(\sum_i q_i=1\) yields the stated optimizer.
\end{proof}

\begin{corollary}
If \(\mathrm{Supp}(P)\) has size \(2\), then the asymptotically optimal \(Q\) is the uniform distribution on \(\mathrm{Supp}(P)\).
\end{corollary}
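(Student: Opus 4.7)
The plan is to derive the corollary as a direct specialization of Theorem~\ref{theorem_optimal_Q}. Since the theorem characterizes the asymptotic minimizer by $q_i \propto \sqrt{p_i(1-p_i)}$, the entire argument reduces to evaluating this expression when $|\mathrm{Supp}(P)|=2$ and checking that no probability mass is ``wasted'' outside $\mathrm{Supp}(P)$.

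First, I would write $\mathrm{Supp}(P)=\{y_1,y_2\}$ with $p_1=p\in(0,1)$ and $p_2=1-p$ (strict inequalities because $|\mathrm{Supp}(P)|=2$). Then I would compute $p_1(1-p_1)=p(1-p)$ and $p_2(1-p_2)=(1-p)p$, observing that these two values are identical. By Theorem~\ref{theorem_optimal_Q}, the optimal $q_i$ on $\mathrm{Supp}(P)$ are proportional to a common positive constant $\sqrt{p(1-p)}$, so after normalization I obtain $q_1=q_2=1/2$, which is the uniform distribution on $\mathrm{Supp}(P)$.

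The one subtlety to address is whether the optimizer might place mass on points outside $\mathrm{Supp}(P)$. I would argue that at any $y\notin\mathrm{Supp}(P)$ the coefficient $a_y=p_y(1-p_y)=0$, so the term $a_y/q_y$ contributes nothing to the objective $\sum_i a_i/q_i$; moreover, assigning $q_y>0$ strictly decreases the total mass available for the supported points and hence strictly increases $\sum_{i\in\mathrm{Supp}(P)} a_i/q_i$. Consequently, any minimizer must be supported exactly on $\mathrm{Supp}(P)$, and the Cauchy--Schwarz characterization in Theorem~\ref{theorem_optimal_Q} is attained uniquely by the uniform distribution on these two points.

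There is no real obstacle here beyond this support-restriction remark; the corollary is essentially a two-line consequence of the symmetry $p(1-p)=(1-p)p$ together with the formula from Theorem~\ref{theorem_optimal_Q}.
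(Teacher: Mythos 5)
Your proposal is correct and follows essentially the same route as the paper: specialize Theorem~\ref{theorem_optimal_Q} and use the symmetry $p(1-p)=(1-p)p$ to conclude that the rule $q_i\propto\sqrt{p_i(1-p_i)}$ assigns equal mass to both support points. Your extra remark ruling out mass outside $\mathrm{Supp}(P)$ is a harmless (and sensible) addition that the paper's one-line proof leaves implicit.
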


\begin{proof}
Let the two support points have probabilities \(p\) and \(1-p\). Then
\(\sqrt{p(1-p)}=\sqrt{(1-p)p}\) for both points, so the rule
\(q_i\propto \sqrt{p_i(1-p_i)}\) assigns equal mass to each support point.
\end{proof}

\begin{example}
We conduct a numerical experiment with \(P=\mathrm{Zipf}(4,0.7)\) and \(Q\) the uniform distribution on \([4]\). Since closed-form expressions for \(I(K;\boldsymbol{Z})\) and \(I(Y_1;\boldsymbol{Z})\) are intractable, we estimate them via Monte Carlo simulation. In detail,
\[
I(K;\boldsymbol{Z})
= \mathbb{E}_{\boldsymbol{Z}}\!\left[\sum_{k=1}^n \Pr\bigl(K=k \mid \boldsymbol{Z}\bigr)\,
\log\!\bigl(n\,\Pr\bigl(K=k \mid \boldsymbol{Z}\bigr)\bigr) \right],
\]
so we draw many realizations of \(\boldsymbol{Z}\) and approximate the expectation by the empirical average. We use \(100{,}000\) samples to ensure numerical stability. As shown in Fig.~\ref{fig2:sub1}, the estimated mutual informations closely track their asymptotic counterparts.

Next, we empirically verify Theorem~\ref{theorem_optimal_Q} with \(P=\mathrm{Zipf}(4,0.7)\). We compare \(Q_1=P\) with the theoretically optimal choice \(Q_2\) satisfying \(q_i \propto \sqrt{p_i(1-p_i)}\). The corresponding Monte Carlo curves (again with \(100{,}000\) samples) and the theoretical predictions are displayed in Fig.~\ref{fig2:sub2}. The constants are
\(C_1=m-1=3\) and \(C_2=\bigl(\sum_i \sqrt{p_i(1-p_i)}\bigr)^2 \approx 2.81\).
The empirical results align very well with theory.

\begin{figure}[t]
  \centering
  \subfloat[]{%
    \includegraphics[width=0.45\linewidth]{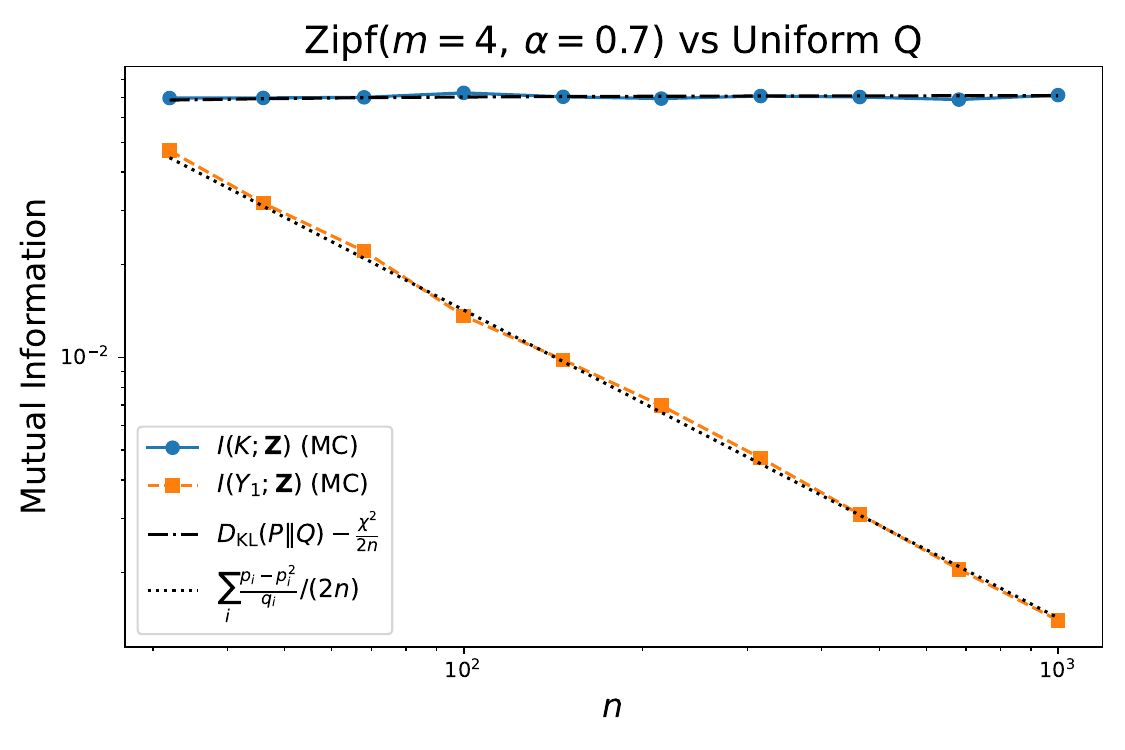}
    \label{fig2:sub1}
  }
  \hfill
  \subfloat[]{%
    \includegraphics[width=0.45\linewidth]{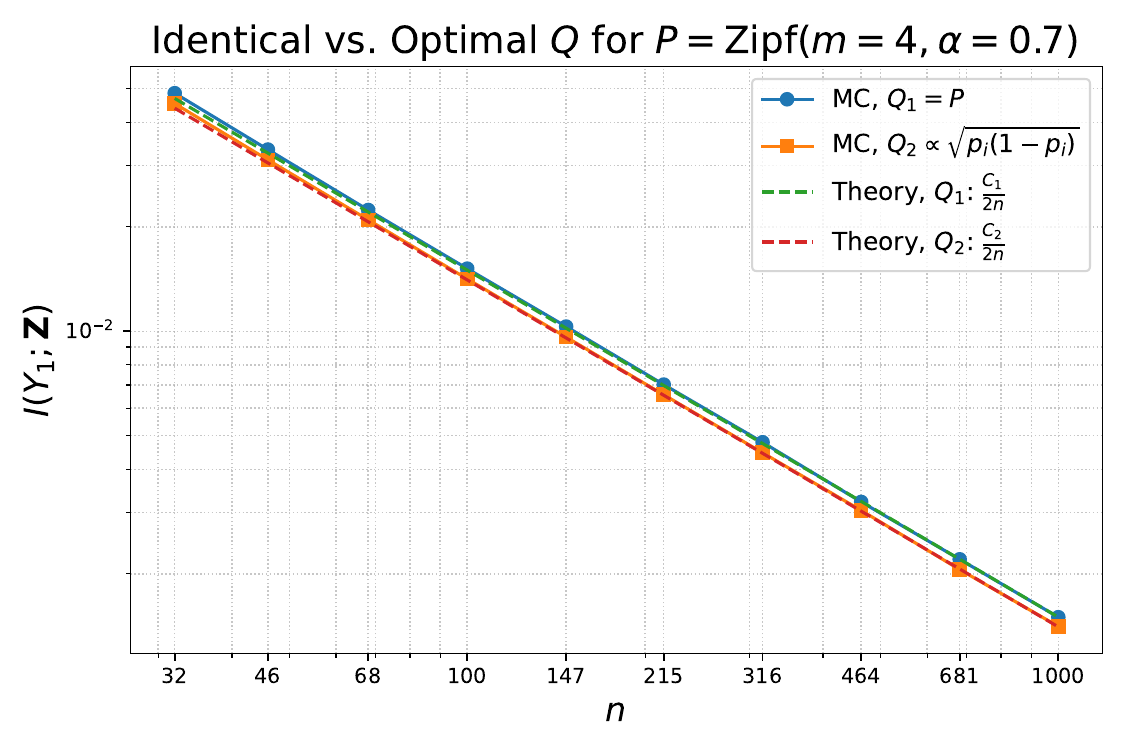}
    \label{fig2:sub2}
  }
  \caption{Exact vs.\ asymptotic mutual information in the basic shuffle-only setting with \(P \ll Q\), and verification of the optimal \(Q\).}
  \label{fig:fig2}
\end{figure}
\end{example}

\subsection{Basic Shuffle-Only Setting: $P \not\ll Q$}

We now extend the analysis from the case $P \ll Q$ to the more general situation where $P$ may assign positive mass to events outside the support of $Q$.
Let $\mathbb{Y} = \mathbb{Y}_1 \cup \mathbb{Y}_2$, where
\[
\mathbb{Y}_2 = \{ y \mid P(y) > 0 \ \text{and}\ Q(y) = 0 \}, 
\qquad
\mathbb{Y}_1 = \mathbb{Y} \setminus \mathbb{Y}_2.
\]
Define the indicator random variable $T = \mathbbm{1}[Y_1 \in \mathbb{Y}_2]$.
When $T=1$, the event $Y_1 \in \mathbb{Y}_2$ implies that $Y_1$ takes a value impossible under $Q$, which makes the shuffled output $\boldsymbol{Z}$ reveal $Y_1$ completely.
Conditioning on $T=0$, the distribution reduces to a restricted case $P' = P \mid_{\mathbb{Y}_1}$ such that $P' \ll Q$.
Let $\beta = \Pr[T=1] = P(\mathbb{Y}_2)$, so that $P = (1-\beta) P\mid_{\mathbb{Y}_1} + \beta P\mid_{\mathbb{Y}_2}$.

The following results generalize the asymptotic expansions obtained in the absolutely continuous case.

\begin{theorem}\label{theorem_IK_general2}
In the basic shuffle-only setting, when $P\not\ll Q$, the mutual information between the shuffle position $K$ and the shuffled output $\boldsymbol{Z}$ satisfies
\[
I(K;\boldsymbol{Z})
\;=\;
\beta \log n
+ (1-\beta)\, D_{\mathrm{KL}}(P'\Vert Q)
- (1-\beta)\, \frac{\chi^2(P'\Vert Q)}{2n}
+ O(n^{-3/2}),
\qquad n\to\infty,
\]
where $\beta=P(\mathbb{Y}_2),\ P'=P|_{\mathbb{Y}\setminus \mathbb{Y}_2},\text{ and } \mathbb{Y}_2 = \{ y \mid P(y) > 0 \ \text{and}\ Q(y) = 0 \}$.
\end{theorem}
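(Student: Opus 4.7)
The plan is to reduce this mixed case to the two extremes that are already understood by conditioning on the indicator $T = \mathbbm{1}[Y_1 \in \mathbb{Y}_2]$. The key structural observation I would establish first is that $T$ is actually a function of $\boldsymbol{Z}$: since $Y_2,\dots,Y_n \overset{i.i.d.}{\sim} Q$ and $Q(\mathbb{Y}_2)=0$, none of the other users' messages lie in $\mathbb{Y}_2$ almost surely, so $T=1$ if and only if some coordinate of $\boldsymbol{Z}$ lies in $\mathbb{Y}_2$. Moreover, $K$ is uniform on $[n]$ and independent of $(Y_1,\dots,Y_n)$, hence independent of $T$, so $I(K;T)=0$.

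Using these two facts, I would decompose
\[
I(K;\boldsymbol{Z}) \;=\; I(K;\boldsymbol{Z},T) \;=\; I(K;T) + I(K;\boldsymbol{Z}\mid T) \;=\; \beta\, I(K;\boldsymbol{Z}\mid T=1) + (1-\beta)\, I(K;\boldsymbol{Z}\mid T=0).
\]
On the event $\{T=1\}$, exactly one coordinate of $\boldsymbol{Z}$ belongs to $\mathbb{Y}_2$ and its index is precisely $K$, so $K$ is a measurable function of $\boldsymbol{Z}$. Combined with $H(K\mid T=1)=H(K)=\log n$ (from the independence of $K$ and $T$), this yields $I(K;\boldsymbol{Z}\mid T=1) = \log n$.

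On the event $\{T=0\}$, by construction the conditional law of $Y_1$ is $P' := P|_{\mathbb{Y}_1}/(1-\beta)$, and $Y_2,\dots,Y_n$ are still i.i.d.\ from $Q$. Since $\mathrm{Supp}(P')\subseteq \mathbb{Y}_1 = \mathrm{Supp}(Q)$, we are in the absolutely continuous regime, and Theorem~\ref{theorem_IK_general1} applies verbatim, giving
\[
I(K;\boldsymbol{Z}\mid T=0) \;=\; D_{\mathrm{KL}}(P'\Vert Q) - \frac{\chi^2(P'\Vert Q)}{2n} + O(n^{-3/2}).
\]
Substituting the two conditional values back into the decomposition yields the stated expansion.

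The calculation here is essentially bookkeeping once the decomposition is set up; the main conceptual point — and the only step that requires any real care — is justifying that $T$ can be conditioned on \emph{freely} (i.e., that the chain-rule identity $I(K;\boldsymbol{Z}) = I(K;\boldsymbol{Z}\mid T)$ holds) by combining $T$'s $\boldsymbol{Z}$-measurability with $K \perp T$. One minor issue to address cleanly is the "almost surely" qualifier when asserting that $Y_i \notin \mathbb{Y}_2$ for $i\ge 2$, but since $Q(\mathbb{Y}_2)=0$ this is a null-set subtlety that does not affect the mutual-information computation.
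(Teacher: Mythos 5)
Your proposal is correct and follows essentially the same route as the paper: condition on the indicator $T$, use the independence $K\perp T$ together with the fact that $T$ is (almost surely) determined by $\boldsymbol{Z}$ to get $I(K;\boldsymbol{Z})=I(K;\boldsymbol{Z}\mid T)$, then evaluate the two branches as $\log n$ (for $T=1$) and via Theorem~\ref{theorem_IK_general1} applied to $P'\ll Q$ (for $T=0$). Your phrasing $I(K;\boldsymbol{Z})=I(K;\boldsymbol{Z},T)$ is just an equivalent packaging of the paper's step $I(K;T\mid\boldsymbol{Z})=0$, and the minor imprecision $\mathbb{Y}_1=\mathrm{Supp}(Q)$ is harmless since all that is needed is $\mathrm{Supp}(P')\subseteq\mathrm{Supp}(Q)$.
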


\begin{proof}
By the chain rule,
\(
I(K;\boldsymbol{Z},T)
= I(K;T) + I(K;\boldsymbol{Z}\mid T)
= I(K;\boldsymbol{Z}) + I(K;T\mid \boldsymbol{Z}),
\)
we obtain
\[
I(K;\boldsymbol{Z})
= I(K;\boldsymbol{Z}\mid T) + I(K;T) - I(K;T\mid \boldsymbol{Z}).
\]

Observe that $T$ is deterministically recoverable from $\boldsymbol{Z}$ (one simply checks whether
$\boldsymbol{Z}$ contains any element from $\mathbb{Y}_2$).  
Hence $I(K;T\mid \boldsymbol{Z}) = 0$.
Furthermore, $K$ and $T$ are independent:  
$T$ is an intermediate random variable generated in the process
$Y_1\sim P$, whereas $K$ is drawn independently by the shuffler.  
Thus $I(K;T)=0$.

Therefore,
\[
I(K;\boldsymbol{Z})
= I(K;\boldsymbol{Z}\mid T)
= I(K;\boldsymbol{Z}\mid T=0)\Pr[T=0]
 + I(K;\boldsymbol{Z}\mid T=1)\Pr[T=1].
\]

When $T=1$, i.e., $Y_1 \in \mathbb{Y}_2$, the shuffle output deterministically reveals the position of $Y_1$ in $\boldsymbol{Z}$.
Hence $I(K;\boldsymbol{Z}\mid T=1) = \log n$.

When $T=0$, the random variable $Y_1$ follows the restricted distribution $P'\ll Q$. In this case, all previously derived expansions for $I(K;\boldsymbol{Z})$ under $P\ll Q$ (see Theorem~\ref{theorem_IK_general1}) directly apply to $P'$:
\[
I(K;\boldsymbol{Z}\mid T=0)
= D_{\mathrm{KL}}(P'\Vert Q)
- \frac{\chi^2(P'\Vert Q)}{2n}
+ O(n^{-3/2}).
\]
Combining the two parts yields
\[
I(K;\boldsymbol{Z})
=\beta \log n+ (1-\beta)\!\left(D_{\mathrm{KL}}(P'\Vert Q)
- \frac{\chi^2(P'\Vert Q)}{2n}\right)
+ O(n^{-3/2}),
\]
as claimed.
\end{proof}

\begin{theorem}\label{theorem_IY_general2}
In the basic shuffle-only setting, when $P\not\ll Q$, the mutual information between the target message and the shuffled output satisfies
\[
I(Y_1;\boldsymbol{Z})
\;=\;
\sum_{y\in \mathbb{Y}_2}P(y)\log \frac{1}{P(y)}+(1-\beta)\log \frac{1}{1-\beta}+(1-\beta)\frac{\sum_i \frac{p_i' - (p_i')^2}{q_i}}{2n}
+ O(n^{-3/2}),
\qquad n\to\infty,
\]
where $\beta=P(\mathbb{Y}_2),\ P'=P|_{\mathbb{Y}\setminus \mathbb{Y}_2},\text{ and } \mathbb{Y}_2 = \{ y \mid P(y) > 0 \ \text{and}\ Q(y) = 0 \}$.
\end{theorem}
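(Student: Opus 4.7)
The plan is to mirror the conditioning strategy used for $I(K;\boldsymbol{Z})$ in Theorem~\ref{theorem_IK_general2}, using the indicator $T=\mathbbm{1}[Y_1\in\mathbb{Y}_2]$. The key observation is that $T$ is simultaneously a deterministic function of $Y_1$ (it reads off whether $Y_1$ lies in $\mathbb{Y}_2$) and a deterministic function of $\boldsymbol{Z}$ (since $Y_2,\dots,Y_n\sim Q$ never take values in $\mathbb{Y}_2$, the event $T=1$ is equivalent to $\boldsymbol{Z}$ containing some element of $\mathbb{Y}_2$). Hence $I(Y_1;T)=H(T)$ and $I(Y_1;T\mid\boldsymbol{Z})=0$, and the chain rule applied to $I(Y_1;\boldsymbol{Z},T)$ yields the clean decomposition
\[
I(Y_1;\boldsymbol{Z}) \;=\; H(T) \;+\; I(Y_1;\boldsymbol{Z}\mid T).
\]

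The first step is to evaluate $H(T)=-\beta\log\beta-(1-\beta)\log(1-\beta)$. The second step is to expand $I(Y_1;\boldsymbol{Z}\mid T)$ by conditioning on the two values of $T$. On $\{T=1\}$, the unique symbol of $\boldsymbol{Z}$ lying in $\mathbb{Y}_2$ must be $Y_1$, so $Y_1$ is a deterministic function of $\boldsymbol{Z}$ and $I(Y_1;\boldsymbol{Z}\mid T=1)=H(Y_1\mid T=1)$. Since the conditional law of $Y_1$ given $T=1$ is $P(y)/\beta$ on $\mathbb{Y}_2$, a direct calculation gives
\[
\beta\,H(Y_1\mid T=1)
\;=\; \beta\log\beta + \sum_{y\in\mathbb{Y}_2}P(y)\log\tfrac{1}{P(y)}.
\]
On $\{T=0\}$, the conditional law of $Y_1$ is $P'=P|_{\mathbb{Y}_1}$ with $P'\ll Q$, and the remaining $Y_i$'s are unchanged i.i.d.\ $Q$. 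Thus the conditional problem is exactly the basic shuffle-only instance with distributions $(P',Q)$ covered by Theorem~\ref{theorem_IY_general1}, which supplies
\[
(1-\beta)\,I(Y_1;\boldsymbol{Z}\mid T=0)
\;=\; (1-\beta)\,\frac{\sum_i \frac{p_i'-(p_i')^2}{q_i}}{2n} + O(n^{-3/2}).
\]

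The final step is to add the three contributions and observe the cancellation of the $\beta\log\beta$ terms between $H(T)$ and $\beta\,H(Y_1\mid T=1)$, which leaves precisely
\[
(1-\beta)\log\tfrac{1}{1-\beta} \;+\; \sum_{y\in\mathbb{Y}_2}P(y)\log\tfrac{1}{P(y)} \;+\; (1-\beta)\,\frac{\sum_i \frac{p_i'-(p_i')^2}{q_i}}{2n} \;+\; O(n^{-3/2}),
\]
matching the claim. I do not anticipate a genuine obstacle here: the reduction to Theorem~\ref{theorem_IY_general1} is automatic once the two measurability facts about $T$ are noted, and the only care required is in bookkeeping the entropy terms. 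The subtlest point—easy to overlook—is verifying the second measurability claim (that $T$ is recoverable from $\boldsymbol{Z}$), which relies crucially on the i.i.d.\ $Q$ structure of $Y_2,\dots,Y_n$ and would fail if those users could emit symbols in $\mathbb{Y}_2$; flagging this explicitly is the cleanest way to justify dropping the $I(Y_1;T\mid\boldsymbol{Z})$ term.
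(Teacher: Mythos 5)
Your proposal is correct and follows essentially the same route as the paper: condition on $T=\mathbbm{1}[Y_1\in\mathbb{Y}_2]$, use that $T$ is recoverable from $\boldsymbol{Z}$ to drop $I(Y_1;T\mid\boldsymbol{Z})$, note that $T=1$ reveals $Y_1$ while $T=0$ reduces to Theorem~\ref{theorem_IY_general1} with $(P',Q)$, and combine. The only difference is cosmetic bookkeeping (you evaluate $I(Y_1;T)=H(T)$ directly and cancel $\beta\log\beta$, whereas the paper expands $I(Y_1;T)=H(Y_1)-H(Y_1\mid T)$ and cancels $\beta H(Y_1\mid T=1)$), yielding the same result.
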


\begin{proof}
Analogously to the proof of Theorem~\ref{theorem_IK_general2}, we obtain
\[
I(Y_1;\boldsymbol{Z})
= I(Y_1;\boldsymbol{Z}\mid T)+I(Y_1;T)
= I(Y_1;\boldsymbol{Z}\mid T=0)\Pr[T=0]
\;+\;
I(Y_1;\boldsymbol{Z}\mid T=1)\Pr[T=1]+I(Y_1;T).
\]

By definition,
\begin{align*}
I(Y_1;T)=H(Y_1)-H(Y_1\mid T)=H(Y_1)-\beta H(Y_1\mid T=1)-(1-\beta)H(Y_1\mid T=0).
\end{align*}
When $T=1$, the shuffle output directly reveals the value of $Y_1$, thus $I(Y_1;\boldsymbol{Z}\mid T=1)
= H(Y_1\mid T=1)$.
When $T=0$, we again invoke Theorem~\ref{theorem_IY_general1} (the case $P'\ll Q$), giving
\[
I(Y_1;\boldsymbol{Z}\mid T=0)
= \frac{\sum_i \frac{p_i' - (p_i')^2}{q_i}}{2n}
+ O(n^{-3/2}).
\]
By definition,
\[
H(Y_1\mid T=0)=\sum_{y\in \mathbb{Y}_1}\frac{P(y)}{1-\beta}\log \frac{1-\beta}{P(y)} .
\]
Combining these parts yields
\begin{align*}
I(Y_1;\boldsymbol{Z})&=H(Y_1)-(1-\beta)H(Y_1\mid T=0)+I(Y_1;\boldsymbol{Z}\mid T=0)\Pr[T=0]\\
&=\sum_{y\in \mathbb{Y}}P(y)\log \frac{1}{P(y)}-\sum_{y\in \mathbb{Y}_1}P(y)\log \frac{1-\beta}{P(y)}+(1-\beta)\frac{\sum_i \frac{p_i' - (p_i')^2}{q_i}}{2n}
+ O(n^{-3/2})\\
&=\sum_{y\in \mathbb{Y}_2}P(y)\log \frac{1}{P(y)}+(1-\beta)\log \frac{1}{1-\beta}+(1-\beta)\frac{\sum_i \frac{p_i' - (p_i')^2}{q_i}}{2n}
+ O(n^{-3/2}).\qedhere
\end{align*}
\end{proof}

\subsection{Analysis of the General Shuffle-Only Setting}\label{sec_general_shuffle_only}

The preceding sections provide a complete analysis of the \emph{basic shuffle-only} setting, in which the other users’ inputs are i.i.d. (\(P_2=P_3=\cdots=P_n=Q\)). In contrast, when the user distributions \(P_i\) are heterogeneous, the posteriors of \(K\) and \(Y_1\) given \(\boldsymbol{Z}\) are no longer available in closed form, unlike the explicit formulas in Section~\ref{sec_posterior}. Fortunately, for analysis of $I(X;\boldsymbol{Z})$, the general shuffle-only setting can still be reduced to the basic one via the \emph{blanket decomposition}.

When applying the blanket decomposition to perform this reduction in the \emph{shuffle-only} model, we assume that the inputs \(X_i\) are \emph{mutually independent}; this ensures that an appropriate post-processing function is well defined. By contrast, in the \emph{shuffle-DP} model no such independence assumption is required, because the post-processing step can be supplied with \(\boldsymbol{X}_{-1}\) as side information.

For the leakage $I(K;\boldsymbol{Z})$, we can obtain bounds under additional conditions on the family $\{P_i\}$.  
The analysis in this case is closely related to the shuffle-DP setting, and we postpone the detailed discussion to Section \ref{sec_shuffle_dp_IK}.

\begin{definition}[Blanket Decomposition~\cite{Balle2019}]
Let \(\gamma := \sum_{y\in\mathbb{Y}} \inf_{i\in[2,n]} P_i(y)\). The \emph{blanket distribution} \(Q^{\mathrm{B}}\) associated with \((P_2,\dots,P_n)\) is
\[
Q^{\mathrm{B}}(y) \;=\; \frac{1}{\gamma}\,\inf_{i\in[2,n]} P_i(y).
\]
Each \(P_i\) then admits the decomposition
\[
P_i \;=\; \gamma\, Q^{\mathrm{B}} \;+\; (1-\gamma)\,\mathrm{LO}_i,
\]
where the \emph{left-over distribution} \(\mathrm{LO}_i\) is given by
\[
\mathrm{LO}_i(y) \;=\; \frac{P_i(y) - \gamma Q^{\mathrm{B}}(y)}{1-\gamma}, \qquad \forall\, y\in\mathbb{Y}.
\]
Equivalently, sampling from \(P_i\) can be implemented by first drawing \(b_i\sim\mathrm{Bern}(\gamma)\); if \(b_i=1\), output a sample from \(Q^{\mathrm{B}}\), otherwise output a sample from \(\mathrm{LO}_i\).
\end{definition}

For technical convenience, we also use the \emph{generalized blanket distribution} \(\bar{Q}^{\mathrm{B}}\)~\cite{su2025decompositionbasedoptimalboundsprivacy}:
\[
\bar{Q}^{\mathrm{B}}(y) \;=\;
\begin{cases}
\inf_{i\in[2,n]} P_i(y), & y\in\mathbb{Y},\\[4pt]
1-\gamma, & y=\perp,
\end{cases}
\]
where \(\perp\) is a placeholder symbol for “non-blanket’’ outcomes, so the output space is \(\bar{\mathbb{Y}}=\mathbb{Y}\cup\{\perp\}\).

\begin{theorem}[Blanket Reduction~\cite{su2025decompositionbasedoptimalboundsprivacy}]
\label{theorem_blanket}
Let \(\boldsymbol{Z}^{r} = \mathcal{S}\bigl(\{Y_1\}\cup\{Y_i' : i=2,\dots,n\}\bigr)\), where \(Y_i' \stackrel{\text{i.i.d.}}{\sim} \bar{Q}^{\mathrm{B}}\) and \(\mathcal{S}\) applies a uniformly random permutation. Then there exists a post-processing function \(f^{\mathrm{B}}\) such that
\[
f^{\mathrm{B}}\!\bigl(\boldsymbol{Z}^{r},\{\mathrm{LO}_i\}_{i=2}^n\bigr) \;\stackrel{d}{=}\; \boldsymbol{Z},
\]
i.e., the distribution of \(\boldsymbol{Z}\) is obtained by post-processing \(\boldsymbol{Z}^{r}\) using \(\{\mathrm{LO}_i\}_{i=2}^n\).
\end{theorem}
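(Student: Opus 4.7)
The plan is to construct an explicit coupling between the original process generating $\boldsymbol{Z}$ and the surrogate process generating $\boldsymbol{Z}^{r}$, and then to define $f^{\mathrm{B}}$ as a randomized mapping that, using $\boldsymbol{Z}^{r}$ and the family $\{\mathrm{LO}_i\}$, recreates the information erased when leftover draws are replaced by the placeholder $\perp$. The core insight is that exchangeability — of the Bernoulli indicators inside the blanket decomposition and of the uniform shuffle — allows the hidden assignment of indices to $\perp$-positions to be simulated by fresh i.i.d.\ randomness inside $f^{\mathrm{B}}$.

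First I would unfold the blanket decomposition $P_i = \gamma Q^{\mathrm{B}} + (1-\gamma)\mathrm{LO}_i$ as a two-stage sampler: for each $i \in [2,n]$, draw independent $b_i \sim \mathrm{Bern}(\gamma)$, $V_i \sim Q^{\mathrm{B}}$, and $L_i \sim \mathrm{LO}_i$; set $Y_i = b_i V_i + (1-b_i) L_i$, and in parallel $Y_i' = V_i$ if $b_i = 1$ and $Y_i' = \perp$ otherwise. A direct check gives $\Pr[Y_i' = y] = \gamma Q^{\mathrm{B}}(y) = \inf_{j} P_j(y)$ for $y \in \mathbb{Y}$ and $\Pr[Y_i' = \perp] = 1 - \gamma$, so $Y_i' \sim \bar{Q}^{\mathrm{B}}$ as required. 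Applying a single uniformly random permutation $\sigma \sim \mathrm{Unif}(\mathcal{S}_n)$ to both $(Y_1, Y_2, \dots, Y_n)$ and $(Y_1, Y_2', \dots, Y_n')$ couples $\boldsymbol{Z}$ with $\boldsymbol{Z}^{r}$: the two sequences agree in every coordinate except at positions $k = \sigma(i)$ with $b_i = 0$, where $Z_k = L_i$ while $Z^{r}_k = \perp$.

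I would then define $f^{\mathrm{B}}(\boldsymbol{Z}^{r}, \{\mathrm{LO}_i\})$ as follows: let $\mathcal{M}$ be the set of $\perp$-positions in $\boldsymbol{Z}^{r}$ and $m = |\mathcal{M}|$; draw fresh randomness consisting of a uniformly random size-$m$ subset $S \subseteq \{2,\dots,n\}$, a uniformly random bijection $\pi : \mathcal{M} \to S$, and independent samples $L_k^{*} \sim \mathrm{LO}_{\pi(k)}$ for $k \in \mathcal{M}$; then replace each $\perp$ at position $k$ by $L_k^{*}$ and leave the remaining coordinates unchanged. Verifying the equality in distribution reduces to a three-fold symmetry argument inside the coupled model: because the $b_i$'s are i.i.d., the set $I_0 = \{i : b_i = 0\}$ is, conditional on $|I_0| = m$, uniform over size-$m$ subsets of $\{2,\dots,n\}$; conditional on $I_0$, the image $\sigma(I_0)$ coincides with $\mathcal{M}$ and is uniform over size-$m$ subsets of $[n]$; conditional on both, $\sigma|_{I_0}$ is a uniformly random bijection onto $\mathcal{M}$. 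Because the $L_i$'s are independent of $(b, V, \sigma)$ and hence of everything visible in $\boldsymbol{Z}^{r}$, the fresh $L_k^{*}$ drawn inside $f^{\mathrm{B}}$ carry the correct conditional law, so $f^{\mathrm{B}}(\boldsymbol{Z}^{r}, \{\mathrm{LO}_i\}) \stackrel{d}{=} \boldsymbol{Z}$ follows.

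The main obstacle I anticipate is conceptual rather than computational: one must carefully decouple the three layers of symmetry — the random choice of $I_0$, the uniform bijection $I_0 \to \mathcal{M}$, and the independent leftover draws — so that the simulation inside $f^{\mathrm{B}}$ is honest given only the shuffled output $\boldsymbol{Z}^{r}$ and the left-over distributions $\{\mathrm{LO}_i\}$. Once this exchangeability bookkeeping is laid out cleanly, no delicate estimate is needed to close the argument.
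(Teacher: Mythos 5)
Your construction is correct and coincides with the paper's own treatment: the map $f^{\mathrm{B}}$ you define (uniform injection of $\perp$-positions into $\{2,\dots,n\}$ followed by independent draws from the corresponding $\mathrm{LO}_i$) is exactly Algorithm~\ref{alg:post-processing2}, and your coupling-plus-exchangeability argument supplies the justification that the paper itself defers to~\cite{su2025decompositionbasedoptimalboundsprivacy}. The only nitpick is notational: under the paper's convention $Z_k = Y_{\sigma(k)}$, the $\perp$-positions are $\sigma^{-1}(I_0)$ rather than $\sigma(I_0)$, which changes nothing since $\sigma^{-1}$ is also uniform.
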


Operationally (see Algorithm~\ref{alg:post-processing2}), given \(\boldsymbol{Z}^r\) containing \(c\) occurrences of \(\perp\), the post-processing \(f^{\mathrm{B}}\) selects \(c\) users uniformly at random and replaces those \(\perp\) symbols with independent samples from the corresponding left-over distributions \(\mathrm{LO}_i\). Full proofs and procedural details appear in~\cite{su2025decompositionbasedoptimalboundsprivacy}.

Theorem~\ref{theorem_blanket} shows that a heterogeneous shuffle-only instance \emph{reduces} to a \emph{blanket-mixed} instance in which only the targeted user contributes a genuine message (e.g., \(Y_1\)), while all remaining messages are drawn from \(\bar{Q}^{\mathrm{B}}\).

\begin{theorem}\label{theorem_general_shuffle_IX}
\[
I(X_1;\boldsymbol{Z}) \;\le\; I(X_1;\boldsymbol{Z}^{r}).
\]
\end{theorem}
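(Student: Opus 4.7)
My plan is to invoke the Blanket Reduction theorem (Theorem~\ref{theorem_blanket}) together with the data processing inequality (DPI). The key observation is that, in the shuffle-only model, $Y_1 = X_1$, and the construction of $\boldsymbol{Z}^{r}$ treats the targeted user's contribution $X_1$ identically to the original process while replacing the other users' contributions by i.i.d.\ draws from the generalized blanket distribution $\bar{Q}^{\mathrm{B}}$. The post-processing $f^{\mathrm{B}}$ then restores the heterogeneous structure by substituting the $\perp$ placeholders with fresh samples from the left-over distributions $\{\mathrm{LO}_i\}_{i=2}^n$. Crucially, the randomness consumed by $f^{\mathrm{B}}$ (the random choice of which $\perp$ slot receives which $\mathrm{LO}_i$ sample, and the $\mathrm{LO}_i$ samples themselves) is drawn independently of $X_1$.

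Concretely, I would first argue that Theorem~\ref{theorem_blanket} holds not only as a marginal distributional equality for $\boldsymbol{Z}$, but jointly with $X_1$:
\[
(X_1,\boldsymbol{Z}) \;\stackrel{d}{=}\; \bigl(X_1,\; f^{\mathrm{B}}(\boldsymbol{Z}^{r},\{\mathrm{LO}_i\}_{i=2}^n)\bigr).
\]
This holds because (i) in both processes the targeted user's message $Y_1 = X_1$ enters the shuffler in the same way, and (ii) the blanket decomposition $P_i = \gamma Q^{\mathrm{B}} + (1-\gamma)\mathrm{LO}_i$ gives a faithful coupling of the other users' contributions that commutes with $X_1$. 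Consequently, conditional on $\boldsymbol{Z}^{r}$, the output $\boldsymbol{Z}$ depends only on auxiliary randomness that is independent of $X_1$, so we obtain the Markov chain
\[
X_1 \;\longrightarrow\; \boldsymbol{Z}^{r} \;\longrightarrow\; \boldsymbol{Z}.
\]
Applying the data processing inequality to this Markov chain immediately yields $I(X_1;\boldsymbol{Z}) \le I(X_1;\boldsymbol{Z}^{r})$.

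The main subtlety, rather than any hard calculation, lies in verifying the joint distributional equality above and checking that the randomness used by $f^{\mathrm{B}}$ is genuinely independent of $X_1$. A brief appeal to Algorithm~\ref{alg:post-processing2} and the construction in~\cite{su2025decompositionbasedoptimalboundsprivacy} should suffice, since the post-processing only reads $\boldsymbol{Z}^{r}$ and samples externally from fixed $\mathrm{LO}_i$ distributions. Once this Markov structure is established, the conclusion is a one-line application of DPI.
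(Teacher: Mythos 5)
Your proposal is correct and follows essentially the same route as the paper: the paper's proof also invokes the blanket reduction (Theorem~\ref{theorem_blanket}) to obtain the Markov chain $X_1 \to \boldsymbol{Z}^{r} \to \boldsymbol{Z}$ (conditioning on the public $\{\mathrm{LO}_i\}_{i=2}^n$, with the mutual-independence assumption on the $X_i$ guaranteeing the post-processing randomness is independent of $X_1$) and then applies the data-processing inequality. Your added verification of the joint distributional equality is a reasonable elaboration of what the paper states tersely.
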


\begin{proof}
By Theorem~\ref{theorem_blanket}, conditioning on the public parameter set \(\{\mathrm{LO}_i\}_{i=2}^n\), we have the Markov chain
\[
X_1 \;\longrightarrow\; \boldsymbol{Z}^{r} \;\longrightarrow\; \boldsymbol{Z}.
\]
The data-processing inequality then gives \(I(X_1;\boldsymbol{Z}) \le I(X_1;\boldsymbol{Z}^{r})\).
\end{proof}

This reduction is particularly effective when the \(P_i\) are similar. If a particular \(P_i\) deviates substantially from the rest, one may exclude that user from the analysis: the output \(\boldsymbol{Z}\) without that user’s contribution can be post-processed to reconstruct the full \(\boldsymbol{Z}\) by re-inserting that user’s message.

\begin{example}
Suppose \(P_1=(0.4,0.6)\), \(P_2=(0.3,0.7)\), \(P_3=(0.5,0.5)\), and \(P_4=(1.0,0)\). User~4’s distribution differs markedly from the others, so we may exclude it. The generalized blanket distribution of \(P_2\) and \(P_3\) is \(\bar{Q}^{\mathrm{B}}=(0.3,0.5,0.2)\), and
\[
I(X_1;\boldsymbol{Z}) \;\le\; I(X_1;\boldsymbol{Z}^{r}),
\]
where \(\boldsymbol{Z}^{r}=\mathcal{S}\bigl(\{Y_1\}\cup\{Y_i' : i=2,3\}\bigr)\) and \(Y_i' \stackrel{\text{i.i.d.}}{\sim} \bar{Q}^{\mathrm{B}}\).
\end{example}

\section{Analysis of the Shuffle-DP Setting}\label{sec_shuffle_dp}

In this section, we analyze the mutual information under the \emph{shuffle differential privacy} (shuffle-DP) setting.
Each user applies an $\varepsilon_0$-local randomizer $\mathcal{R}$ to their input $X_i$, producing
\(
Y_i = \mathcal{R}(X_i), \ i = 1, \ldots, n,
\)
and the outputs are subsequently shuffled by a uniformly random permutation $\sigma$, yielding the shuffled view
$\boldsymbol{Z} = (Y_{\sigma(i)})_{i=1}^n$. Let $\boldsymbol{X}=(X_1,X_2,X_3,\dots,X_n)$ and $\boldsymbol{X}_{-1}=(X_2,X_3,\dots,X_n)$.
We will demonstrate two results:
\begin{enumerate}
    \item the shuffled output $\boldsymbol{Z}$ is $2\varepsilon_0$-DP with respect to the shuffle index $K$ even given $\boldsymbol{X}$, implying
$I(K;\boldsymbol{Z}\mid \boldsymbol{X}) \le 2\varepsilon_0$;
\item by reduction to the basic setting via the \emph{blanket decomposition}
\cite{Balle2019,su2025decompositionbasedoptimalboundsprivacy},
we have
\(
I(X_1;\boldsymbol{Z}\mid \boldsymbol{X}_{-1})
\;\le\;
\frac{e^{\varepsilon_0}-1}{2n}
+ O(n^{-3/2}).
\)
\end{enumerate}
It is worth noting that these conclusions hold without any independence assumption on the \(X_i\).

\begin{remark}\label{remark3}
The leakage \(I(X_1;\boldsymbol{Z})\) can be large when the \(X_i\) are not mutually independent. 
For instance, under full coherence \(X_1=X_2=\cdots=X_n\) (each with marginal distribution \(P\)), the shuffle-DP model with the randomized response mechanism (Example~\ref{example_rr}) produces \(\boldsymbol{Z}\) as \(n\) i.i.d.\ randomized reports of the \emph{same} underlying symbol. By the law of large numbers, a simple majority-vote decoder recovers \(X_1\) with probability tending to one, so the leakage is almost complete:
\[
I(X_1;\boldsymbol{Z}) \;\longrightarrow\; H(X_1)\qquad\text{as } n\to\infty.
\]   
\end{remark}

For convenience, $\Pr[\mathcal{R}(x)=y]$ is denoted by $\mathcal{R}_x(y)$.

\subsection{Analysis of $I(K;\boldsymbol{Z})$}\label{sec_shuffle_dp_IK}

\begin{theorem}\label{theorem_shuffle_IK}
In the shuffle-DP setting, the shuffled output $\boldsymbol{Z}$ is $2\varepsilon_0$-DP with respect to the shuffle index $K$, for any given $\boldsymbol{X}$.
Consequently,
\[
I(K;\boldsymbol{Z} \mid \boldsymbol{X}) \le 2\varepsilon_0.
\]
\end{theorem}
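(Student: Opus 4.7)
The plan is to prove the stronger pointwise statement—that conditional on $\boldsymbol{X}=\boldsymbol{x}$, the laws $\boldsymbol{Z}\mid K=k$ and $\boldsymbol{Z}\mid K=k'$ are $(2\varepsilon_0,0)$-indistinguishable for every pair of positions $k\ne k'$—and then deduce the conditional mutual information bound by standard convexity arguments. Throughout I fix a realization $\boldsymbol{X}=\boldsymbol{x}$ and write $P_k(\boldsymbol{z}) := \Pr[\boldsymbol{Z}=\boldsymbol{z} \mid \boldsymbol{X}=\boldsymbol{x}, K=k]$.

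The combinatorial heart of the argument is a matched expansion of $P_k$ and $P_{k'}$. I further condition the uniform permutation consistent with $K=k$ on the identity of the user whose report lands at the ``other'' position, namely $I := \sigma(k') \in \{2,\dots,n\}$, which is uniform by symmetry. This gives
\[
P_k(\boldsymbol{z}) \;=\; \frac{1}{n-1}\sum_{i=2}^{n} \mathcal{R}_{x_1}(z_k)\,\mathcal{R}_{x_i}(z_{k'})\,R_i(\boldsymbol{z}),
\]
where $R_i(\boldsymbol{z})$ averages the contributions of the remaining $n-2$ users over a uniform bijection onto the positions other than $k$ and $k'$. A symmetric derivation for $K=k'$, conditioning on $\sigma(k)\in\{2,\dots,n\}$, yields
\[
P_{k'}(\boldsymbol{z}) \;=\; \frac{1}{n-1}\sum_{i=2}^{n} \mathcal{R}_{x_1}(z_{k'})\,\mathcal{R}_{x_i}(z_{k})\,R_i(\boldsymbol{z}),
\]
with \emph{the same} $R_i$ factors. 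Pairing summands indexed by $i$, the only surviving ratio is $\mathcal{R}_{x_1}(z_k)\mathcal{R}_{x_i}(z_{k'})/(\mathcal{R}_{x_i}(z_k)\mathcal{R}_{x_1}(z_{k'}))$, which is bounded by $e^{2\varepsilon_0}$ by two applications of $\varepsilon_0$-LDP (one at position $k$, one at $k'$); this holds uniformly in $i$, hence $P_k(\boldsymbol{z}) \le e^{2\varepsilon_0} P_{k'}(\boldsymbol{z})$, and symmetry gives the reverse inequality.

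This pointwise likelihood-ratio bound immediately implies $D_{\mathrm{KL}}(P_k\Vert P_{k'}) \le 2\varepsilon_0$ for every $k,k'$. Since $K$ is uniform and independent of $\boldsymbol{X}$, the marginal of $\boldsymbol{Z}$ given $\boldsymbol{X}=\boldsymbol{x}$ equals $\frac{1}{n}\sum_{k'} P_{k'}$; convexity of KL divergence in its second argument then yields $D_{\mathrm{KL}}(P_k\Vert \mathsf{Law}(\boldsymbol{Z}\mid\boldsymbol{X}=\boldsymbol{x})) \le \frac{1}{n}\sum_{k'} D_{\mathrm{KL}}(P_k\Vert P_{k'}) \le 2\varepsilon_0$, and averaging over $k$ gives $I(K;\boldsymbol{Z}\mid\boldsymbol{X}=\boldsymbol{x}) \le 2\varepsilon_0$; a final expectation over $\boldsymbol{x}$ completes the proof. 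The main conceptual obstacle is to find the conditioning (on the swap user $I$) that produces the matched expansion with identical $R_i$ factors; with this alignment in place, the rest is two applications of $\varepsilon_0$-LDP plus routine convexity.
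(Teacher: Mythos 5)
Your proposal is correct and takes essentially the same route as the paper: your conditioning on the swap user $I=\sigma(k')$ and pairing of summands is a repackaging of the paper's bijection between permutations with $\sigma(k)=1$ and those with $\sigma'(k')=1$ obtained by swapping users $1$ and $\sigma(k')$, which likewise changes only the two factors at positions $k,k'$ and yields the same $e^{2\varepsilon_0}$ likelihood-ratio bound. The only (minor) difference is the last step: the paper invokes the known fact that an $\varepsilon$-DP channel leaks at most $\varepsilon$ nats of mutual information, whereas you rederive it directly from the pointwise bound via $D_{\mathrm{KL}}(P_k\Vert P_{k'})\le 2\varepsilon_0$ and convexity of KL in its second argument — equally valid and self-contained.
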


\begin{proof}
We begin by expressing the posterior of $K$ given $\boldsymbol{Z}=\boldsymbol{z}$ and $\boldsymbol{X}=\boldsymbol{x}$:
\begin{align*}
\Pr[K=k\mid \boldsymbol{Z}=\boldsymbol{z},\boldsymbol{X}=\boldsymbol{x}]
&= \frac{\Pr[K=k\mid \boldsymbol{X}=\boldsymbol{x}]\,\Pr[\boldsymbol{Z}=\boldsymbol{z}\mid K=k,\boldsymbol{X}=\boldsymbol{x}]}{\Pr[\boldsymbol{Z}=\boldsymbol{z},\boldsymbol{X}=\boldsymbol{x}]}.
\end{align*}
Since $K$ is uniformly distributed over $[n]$ and independent of $\boldsymbol{X}$, we have $\Pr[K=k\mid \boldsymbol{X}=\boldsymbol{x}]=1/n$ for all $k$.
Hence, differential privacy with respect to $K$ is determined by bounding the ratio
$\Pr[\boldsymbol{Z}=z\mid K=k,\boldsymbol{X}=\boldsymbol{x}]/\Pr[\boldsymbol{Z}=\boldsymbol{z}\mid K=k',\boldsymbol{X}=\boldsymbol{x}]$ for any two positions $k,k'\in[n]$.

\para{Expansion of the conditional likelihood.}
Conditioned on $K=k$, the message in position $k$ originates from user~1, while all other messages are outputs of the remaining users.
Then
\begin{equation}
\Pr[\boldsymbol{Z}=\boldsymbol{z}\mid K=k,\boldsymbol{X}=\boldsymbol{x}]
=\sum_{\sigma:\sigma(k)=1}\;\frac{1}{(n-1)!}\prod_{i=1}^n \mathcal{R}_{x_{\sigma(i)}}(z_i). \label{eq8}
\end{equation}
Similarly for $K=k'$:
\begin{equation}
\Pr[\boldsymbol{Z}=\boldsymbol{z}\mid K=k',\boldsymbol{X}=\boldsymbol{x}]
=\sum_{\sigma:\sigma(k')=1}\frac{1}{(n-1)!}\;\prod_{i=1}^n \mathcal{R}_{x_{\sigma(i)}}(z_i). \label{eq9}
\end{equation}

\para{Applying the LDP guarantee.}
Because $\mathcal{R}$ is $\varepsilon_0$-locally differentially private, its output distributions satisfy, for all users $t \in [n]\setminus\{1\}$ and all $z$ in the output domain,
\[
e^{-\varepsilon_0}
\;\le\;
\frac{\mathcal{R}_{x_1}(z)}{\mathcal{R}_{x_t}(z)}
\;\le\;
e^{\varepsilon_0}.
\]

Now fix $k$ and $k'$.  
Each permutation $\sigma$ contributing to Equation (\ref{eq8}) can be mapped one-to-one to a permutation $\sigma'$ in Equation (\ref{eq9}) that swaps the roles of users~1 and~$t=\sigma(k')$.  
This exchange changes at most two output terms in the product—those corresponding to $z_k$ and $z_{k'}$—each incurring at most an $e^{\varepsilon_0}$ multiplicative change by the LDP guarantee.  
Hence, for every such pair of permutations,
\[
\prod_{i=1}^n \mathcal{R}_{x_{\sigma(i)}}(z_i)
\;\le\;
e^{2\varepsilon_0}
\prod_{i=1}^n \mathcal{R}_{x_{\sigma'(i)}}(z_i).
\]
Summing over all $\sigma$ and using the bijection between $\sigma$ and $\sigma'$ gives
\[
\Pr[\boldsymbol{Z}=z\mid K=k,\boldsymbol{X}=\boldsymbol{x}]
\;\le\;
e^{2\varepsilon_0}\,\Pr[\boldsymbol{Z}=z\mid K=k',\boldsymbol{X}=\boldsymbol{x}].
\]
That is, the shuffled output distribution is $2\varepsilon_0$-DP with respect to the index $K$ given $\boldsymbol{X}$.

\para{Differential privacy and mutual information.}
It is known that if a random variable $Z$ satisfies $\varepsilon$-LDP with respect to a discrete input $K$, then
$I(K;Z)\le \varepsilon$ \cite{Cuff16} \footnote{There are more refined bounds like $\min\{\varepsilon^2,\varepsilon\}$ \cite{Cuff16}.}.
In our case, this yields
\[
I(K;\boldsymbol{Z}\mid \boldsymbol{X}) \le 2\varepsilon_0,
\]
completing the proof.
\end{proof}

\begin{corollary}\label{corollary_shuffle_IK}
In the shuffle-DP setting, the information leakage about the position $K$ is bounded by
\[
I(K;\boldsymbol{Z}) \le 2\varepsilon_0.
\]
\end{corollary}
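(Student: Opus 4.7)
}

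The plan is to deduce the unconditional bound from the conditional one in Theorem~\ref{theorem_shuffle_IK} by exploiting the fact that the shuffler's permutation $\sigma$ (and hence $K=\sigma^{-1}(1)$) is drawn uniformly from $\mathcal{S}_n$ \emph{independently} of the input database $\boldsymbol{X}$. This independence is built into the shuffle model as depicted in Fig.~\ref{fig:shuffle_model}, and it is the only ingredient needed to remove the conditioning on $\boldsymbol{X}$.

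Concretely, I would first observe that $I(K;\boldsymbol{X})=0$ by this independence. Then I would apply the chain rule for mutual information in two ways:
\begin{equation*}
I(K;\boldsymbol{Z},\boldsymbol{X})
= I(K;\boldsymbol{X}) + I(K;\boldsymbol{Z}\mid \boldsymbol{X})
= I(K;\boldsymbol{Z}) + I(K;\boldsymbol{X}\mid \boldsymbol{Z}).
\end{equation*}
Using $I(K;\boldsymbol{X})=0$ and the non-negativity of $I(K;\boldsymbol{X}\mid \boldsymbol{Z})$, this yields
\begin{equation*}
I(K;\boldsymbol{Z})
= I(K;\boldsymbol{Z}\mid \boldsymbol{X}) - I(K;\boldsymbol{X}\mid \boldsymbol{Z})
\le I(K;\boldsymbol{Z}\mid \boldsymbol{X}) \le 2\varepsilon_0,
\end{equation*}
where the last inequality is precisely Theorem~\ref{theorem_shuffle_IK}. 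This completes the argument.

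There is essentially no obstacle here; the corollary is a one-line consequence of Theorem~\ref{theorem_shuffle_IK} together with the independence of the shuffler's permutation from the inputs. The only subtle point worth emphasizing in the write-up is that conditioning can \emph{increase} mutual information when the conditioning variable is correlated with the other two, which is why the argument relies crucially on $I(K;\boldsymbol{X})=0$ rather than on a generic ``conditioning reduces information'' heuristic.
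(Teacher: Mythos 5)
Your proposal is correct and follows essentially the same route as the paper: both use the chain rule for $I(K;\boldsymbol{Z},\boldsymbol{X})$, the independence $I(K;\boldsymbol{X})=0$, the non-negativity of $I(K;\boldsymbol{X}\mid\boldsymbol{Z})$, and Theorem~\ref{theorem_shuffle_IK}. No gaps; your closing remark about why $I(K;\boldsymbol{X})=0$ is the crucial ingredient is a fair emphasis of the same point the paper makes implicitly.
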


\begin{proof}
Using the chain rule for mutual information,
\[
I(K;\boldsymbol{Z}\mid \boldsymbol{X})
= I(K;\boldsymbol{Z}) + I(K;\boldsymbol{X}\mid \boldsymbol{Z}) - I(K;\boldsymbol{X}).
\]
Since $K$ is independent of $\boldsymbol{X}$, we have $I(K;\boldsymbol{X})=0$.  
Hence,
\[
I(K;\boldsymbol{Z})
= I(K;\boldsymbol{Z}\mid \boldsymbol{X}) - I(K;\boldsymbol{X}\mid \boldsymbol{Z})
\le I(K;\boldsymbol{Z}\mid \boldsymbol{X})\le 2\varepsilon_0,
\]
because $I(K;\boldsymbol{X}\mid \boldsymbol{Z})\ge 0$ and Theorem \ref{theorem_shuffle_IK}.
\end{proof}

The same technique yields an information leakage bound for $I(K;\boldsymbol{Z})$ in the general shuffle-only setting:

\begin{theorem}\label{theorem_IK_general_shuffle_only}
In the shuffle-only setting, suppose the distributions $\{P_i\}_{i=1}^n$ satisfy
\[
\forall\, y\in\mathbb{Y},\; i,j\in[n],\qquad 
\frac{P_i(y)}{P_j(y)} \le e^{\varepsilon_0}.
\]
Then
\[
I(K;\boldsymbol{Z}) \le 2\varepsilon_0.
\]
\end{theorem}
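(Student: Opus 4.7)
The plan is to mirror the proof of Theorem~\ref{theorem_shuffle_IK} almost verbatim, with the hypothesis $P_i(y)/P_j(y)\le e^{\varepsilon_0}$ taking the place of the $\varepsilon_0$-LDP guarantee on $\mathcal{R}$. The key observation is that in the shuffle-only setting, $Y_i=X_i\sim P_i$ independently, so conditional on $\sigma$ the joint law of the output factorizes as $\prod_i P_{\sigma(i)}(z_i)$, which plays exactly the role that $\prod_i \mathcal{R}_{x_{\sigma(i)}}(z_i)$ played in the shuffle-DP analysis.

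First, using that $K$ is uniform on $[n]$ and that $\sigma$ conditioned on $\sigma^{-1}(1)=k$ is uniform over the $(n-1)!$ permutations fixing $\sigma(k)=1$, I would write
\[
\Pr[\boldsymbol{Z}=\boldsymbol{z}\mid K=k]
=\sum_{\sigma:\sigma(k)=1}\frac{1}{(n-1)!}\prod_{i=1}^n P_{\sigma(i)}(z_i),
\]
and the analogous expression for $K=k'$. Second, I would set up the same bijection as in Theorem~\ref{theorem_shuffle_IK}: for each $\sigma$ with $\sigma(k)=1$, let $t=\sigma(k')$ and define $\sigma'$ by swapping the roles of users $1$ and $t$, so that $\sigma'(k')=1$. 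This transformation alters $\prod_i P_{\sigma(i)}(z_i)$ only in positions $k$ and $k'$: the factor $P_1(z_k)$ is replaced by $P_t(z_k)$ and $P_t(z_{k'})$ is replaced by $P_1(z_{k'})$. By the hypothesis, each replacement changes the corresponding factor by at most $e^{\varepsilon_0}$, so
\[
\prod_{i=1}^n P_{\sigma(i)}(z_i)\;\le\; e^{2\varepsilon_0}\prod_{i=1}^n P_{\sigma'(i)}(z_i).
\]

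Third, summing over the bijection yields $\Pr[\boldsymbol{Z}=\boldsymbol{z}\mid K=k]\le e^{2\varepsilon_0}\Pr[\boldsymbol{Z}=\boldsymbol{z}\mid K=k']$, so the conditional distribution of $\boldsymbol{Z}$ given $K$ is $2\varepsilon_0$-DP with respect to $K$. Invoking the same DP-to-MI inequality cited in Theorem~\ref{theorem_shuffle_IK} (namely $I(K;Z)\le \varepsilon$ whenever $Z$ satisfies $\varepsilon$-DP with respect to a discrete $K$~\cite{Cuff16}) then gives $I(K;\boldsymbol{Z})\le 2\varepsilon_0$. Note that no conditioning on the inputs is needed this time, since in the shuffle-only setting $\boldsymbol{Y}$ already coincides with the inputs; the hypothesis of the theorem is precisely what is required to make the two-factor swap bound go through. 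I do not foresee a genuine obstacle: the only place to be careful is checking that the bijection $\sigma\mapsto\sigma'$ is well-defined regardless of whether $t=1$ (in which case $\sigma'=\sigma$ and the bound is trivial), and that the case $k=k'$ gives the trivial bound $1\le e^{2\varepsilon_0}$.
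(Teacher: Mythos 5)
Your proof is correct and is essentially the paper's argument: the paper proves this theorem by defining $\mathcal{R}(i)=P_i$ and viewing the shuffle-only model as the shuffle-DP setting with fixed inputs $\boldsymbol{X}=(1,\dots,n)$, then citing Theorem~\ref{theorem_shuffle_IK}, whose proof is exactly the permutation-swap bijection and DP-to-MI step you reproduce inline. Unrolling that reduction rather than invoking it changes nothing mathematically, and your handling of the $t=1$ and $k=k'$ edge cases is fine.
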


\begin{proof}
Define a local randomizer $\mathcal{R}$ on $[n]$ by $\mathcal{R}(i)=P_i$.  
The shuffle-only model can then be viewed as a special case of the shuffle-DP setting with fixed inputs $\boldsymbol{X} = (1,2,\dots,n)$.  
Applying Theorem~\ref{theorem_shuffle_IK} directly yields the claimed bound.
\end{proof}

\subsection{Analysis of $I(X_1;\boldsymbol{Z}\mid \boldsymbol{X}_{-1})$}\label{sec_shuffle_dp_IX}

Analogous to the approach in Section~\ref{sec_general_shuffle_only}, the shuffle-DP setting can be reduced to a simpler and analytically tractable \emph{blanket-mix} model.  
This reduction relies on the following decomposition of the local randomizer $\mathcal{R}$.

\begin{definition}[Blanket decomposition of $\mathcal{R}$~\cite{Balle2019}]
Let $\gamma = \sum_{y \in \mathbb{Y}} \inf_x \mathcal{R}_x(y)$.  
The \emph{blanket distribution} $Q^{\mathrm{B}}$ of an $\varepsilon_0$-DP mechanism $\mathcal{R}$ is defined by
\[
Q^{\mathrm{B}}(y) \;=\; \frac{1}{\gamma}\inf_x \mathcal{R}_x(y).
\]
Each conditional distribution $\mathcal{R}(x)$ can then be written as a mixture
\[
\mathcal{R}(x) \;=\; \gamma\, Q^{\mathrm{B}} + (1-\gamma)\, \mathrm{LO}(x),
\]
where $\mathrm{LO}(x)$ is the \emph{left-over} distribution associated with input $x$.  
Equivalently, with probability $\gamma$ the mechanism outputs a draw from the common blanket $Q^{\mathrm{B}}$, and with probability $1-\gamma$ it outputs from the input-dependent residual $\mathrm{LO}(x)$.
\end{definition}

For technical convenience, we also consider the \emph{generalized blanket distribution} $\bar{Q}^{\mathrm{B}}$ of $\mathcal{R}$~\cite{su2025decompositionbasedoptimalboundsprivacy}:
\[
\bar{Q}^{\mathrm{B}}(y) =
\begin{cases}
\inf_{x} \mathcal{R}_x(y), & y \in \mathbb{Y}, \\[4pt]
1 - \gamma, & y = \perp,
\end{cases}
\]
where $\perp$ is a placeholder symbol representing ``non-blanket'' outcomes. The effective output space is thus extended to $\bar{\mathbb{Y}} = \mathbb{Y} \cup \{\perp\}$.

\begin{theorem}[Blanket reduction~\cite{su2025decompositionbasedoptimalboundsprivacy}]
\label{theorem_blanket2}
Let $\boldsymbol{Z}^{r} = \mathcal{S}(\{Y_1\} \cup \{Y_i' : i = 2, \dots, n\})$,  
where $Y_i' \stackrel{\text{i.i.d.}}{\sim} \bar{Q}^{\mathrm{B}}$ and $\mathcal{S}$ is a uniform random shuffler.  
Then there exists a post-processing function $f^{\mathrm{B}}$ such that
\[
\forall\, \boldsymbol{x}_{-1} \in \mathbb{X}^{n-1}:\quad
f^{\mathrm{B}}(\boldsymbol{Z}^{r}, \{\mathrm{LO}(x_i)\}_{i=2}^n) 
\stackrel{d}{=} 
\boldsymbol{Z} \mid (\boldsymbol{X}_{-1} = \boldsymbol{x}_{-1}),
\]
i.e., conditioned on $\boldsymbol{X}_{-1}$, the distribution of $\boldsymbol{Z}$ can be generated from $\boldsymbol{Z}^{r}$ via post-processing.
\end{theorem}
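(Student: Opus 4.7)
The plan is to construct the post-processing function $f^{\mathrm{B}}$ explicitly and then verify the distributional identity by coupling the two generative processes. The construction closely mirrors the one used in the shuffle-only case (Theorem~\ref{theorem_blanket}); the only new ingredient is that the left-over distributions now depend on $\mathcal{R}$ and on $\boldsymbol{x}_{-1}$, rather than on fixed per-user marginals.

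First, I would unfold the generative process for $\boldsymbol{Z}$ conditioned on $\boldsymbol{X}_{-1}=\boldsymbol{x}_{-1}$ using the blanket mixture $\mathcal{R}(x)=\gamma\,Q^{\mathrm{B}}+(1-\gamma)\,\mathrm{LO}(x)$. For each $i\in\{2,\dots,n\}$, draw an independent $B_i\sim\mathrm{Bern}(\gamma)$; if $B_i=1$ sample $Y_i\sim Q^{\mathrm{B}}$, otherwise sample $Y_i\sim\mathrm{LO}(x_i)$. Set $Y_1=\mathcal{R}(X_1)$ and shuffle $(Y_1,\dots,Y_n)$ by a uniform permutation $\sigma$. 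In parallel, I would construct $\boldsymbol{Z}^{r}$ on the same probability space by coupling: reuse the same $B_i$ and same $Q^{\mathrm{B}}$ draws, but replace the output whenever $B_i=0$ by the placeholder $\perp$, then shuffle with the same $\sigma$. By construction the marginal law of $\boldsymbol{Z}^{r}$ matches the one in the theorem statement, since $Y_i'$ is i.i.d.\ $\bar Q^{\mathrm{B}}$ for $i\geq2$.

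Next, I would define the post-processing $f^{\mathrm{B}}$ as follows. Given $\boldsymbol{Z}^{r}$, locate the $c$ positions containing $\perp$. Sample an independent uniformly random injection from these $c$ positions into $\{2,\dots,n\}$; equivalently, choose a uniformly random size-$c$ subset $S\subseteq\{2,\dots,n\}$ together with a uniform bijection to the $\perp$-positions. For each $\perp$-position matched to index $i\in S$, replace it with an independent sample from $\mathrm{LO}(x_i)$; leave all other entries untouched. This matches Algorithm~\ref{alg:post-processing2} adapted to the current setting, and it depends on $\boldsymbol{Z}^{r}$ and $\{\mathrm{LO}(x_i)\}_{i=2}^n$ only, as required.

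The core verification is that $f^{\mathrm{B}}(\boldsymbol{Z}^{r},\{\mathrm{LO}(x_i)\}_{i=2}^n)\stackrel{d}{=}\boldsymbol{Z}\mid\boldsymbol{X}_{-1}=\boldsymbol{x}_{-1}$. The key idea is an exchangeability argument: conditioned on the multiset of $\{i\geq2 : B_i=0\}$ and the values of the blanket draws for the complementary indices, the uniform shuffle $\sigma$ distributes the left-over users uniformly over the $\perp$-positions of $\boldsymbol{Z}^{r}$. Since the $\mathrm{LO}(x_i)$ draws for distinct $i$ are independent, jointly re-sampling them (as $f^{\mathrm{B}}$ does) under any uniformly random matching yields exactly the same joint law as the original process. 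Summing over the Bernoulli configurations—whose distribution is recovered because $|S|$ has the same $\mathrm{Bin}(n-1,1-\gamma)$ law as $\sum_{i\geq2}(1-B_i)$—gives the claimed equality.

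The main obstacle is the bookkeeping in the exchangeability step: one must argue that replacing the \emph{true} (but unobserved) set $\{i:B_i=0\}$ by an independent uniform size-$c$ subset leaves the joint distribution of the reconstructed shuffled sequence invariant. This is a symmetry-under-relabeling property that follows because $\sigma$ is uniform and the left-over draws are mutually independent, so the law of $\boldsymbol{Z}$ only depends on $\boldsymbol{x}_{-1}$ through the \emph{multiset} $\{\mathrm{LO}(x_i)\}_{i\in S}$ conditional on $S$; a careful case analysis on the value of $c$ together with the equality in distribution of uniform permutations acting on exchangeable tuples completes the argument, yielding the construction already worked out in~\cite{su2025decompositionbasedoptimalboundsprivacy}.
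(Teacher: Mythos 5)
Your proposal is correct and follows essentially the same route as the paper: the post-processing you describe is exactly the paper's $f^{\mathrm{B}}$ (Algorithm~\ref{alg:post-processing2}, filling the $\perp$ slots with independent draws from the left-over distributions of uniformly chosen distinct users), and the Bernoulli-decomposition plus exchangeability/coupling verification you give is the argument the paper defers to~\cite{su2025decompositionbasedoptimalboundsprivacy}. No gaps.
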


\begin{corollary}\label{corollary2}
By the data-processing inequality, post-processing cannot increase mutual information. Consequently,
\[
\forall\, \boldsymbol{x}_{-1} \in \mathbb{X}^{n-1}:\quad
I(X_1;\boldsymbol{Z}\mid \boldsymbol{X}_{-1}= \boldsymbol{x}_{-1})
\;\le\;
I(X_1;\boldsymbol{Z}^{r}\mid \boldsymbol{X}_{-1}= \boldsymbol{x}_{-1}).
\]
\end{corollary}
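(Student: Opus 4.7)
The plan is to derive this corollary as a direct consequence of Theorem~\ref{theorem_blanket2} combined with the conditional data-processing inequality (DPI). Fix $\boldsymbol{x}_{-1}\in\mathbb{X}^{n-1}$ and work throughout under the conditioning $\boldsymbol{X}_{-1}=\boldsymbol{x}_{-1}$. Under this conditioning the collection $\{\mathrm{LO}(x_i)\}_{i=2}^n$ is a deterministic object, so the map $g(\cdot):=f^{\mathrm{B}}(\cdot,\{\mathrm{LO}(x_i)\}_{i=2}^n)$ is a (possibly randomized) function of its first argument alone. The key structural fact to be used is that the auxiliary randomness inside $g$ (for example, the uniformly random choice of which $\perp$-positions to overwrite) is generated by the blanket reduction independently of the data, hence independent of $X_1$ given $\boldsymbol{X}_{-1}=\boldsymbol{x}_{-1}$.

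Given this, I would first invoke Theorem~\ref{theorem_blanket2} to identify $\boldsymbol{Z}\mid(\boldsymbol{X}_{-1}=\boldsymbol{x}_{-1})$ in distribution with $g(\boldsymbol{Z}^r)$. Because mutual information depends only on the joint law of its arguments, I may compute $I(X_1;\boldsymbol{Z}\mid \boldsymbol{X}_{-1}=\boldsymbol{x}_{-1})$ using this representation. Next, I would verify the conditional Markov chain $X_1 \to \boldsymbol{Z}^r \to \boldsymbol{Z}$ under the event $\{\boldsymbol{X}_{-1}=\boldsymbol{x}_{-1}\}$: conditional on $\boldsymbol{Z}^r$ and $\boldsymbol{x}_{-1}$, the output $\boldsymbol{Z}$ is produced by $g$ using only $\boldsymbol{Z}^r$ and fresh independent randomness, so no additional information about $X_1$ enters. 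Finally, applying the conditional DPI along this chain yields
\[
I(X_1;\boldsymbol{Z}\mid \boldsymbol{X}_{-1}=\boldsymbol{x}_{-1}) \;\le\; I(X_1;\boldsymbol{Z}^r\mid \boldsymbol{X}_{-1}=\boldsymbol{x}_{-1}),
\]
as desired.

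The only genuine point worth checking is the independence of the auxiliary randomness used by $f^{\mathrm{B}}$ from $X_1$ (given $\boldsymbol{X}_{-1}$); once that is established from the explicit construction in the blanket reduction, the rest is a one-line invocation of the DPI, so I do not expect any further obstacle.
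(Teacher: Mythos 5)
Your proposal is correct and follows essentially the same route as the paper: Theorem~\ref{theorem_blanket2} gives the representation of $\boldsymbol{Z}$ conditioned on $\boldsymbol{X}_{-1}=\boldsymbol{x}_{-1}$ as a post-processing of $\boldsymbol{Z}^{r}$ (with $\{\mathrm{LO}(x_i)\}_{i=2}^n$ fixed by the conditioning), yielding the Markov chain $X_1\to\boldsymbol{Z}^{r}\to\boldsymbol{Z}$ and the bound via the data-processing inequality, exactly as in the paper's one-line argument (cf.\ the proof of Theorem~\ref{theorem_general_shuffle_IX}). Your extra care about the independence of the auxiliary randomness in $f^{\mathrm{B}}$ from $X_1$ is a sound and welcome clarification but not a departure from the paper's reasoning.
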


We now analyze $I(X_1;\boldsymbol{Z}^{r}\mid \boldsymbol{X}_{-1}= \boldsymbol{x}_{-1})$ in the blanket-mix model. When conditioning on $\boldsymbol{X}_{-1}= \boldsymbol{x}_{-1}$, it is sufficient to work with $X_1 \sim P$ where $P$ is the conditional probability distribution on $\boldsymbol{X}_{-1}= \boldsymbol{x}_{-1}$.

It is well known that producing a histogram of the outputs is equivalent (up to post-processing) to applying a random permutation to the multiset of messages; each representation can be obtained from the other~\cite{CheuThesis}.  
For clarity, we work with the histogram view in the following lemma.

\begin{lemma}[Mutual information for one signal mixed with $s$ i.i.d.\ $Q$-samples]\label{lem:fixed-s}
Let $\mathbb{Y}$ be a finite output alphabet, and assume $Q(y)>0$ for all $y\in\mathbb{Y}$ so that all likelihood ratios $\mathcal R_x(y)/Q(y)$ are well-defined and bounded.
For an integer $s\ge 1$, let
\[
M^{(s)} \sim \mathrm{Mult}(s;Q)
\]
denote the \emph{histogram} of $s$ independent $Q$-samples.  
Thus $M^{(s)}=(M^{(s)}_y)_{y\in\mathbb{Y}}\in\mathbb{N}^{|\mathbb{Y}|}$ satisfies
$\sum_{y}M^{(s)}_y=s$.
Assume $M^{(s)}$ is independent of the ``signal'' $(X_1,Y_1)$, where $X_1\sim P$ and
$Y_1\mid X_1=x \sim \mathcal R_x$.

\medskip
Define the \emph{mixed} $(s+1)$-sample histogram
\[
C^{(s)}:= e_{Y_1} + M^{(s)},
\]
where $e_{Y_1}\in\mathbb{N}^{|\mathbb{Y}|}$ is the one-hot vector indicating the single draw $Y_1$.
Define the corresponding chi-square divergences:
\[
\chi^2(\mathcal R_x\Vert Q)
:=\sum_y \frac{\mathcal R_x(y)^2}{Q(y)}-1,
\qquad
\overline{\chi^2}
:=\mathbb E_{x\sim P}\big[\chi^2(\mathcal R_x\Vert Q)\big].
\]
Let $\bar P(y)=\sum_x P(x)\mathcal R_x(y)$ be the $P$-mixture of the $\mathcal R_x$'s (i.e., the
marginal distribution of $Y_1$).

\medskip
Then the mutual information between $X_1$ and the histogram $C^{(s)}$ satisfies, for all sufficiently large $s$,
\begin{equation}\label{eq:fixed-s}
I(X_1; C^{(s)})
\;=\;
\frac{\overline{\chi^2}\;-\;\chi^2(\bar P\Vert Q)}{\,2(s+1)\,}
\;+\;
O\!\big((s+1)^{-3/2}\big).
\end{equation}

That is, when one input-dependent sample is mixed with $s$ independent $Q$-samples, the leakage about $X_1$ decays as $\Theta(1/s)$, and its leading coefficient is exactly the ``variance'' quantity $\overline{\chi^2}-\chi^2(\bar P\Vert Q)$.
\end{lemma}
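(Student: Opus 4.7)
The plan is to re-parameterize both the conditional distribution $P_{C^{(s)}\mid X_1=x}$ and the marginal $P_{C^{(s)}}$ relative to a common reference distribution $N^{(s+1)} \sim \mathrm{Mult}(s+1;Q)$, so that the likelihood ratio becomes a simple linear statistic in the multinomial counts whose moments are transparent to compute. First I would establish the size-biasing identity $\Pr[M^{(s)} = c - e_y] = \frac{c_y}{(s+1)\,Q(y)}\,\Pr[N^{(s+1)}=c]$ by comparing multinomial coefficients (both sides vanish when $c_y=0$). Summing against $\mathcal R_x(y)$ gives $P_{C^{(s)}\mid X_1=x}(c) = A_x(c)\,\Pr[N^{(s+1)}=c]$, where $A_x(c) := \frac{1}{s+1}\sum_y w_x(y)\,c_y$ with $w_x(y):=\mathcal R_x(y)/Q(y)$. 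Mixing over $x\sim P$ yields $P_{C^{(s)}}(c) = \bar A(c)\,\Pr[N^{(s+1)}=c]$ with $\bar A(c) = \mathbb{E}_{x\sim P}[A_x(c)]$. The log-likelihood ratio is therefore exactly $\log(A_x/\bar A)$, and $\mathbb{E}_c[A_x] = \mathbb{E}_c[\bar A] = 1$ under $N^{(s+1)}$.

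Next I would use the change-of-measure identity $\mathbb{E}_{C\mid x}[f(C)] = \mathbb{E}_{c\sim N^{(s+1)}}[A_x(c)\,f(c)]$ to rewrite $I(X_1;C^{(s)}) = \mathbb{E}_{c\sim N^{(s+1)}}\,\mathbb{E}_{x\sim P}\bigl[A_x\log(A_x/\bar A)\bigr]$, and then expand the integrand around $A_x=\bar A=1$. Writing $a := A_x-1$ and $b := \bar A-1$ (both of order $O_p((s+1)^{-1/2})$ by multinomial variance), a direct second-order expansion of $(1+a)\log\tfrac{1+a}{1+b}$ yields $(a-b) + \tfrac12(a-b)^2 + R$, where $R$ collects the cubic and higher-order terms. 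The key structural observation is that averaging over $x\sim P$ kills the linear term, since $\bar A = \mathbb{E}_x[A_x]$ forces $\mathbb{E}_x[a-b]=0$. Hence only $\tfrac12\,\mathbb{E}_x\mathbb{E}_c[(A_x-\bar A)^2]$ survives at the leading order.

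The second-moment calculation is then routine. Using the multinomial covariance $\mathrm{Cov}_c(c_y,c_{y'}) = (s+1)\bigl(Q(y)\delta_{yy'}-Q(y)Q(y')\bigr)$, I obtain $\mathrm{Var}_c(A_x) = \chi^2(\mathcal R_x\Vert Q)/(s+1)$ and $\mathrm{Var}_c(\bar A) = \chi^2(\bar P\Vert Q)/(s+1)$. Bilinearity of covariance together with $\bar A=\mathbb{E}_{x'\sim P}[A_{x'}]$ gives $\mathbb{E}_x[\mathrm{Cov}_c(A_x,\bar A)] = \mathrm{Var}_c(\bar A)$, so $\mathbb{E}_x\,\mathrm{Var}_c(A_x-\bar A) = \mathbb{E}_x\mathrm{Var}_c(A_x) - \mathrm{Var}_c(\bar A) = (\overline{\chi^2}-\chi^2(\bar P\Vert Q))/(s+1)$. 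Halving recovers the claimed leading coefficient.

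The main obstacle, and the only step requiring real care, is controlling the cubic remainder $R$ uniformly in $(x,c)$. The boundedness hypothesis on the likelihood ratios $w_x(y)$ forces $A_x$ and $\bar A$ to lie in a bounded interval; as convex combinations of the $\bar w(y)$ (positive on $\mathrm{Supp}(Q)$), both $A_x$ and $\bar A$ are also bounded away from zero, making the Taylor expansion of $\log(1+b)$ uniformly valid (any residual bad event is truncated and absorbed into the error). Rosenthal's inequality (Theorem~\ref{theorem_Rosenthal_ineq}), applied exactly as in the proofs of Theorems~\ref{theorem2} and~\ref{theorem_IK_general1}, gives $\mathbb{E}_c|a|^3,\,\mathbb{E}_c|b|^3 = O((s+1)^{-3/2})$, and the mixed cubic moments $\mathbb{E}_c|a^2b|,\,\mathbb{E}_c|ab^2|$ follow by Cauchy–Schwarz. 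Summing yields the stated $O((s+1)^{-3/2})$ error and completes the expansion.
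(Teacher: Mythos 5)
Your proposal is correct, and it reaches the paper's expansion by a mildly but genuinely different organization of the same underlying ingredients. The paper first computes the posterior $\Pr(X_1=x\mid C^{(s)})\propto P(x)\sum_y C^{(s)}_y w_x(y)$ (its Fact~\ref{fact1} is exactly your size-biasing identity), then Taylor-expands the posterior-vs-prior divergence $D(\pi\Vert P)$ (Lemma~\ref{lemma3}) and evaluates $\mathbb{E}[\mathrm{Var}_{x\sim P}(A_x^{(s)})]$ by splitting $A_x^{(s)}$ into an $s$-sample multinomial fluctuation plus a single-draw term, which yields $s(\overline{\chi^2}-\chi^2(\bar P\Vert Q))+O(1)$ (Lemma~\ref{lemma_main}). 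You instead push the same identity one step further and re-express both $P_{C^{(s)}\mid X_1=x}$ and $P_{C^{(s)}}$ as densities $A_x$ and $\bar A$ against the reference law $\mathrm{Mult}(s+1;Q)$, write $I(X_1;C^{(s)})=\mathbb{E}_{c}\mathbb{E}_{x}[A_x\log(A_x/\bar A)]$, and expand $(1+a)\log\frac{1+a}{1+b}$ directly; this is equivalent to the paper's route (indeed $\mathbb{E}_x[A_x\log(A_x/\bar A)]=\bar A\,D(\pi\Vert P)$), but it buys you an \emph{exact} leading second moment, since under $\mathrm{Mult}(s+1;Q)$ one gets $\mathrm{Var}_c(A_x)=\chi^2(\mathcal R_x\Vert Q)/(s+1)$ and $\mathbb{E}_x\mathbb{E}_c[(A_x-\bar A)^2]=(\overline{\chi^2}-\chi^2(\bar P\Vert Q))/(s+1)$ with no $O(1)$ correction, so the separate bookkeeping of the single-draw term disappears. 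Your linear-term cancellation ($\mathbb{E}_x[a-b]=0$ pointwise in $c$) and the Rosenthal/H\"older treatment of the cubic remainder match the paper's error control and give the same $O((s+1)^{-3/2})$ rate.

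One justification in your remainder step is stated imprecisely: $A_x$ is a convex combination of the values $w_x(y)$ (not of $\bar w(y)$), and these may vanish for some $y$, so $A_x$ and $\bar A$ are \emph{not} deterministically bounded away from zero. The fix is the one you already gesture at: since $A_x$ is an average of $(s+1)$ i.i.d.\ bounded terms with mean $1$, the event $\{A_x\le 1/2\text{ or }\bar A\le 1/2\}$ has exponentially small probability (union over the finitely many $x$), and on that event the $x$-averaged integrand $\mathbb{E}_x[A_x\log(A_x/\bar A)]=\bar A\,D(\pi\Vert P)\le W_{\max}\log(1/\min_x P(x))$ is bounded by a constant, so the truncated contribution is negligible; on the complement the log expansions are uniformly valid. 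With that sentence made explicit, the argument is complete.
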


\begin{proof}
Given $C^{(s)}=e_{Y_1}+M^{(s)}$, Bayes’ rule implies (Fact \ref{fact1} in Appendix \ref{appendix_3})
\[
\Pr(X_1=x\mid C^{(s)}) \;\propto\; P(x)\,T_x^{(s)},\qquad
T_x^{(s)}:=\sum_{y} C^{(s)}_y\,w_x(y).
\]
We write
\[
T_x^{(s)}=(s+1)+A_x^{(s)},\qquad
A_x^{(s)}:=\underbrace{\sum_y\big(M^{(s)}_y-sQ(y)\big)\,w_x(y)}_{\text{multinomial fluctuation}}
\;+\;\underbrace{\big(w_x(Y_1)-1\big)}_{\text{single-draw term}}.
\]
Define
\[
\varepsilon_x^{(s)}:=\frac{A_x^{(s)}}{s+1},\qquad
\bar\varepsilon^{(s)}:=\sum_u P(u)\,\varepsilon_u^{(s)},
\]
so that
\[
\pi(x\mid C^{(s)})=\frac{P(x)\,(1+\varepsilon_x^{(s)})}{1+\bar\varepsilon^{(s)}}.
\]

A discrete Taylor expansion of $D(\pi\Vert P)$ around $P$ yields (Lemma \ref{lemma3} in Appendix~\ref{appendix_3})
\begin{equation}\label{eq:PT}
D\big(\pi(\cdot\mid C^{(s)})\Vert P\big)
=\frac{1}{2}\sum_x P(x)\big(\varepsilon_x^{(s)}-\bar\varepsilon^{(s)}\big)^2
\;+\;R(C^{(s)}),
\qquad |R(C^{(s)})|\le K\,\|\varepsilon^{(s)}\|_3^3,
\end{equation}
where $\|\varepsilon^{(s)}\|_3^3:=\sum_x P(x)\,|\varepsilon_x^{(s)}|^3$ and $K$ is a constant.  
Taking expectations and using $I(X_1;C^{(s)})=\mathbb E[D(\pi\Vert P)]$, we obtain
\begin{equation}\label{eq:I-split}
I(X_1;C^{(s)})
=\frac{1}{2}\,\mathbb E\!\left[\sum_x P(x)\big(\varepsilon_x^{(s)}-\bar\varepsilon^{(s)}\big)^2\right]
+\mathbb E\,R(C^{(s)}).
\end{equation}

Since $\bar\varepsilon^{(s)}=\sum_x P(x)\varepsilon_x^{(s)}$,\footnote{It is important to clarify the meaning of the notation
\(\mathrm{Var}_{x\sim P}(\cdot)\) appearing in this expression.
Here the randomness is \emph{not} in the input variable \(X_1\).
Instead, once the histogram \(C^{(s)}\) is fixed, each quantity
\(A_x^{(s)}\) becomes a deterministic function of \(x\), and the distribution
\(P(x)\) is used only as a weighting measure on the index set \(\mathbb X\).
Thus
\[
\mathrm{Var}_{x\sim P}\!\big(A_x^{(s)}\big)
:=\sum_x P(x)\Big(A_x^{(s)}-\sum_u P(u)A_u^{(s)}\Big)^2
\]
is simply the variance of the \emph{function} \(x\mapsto A_x^{(s)}\) when the
index \(x\) is sampled from \(P\).  
With this interpretation, the identity above is just an algebraic rewriting of
\(\bar\varepsilon^{(s)}=\sum_x P(x)\varepsilon_x^{(s)}\), together with
\(\varepsilon_x^{(s)}=A_x^{(s)}/(s+1)\).}
\[
\sum_x P(x)\big(\varepsilon_x^{(s)}-\bar\varepsilon^{(s)}\big)^2
=\frac{1}{(s+1)^2}\,\mathrm{Var}_{x\sim P}\!\big(A_x^{(s)}\big),
\]
Hence
\begin{equation}\label{eq:I-var}
I(X_1;C^{(s)})
=\frac{1}{2(s+1)^2}\,\mathbb E\!\Big[\mathrm{Var}_{x\sim P}\!\big(A_x^{(s)}\big)\Big]
\;+\;\mathbb E\,R(C^{(s)}).
\end{equation}

According to Lemma \ref{lemma_main} in Appendix \ref{appendix_3}, $$\mathbb E\!\big[\mathrm{Var}_{x\sim P}(A_x^{(s)})\big]
=s\big(\overline{\chi^2}-\chi^2(\bar P\Vert Q)\big)\;+\;O(1).$$
From \eqref{eq:PT} and \eqref{eq:I-split}, it remains to bound $\mathbb E\|\varepsilon^{(s)}\|_3^3$.
Since $\varepsilon_x^{(s)}=A_x^{(s)}/(s+1)$ and $A_x^{(s)}$ is the sum of $s$ centered bounded multinomial increments plus the bounded term $w_x(Y_1)-1$, a standard moment bound (e.g., Rosenthal inequality) implies $\mathbb E|A_x^{(s)}|^3=O(s^{3/2})$ uniformly in $x$.  
Hence
\[
\mathbb E\|\varepsilon^{(s)}\|_3^3
=\sum_x P(x)\,\mathbb E\!\left[\left|\frac{A_x^{(s)}}{s+1}\right|^3\right]
=O\big((s+1)^{-3/2}\big),
\]
and therefore $\mathbb E\,R(C^{(s)})=O((s+1)^{-3/2})$.

Substituting \eqref{eq:VarA-final} into \eqref{eq:I-var} yields
\[
I(X_1;C^{(s)})
=\frac{1}{2(s+1)^2}\Big\{s\big(\overline{\chi^2}-\chi^2(\bar P\Vert Q)\big)+O(1)\Big\}
+O\big((s+1)^{-3/2}\big)
=\frac{\overline{\chi^2}-\chi^2(\bar P\Vert Q)}{2(s+1)}+O\big((s+1)^{-3/2}\big),
\]
which is precisely \eqref{eq:fixed-s}.
\end{proof}

\begin{theorem}\label{theorem_shuffle_IX}
In the shuffle-DP setting, for any fixed $\boldsymbol{x}_{-1} \in \mathbb{X}^{n-1}$,
\[
I\!\bigl(X_1;\boldsymbol{Z}\mid \boldsymbol{X}_{-1}= \boldsymbol{x}_{-1}\bigr)
\;\le\;
\frac{e^{\varepsilon_0}-1}{2n}
+ O(n^{-3/2}).
\]
Taking expectation over $\boldsymbol{X}_{-1}$ yields
\[
I(X_1;\boldsymbol{Z}\mid \boldsymbol{X}_{-1})
\;\le\;
\frac{e^{\varepsilon_0}-1}{2n}
+ O(n^{-3/2}).
\]
\end{theorem}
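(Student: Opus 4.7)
My plan is to chain together the blanket reduction of Theorem~\ref{theorem_blanket2}, the information-equivalence between a uniform shuffle and its histogram, and the sharp asymptotic formula of Lemma~\ref{lem:fixed-s}; the final bound will then reduce to a one-line algebraic consequence of $\varepsilon_0$-LDP.

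First, I would invoke Corollary~\ref{corollary2} to pass to the blanket-mix instance,
\begin{equation*}
I(X_1;\boldsymbol{Z}\mid \boldsymbol{X}_{-1}=\boldsymbol{x}_{-1}) \;\le\; I(X_1;\boldsymbol{Z}^{r}\mid \boldsymbol{X}_{-1}=\boldsymbol{x}_{-1}),
\end{equation*}
where in $\boldsymbol{Z}^{r}$ only user~1 contributes a genuine message $Y_1=\mathcal{R}(X_1)$ and the remaining $n-1$ messages are i.i.d.\ draws from $\bar{Q}^{\mathrm{B}}$, independent of $(X_1,Y_1)$. Because the shuffler applies a uniformly random permutation independent of the messages, $\boldsymbol{Z}^{r}$ is information-equivalent to its histogram $C^{(n-1)}=e_{Y_1}+M^{(n-1)}$ with $M^{(n-1)}\sim \mathrm{Mult}(n-1;\bar{Q}^{\mathrm{B}})$. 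I would then apply Lemma~\ref{lem:fixed-s} with $s=n-1$, $Q=\bar{Q}^{\mathrm{B}}$, and with $P$ equal to the conditional law of $X_1$ given $\boldsymbol{X}_{-1}=\boldsymbol{x}_{-1}$, obtaining
\begin{equation*}
I(X_1;C^{(n-1)}) \;=\; \frac{\overline{\chi^2} \;-\; \chi^2(\bar P\Vert \bar{Q}^{\mathrm{B}})}{2n} \;+\; O(n^{-3/2}),
\end{equation*}
where $\overline{\chi^2}=\mathbb{E}_{x\sim P}[\chi^2(\mathcal{R}_x\Vert \bar{Q}^{\mathrm{B}})]$.

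The key step is bounding the numerator. Using $\varepsilon_0$-LDP, $\mathcal{R}_x(y)\le e^{\varepsilon_0}\inf_{x'}\mathcal{R}_{x'}(y)=e^{\varepsilon_0}\bar{Q}^{\mathrm{B}}(y)$ for every $y\in\mathbb{Y}$, whence
\begin{equation*}
\chi^2(\mathcal{R}_x\Vert \bar{Q}^{\mathrm{B}}) \;=\; \sum_{y} \frac{\mathcal{R}_x(y)^2}{\bar{Q}^{\mathrm{B}}(y)} - 1 \;\le\; e^{\varepsilon_0}\sum_y \mathcal{R}_x(y) - 1 \;=\; e^{\varepsilon_0}-1.
\end{equation*}
Averaging over $x\sim P$ gives $\overline{\chi^2}\le e^{\varepsilon_0}-1$, and since $\chi^2(\bar P\Vert \bar{Q}^{\mathrm{B}})\ge 0$, the leading numerator is at most $e^{\varepsilon_0}-1$. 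This yields the asserted bound for each fixed $\boldsymbol{x}_{-1}$, and taking expectation over $\boldsymbol{X}_{-1}$ preserves it.

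The main technical obstacle I anticipate is reconciling the positivity hypothesis of Lemma~\ref{lem:fixed-s} with the actual support of $\bar{Q}^{\mathrm{B}}$, since $\bar{Q}^{\mathrm{B}}(y)=\inf_{x'}\mathcal{R}_{x'}(y)$ may vanish for some $y\in\mathbb{Y}$. Fortunately, $\varepsilon_0$-LDP forces all $\mathcal{R}_x$ to share a common support, so restricting the effective alphabet to $\{y:\inf_{x'}\mathcal{R}_{x'}(y)>0\}\cup\{\perp\}$ entails no loss, and on this restricted alphabet $\bar{Q}^{\mathrm{B}}$ is strictly positive and every $\mathcal{R}_x$ remains a valid distribution. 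Beyond this bookkeeping, the only genuinely new inequality is the chi-square estimate above, which is essentially a direct consequence of the LDP ratio bound.
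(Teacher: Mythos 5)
Your proposal is correct and follows essentially the same route as the paper: reduce via Corollary~\ref{corollary2} to the blanket-mix instance, pass to the histogram, apply Lemma~\ref{lem:fixed-s} with $s=n-1$ and $Q=\bar{Q}^{\mathrm{B}}$, and bound $\overline{\chi^2}\le e^{\varepsilon_0}-1$ (dropping the nonnegative $\chi^2(\bar P\Vert \bar{Q}^{\mathrm{B}})$ term) using the LDP ratio bound. Your explicit handling of the possible zeros of $\bar{Q}^{\mathrm{B}}$ on $\mathbb{Y}$ is a small bookkeeping point the paper leaves implicit, and it is resolved correctly.
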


\begin{proof}
By the definition of the generalized blanket distribution $\bar{Q}^{\mathrm{B}}$ and the $\varepsilon_0$-DP property of $\mathcal{R}$,
\[
\forall x\in\mathbb{X}, \quad
\chi^2(\mathcal{R}_x \Vert \bar{Q}^{\mathrm{B}})
= \sum_y \mathcal{R}_x(y)\frac{\mathcal{R}_x(y)}{\bar{Q}^{\mathrm{B}}(y)} - 1
\le \sum_y \mathcal{R}_x(y)e^{\varepsilon_0} - 1
= e^{\varepsilon_0}-1,
\]
Hence 
\[
\overline{\chi^2} - \chi^2(\bar P \Vert Q)
\;\le\;
\overline{\chi^2}=\mathbb{E}_x[\chi^2(\mathcal{R}_{x} \Vert \bar{Q}^{\mathrm{B}})]
\;\le\;
e^{\varepsilon_0}-1.
\]
Applying Lemma~\ref{lem:fixed-s} with $s=n-1$ and then using Corollary~\ref{corollary2} gives
\[
I\!\bigl(X_1;\boldsymbol{Z}\mid \boldsymbol{X}_{-1}=\boldsymbol{x}_{-1}\bigr)
\;\le\;
I\!\bigl(X_1;\boldsymbol{Z}^{r}\mid \boldsymbol{X}_{-1}=\boldsymbol{x}_{-1}\bigr)
\;\le\;
\frac{e^{\varepsilon_0}-1}{2n}
+ O(n^{-3/2}).\qedhere
\]
\end{proof}

\begin{corollary}\label{corollary_shuffle_IX}
In the shuffle-DP setting, if $X_1$ is independent of $\boldsymbol{X}_{-1}$, then
\[
I(X_1;\boldsymbol{Z})
\;\le\;
\frac{e^{\varepsilon_0}-1}{2n}
+ O(n^{-3/2}).
\]
\end{corollary}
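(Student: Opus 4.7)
The plan is to deduce the unconditional bound from Theorem~\ref{theorem_shuffle_IX} by showing that, under the independence assumption $X_1 \perp \boldsymbol{X}_{-1}$, conditioning on $\boldsymbol{X}_{-1}$ can only increase the mutual information with $X_1$. This is exactly parallel to the argument used in Corollary~\ref{corollary_shuffle_IK} for $I(K;\boldsymbol{Z})$, except that there independence was automatic (since $K$ is independent of $\boldsymbol{X}$ by design), whereas here it is imposed by hypothesis.

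First I would apply the chain rule for mutual information in two different ways to the triple $(X_1, \boldsymbol{Z}, \boldsymbol{X}_{-1})$, yielding
\[
I(X_1; \boldsymbol{X}_{-1}) + I(X_1; \boldsymbol{Z} \mid \boldsymbol{X}_{-1})
\;=\; I(X_1; \boldsymbol{Z}, \boldsymbol{X}_{-1})
\;=\; I(X_1; \boldsymbol{Z}) + I(X_1; \boldsymbol{X}_{-1} \mid \boldsymbol{Z}).
\]
The independence assumption gives $I(X_1; \boldsymbol{X}_{-1}) = 0$, so rearranging yields
\[
I(X_1; \boldsymbol{Z}) \;=\; I(X_1; \boldsymbol{Z} \mid \boldsymbol{X}_{-1}) \;-\; I(X_1; \boldsymbol{X}_{-1} \mid \boldsymbol{Z}).
\]
By non-negativity of mutual information, $I(X_1; \boldsymbol{X}_{-1} \mid \boldsymbol{Z}) \ge 0$, so $I(X_1;\boldsymbol{Z}) \le I(X_1;\boldsymbol{Z}\mid \boldsymbol{X}_{-1})$.

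Finally I would invoke Theorem~\ref{theorem_shuffle_IX} (the averaged-over-$\boldsymbol{X}_{-1}$ form) to conclude
\[
I(X_1;\boldsymbol{Z}) \;\le\; I(X_1;\boldsymbol{Z}\mid \boldsymbol{X}_{-1}) \;\le\; \frac{e^{\varepsilon_0}-1}{2n} + O(n^{-3/2}).
\]
There is no serious obstacle here: the entire argument is a two-line consequence of the chain rule together with the independence hypothesis, and the asymptotic content is fully inherited from Theorem~\ref{theorem_shuffle_IX}. The only point worth emphasizing is \emph{why} independence is needed, namely to ensure $I(X_1;\boldsymbol{X}_{-1})=0$; without it, conditioning on $\boldsymbol{X}_{-1}$ could in principle reveal correlations between $X_1$ and the side information, making the unconditional leakage larger than the conditional one (as illustrated by Remark~\ref{remark3}).
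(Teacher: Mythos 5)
Your proposal is correct and follows essentially the same route as the paper: the paper's proof simply refers back to the chain-rule argument of Corollary~\ref{corollary_shuffle_IK}, which is exactly your derivation of $I(X_1;\boldsymbol{Z}) \le I(X_1;\boldsymbol{Z}\mid \boldsymbol{X}_{-1})$ from $I(X_1;\boldsymbol{X}_{-1})=0$, followed by an appeal to Theorem~\ref{theorem_shuffle_IX}. Your closing remark on why independence is needed is also consistent with the paper's Remark~\ref{remark3}.
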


\begin{proof}
The argument follows the same reasoning as in Corollary~\ref{corollary_shuffle_IK}.
\end{proof}

In fact, Lemma~\ref{lem:fixed-s} also yields \emph{lower bounds} on the mutual information in the shuffle-DP setting.  
When \(X_2, X_3, \dots, X_n\) are i.i.d., the output distribution \(Q = \mathcal{R}(X_2)\) naturally plays the role of an \emph{effective blanket distribution}.  
This interpretation is illustrated by the following example.

\begin{example}\label{example_rr}
Let \( [k] := \{1,2,\dots,k\} \) and denote by \( \mathcal{U}_{[k]} \) the uniform distribution on \( [k] \). 
For any \( k \in \mathbb{N} \) and \( \varepsilon_0 > 0 \), the \(k\)-ary randomized response mechanism 
\( k\mathrm{RR}_{\varepsilon_0}: [k] \to [k] \) is defined as
\[
k\mathrm{RR}_{\varepsilon_0}(x) =
\begin{cases}
x, & \text{with probability } \displaystyle \frac{e^{\varepsilon_0}-1}{e^{\varepsilon_0} + k - 1}, \\[4pt]
y \sim \mathcal{U}_{[k]}, & \text{with probability } \displaystyle \frac{k}{e^{\varepsilon_0} + k - 1}.
\end{cases}
\]
The generalized blanket distribution of \(k\mathrm{RR}_{\varepsilon_0}\) is
\[
\bar{Q}^{\mathrm{B}}
= \frac{k}{e^{\varepsilon_0} + k - 1}\,\mathcal{U}_{[k]}
+ \frac{e^{\varepsilon_0} - 1}{e^{\varepsilon_0} + k - 1}\,\mathbbm{1}_{\perp}.
\]
For any input distribution \(P\),
\[
\overline{\chi^2}(P\Vert \bar{Q}^{\mathrm{B}})
:= \mathbb{E}_{x\sim P}\!\big[\chi^2(\mathcal{R}_x\Vert \bar{Q}^{\mathrm{B}})\big]
= \frac{e^{\varepsilon_0}}{e^{\varepsilon_0}+k-1}\,(e^{\varepsilon_0}-1).
\]
By Lemma~\ref{lem:fixed-s}, this implies the upper bound
\[
\forall \boldsymbol{x}_{-1}:
I(X_1;\boldsymbol{Z}\mid \boldsymbol{X}_{-1}= \boldsymbol{x}_{-1})
\;\le\;
\frac{\overline{\chi^2}(P\Vert \bar{Q}^{\mathrm{B}})}{2n}
+ O(n^{-3/2})
=
\frac{e^{\varepsilon_0}}{e^{\varepsilon_0}+k-1}
\cdot
\frac{e^{\varepsilon_0}-1}{2n}
+ O(n^{-3/2}).
\]

Now suppose that \(X_2, X_3, \dots, X_n\) are i.i.d.\ uniform on \([k]\).  
Then the actual blanket distribution is \( Q = k\mathrm{RR}_{\varepsilon_0}(\mathcal{U}_{[k]})=\mathcal{U}_{[k]} \). 
In this case, for any \(P\),
\[
\overline{\chi^2}(P\Vert Q)
:= \mathbb{E}_{x\sim P}\!\big[\chi^2(\mathcal{R}_x\Vert Q)\big]
= \frac{(k-1)(e^{\varepsilon_0}-1)^2}{(e^{\varepsilon_0}+k-1)^2}.
\]
When $X_1\sim P= \mathcal{U}_{[k]}$, the mixture $\bar{P}$ coincides with $Q=\mathcal{U}_{[k]}$, so $\chi^2(\bar{P}\Vert Q)=0$.
Applying Lemma~\ref{lem:fixed-s} again yields
\[
I(X_1;\boldsymbol{Z})
= \frac{(k-1)(e^{\varepsilon_0}-1)^2}{2(e^{\varepsilon_0}+k-1)^2\,n}
+ O(n^{-3/2}),
\]
showing that there exist input distributions $(X_i)$ for which this rate is asymptotically attained.

Finally, we run a numerical experiment in the shuffle-DP setting using the 4-ary randomized response mechanism with uniformly distributed inputs \(X_i\) \((i=1,2,\dots,n)\). 
The empirical mutual information \(I(X_1;\boldsymbol{Z})\), estimated via Monte Carlo with \(10^{5}\) samples, closely matches the asymptotic prediction
\(\tfrac{\overline{\chi^2}(P\Vert Q)-\chi^2(\bar{P}\Vert Q)}{2n}=\frac{\overline{\chi^2}(P\Vert Q)}{2n}\),
and lies below both the blanket-decomposition upper bound
\(\tfrac{\overline{\chi^2}(P\Vert \bar{Q}^{\mathrm{B}})}{2n}\)
and the unified bound \(\tfrac{e^{\varepsilon_0}-1}{2n}\).
These results provide empirical support for our asymptotic analysis.

\begin{figure}[t]
    \centering
    \includegraphics[width=0.5\linewidth]{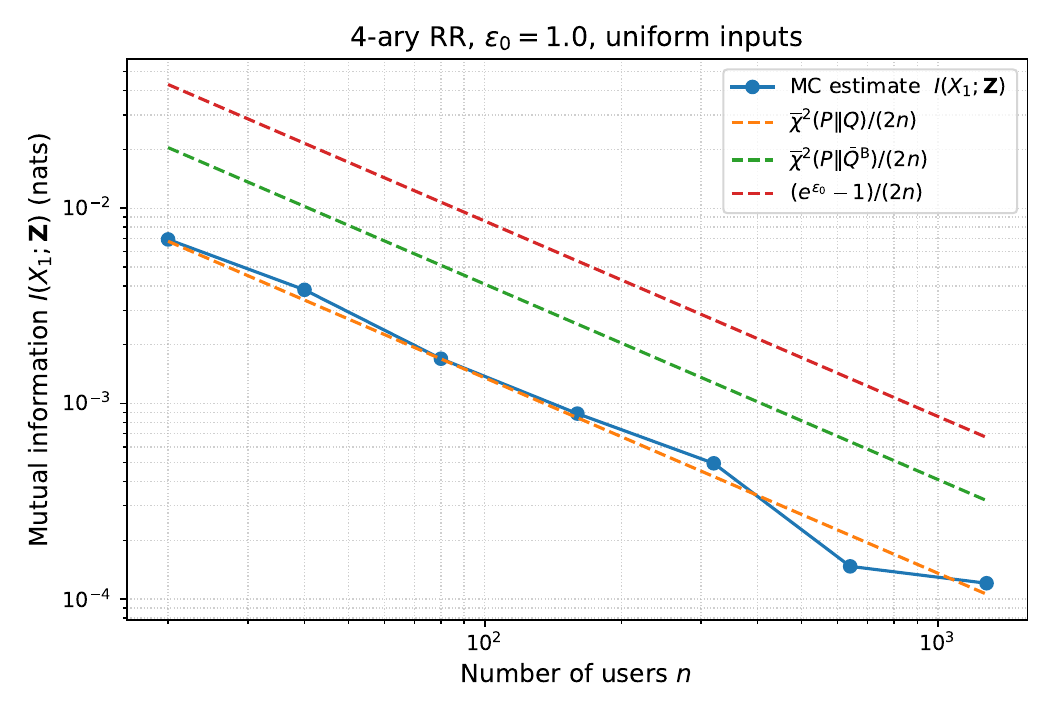}
    \caption{Mutual information in the shuffle-DP setting: numerical estimates vs.\ asymptotic bounds.}
    \label{fig:fig3}
\end{figure}
\end{example}

\section{Discussion}\label{sec_discussion}
We employ the blanket decomposition to reduce the intractable shuffle-DP channel to a more analyzable ``blanket–mix'' form.  
Another well-known decomposition is the \emph{clone decomposition}~\cite{Feldman2021}, which states that for any fixed $x_1$, the output distribution of any other input $x$ can be written as
\[
\forall x\in\mathbb{X},\qquad
\mathcal{R}(x)
= e^{-\varepsilon_0}\,\mathcal{R}(x_1)
+ \bigl(1-e^{-\varepsilon_0}\bigr)\,\mathrm{LO}(x).
\]
Under this representation, with probability $e^{-\varepsilon_0}$ an arbitrary user effectively produces a ``clone'' of the user holding $x_1$.

However, the clone decomposition is \emph{not} suitable for analyzing
\(
I(X_1;\boldsymbol{Z}\mid \boldsymbol{X}_{-1}).
\)
Indeed, after conditioning on $\boldsymbol{X}_{-1}$, the decomposition reduces the shuffled multiset to
\[
1+N \text{ samples from }\mathcal{R}(X_1),
\qquad N\sim\mathrm{Bin}(n-1,e^{-\varepsilon_0}),
\]
which, with overwhelming probability, reveals almost complete information about $X_1$ (see Remark~\ref{remark3}).  
In contrast, the blanket distribution is \emph{independent} of $X_1$, which preserves the anonymity needed for a meaningful information-theoretic analysis of $I(X_1;\boldsymbol{Z})$.

\medskip
Nonetheless, the clone decomposition remains very useful for understanding
\(I(Y_1;\boldsymbol{Z}\mid \boldsymbol{X}_{-1})\).
When our goal is to bound the leakage about the \emph{message} $Y_1$ rather than the underlying input $X_1$, the clone representation reduces the shuffle-DP mechanism to the \emph{basic shuffle-only model} (Section \ref{sec_basic_PeqQ}) with
\[
P = Q
\qquad\text{and}\qquad 
N = 1 + \mathrm{Bin}(n-1,e^{-\varepsilon_0}).
\]
Applying Theorem~\ref{theorem2} to this reduced instance gives
\[
I(Y_1;\boldsymbol{Z}\mid \boldsymbol{X}_{-1})
\;\le\;
\mathbb{E}_{N\sim\mathrm{Bin}(n-1,e^{-\varepsilon_0})}
\!\left[\frac{m-1}{2(N+1)}\right]
+ O(n^{-3/2})
\;=\;
\frac{(m-1)e^{\varepsilon_0}}{2n}
+ O(n^{-3/2}),
\]
where \(m = |\mathbb{Y}|\).
Thus the clone decomposition, although unsuitable for bounding the leakage about \emph{inputs}, provides a clean characterization of message-level privacy leakage.

It is worth noting that the blanket decomposition also applies to the analysis of 
$I(Y_1;\boldsymbol{Z}\mid \boldsymbol{X}_{-1})$, in which case the corresponding reduced setting is the basic 
shuffle-only model with $P\neq Q$.  
As shown in~\cite{su2025decompositionbasedoptimalboundsprivacy}, conditioned on $\boldsymbol{X}_{-1}$, there exists a 
post-processing function from the mixture induced by the clone reduction to that induced by the 
blanket reduction.  
Consequently, the blanket decomposition yields a tighter upper bound, although the resulting 
expressions are algebraically more involved than those obtained from the clone decomposition.

\section{Conclusion}\label{sec_conclusion}
This paper presents the first systematic information-theoretic study of the \emph{single-message shuffle model}. 
We analyze both the shuffle-only and shuffle-DP settings, deriving asymptotic expressions for the mutual information between the shuffled output and either the target user’s message position \(I(K;\boldsymbol{Z})\) or the target user’s input \(I(X_1;\boldsymbol{Z})\). 
Our results reveal how anonymization (through shuffling) and randomization (through local differential privacy) jointly limit information leakage. 
From a methodological standpoint, we extend the \emph{blanket decomposition} technique to the mutual-information setting, providing a unified analytical framework for quantifying privacy in multi-user systems. 
This framework may facilitate further research on information-theoretic approaches to the shuffle model and its extensions.

\medskip
We conclude by highlighting two open problems that may inspire future research:
\begin{itemize}
    \item \textbf{Alternative analyses for the general shuffle-only setting.}
    The general shuffle-only scenario remains both intriguing and technically challenging. 
    In this work, we employ the \emph{blanket decomposition} technique to reduce the analysis to a simpler, tractable form. 
    This reduction relies on the existence of a common component among the remaining users' distributions to obscure the target user. 
    An open question is whether alternative analytical frameworks can be developed to study this setting directly without relying on such a decomposition.
    
    \item \textbf{Tight and non-asymptotic bounds on information leakage.}
    The residual terms in our asymptotic bounds depend on the distributions \(P, Q\) in the shuffle-only setting and on the local randomizer \(\mathcal{R}\) in the shuffle-DP setting. 
    This dependence appears inevitable in the shuffle-only case, as it is inherently related to the support size of \(P\) and \(Q\) (see Theorem~\ref{theorem2}). 
    However, for the shuffle-DP regime, it remains unclear whether one can establish a \emph{non-asymptotic}, distribution-independent upper bound on \(I(X_1;\boldsymbol{Z}\mid \boldsymbol{X}_{-1})\). 
    Resolving this question would offer a deeper information-theoretic understanding of the shuffle model and its privacy amplification behavior.
\end{itemize}

\bibliographystyle{IEEEtran}
\bibliography{shuffle}

{\appendices
\section{Proof of Theorem \ref{theorem_posterior}}\label{appendix1}
\begin{proof}
Recall that the shuffling process permutes the users’ messages according to a random permutation $\sigma$, uniformly sampled from the permutation group $\mathcal{S}_n$.  
Let $\boldsymbol{y}=(y_1,\dots,y_n)$ denote the original (unshuffled) message sequence.  
By definition, the observed shuffled output satisfies $z_i=y_{\sigma(i)}$ for each $i$.

The posterior probability of $K=k$ given $\boldsymbol{Z}=\boldsymbol{z}$ can be written as
\begin{align}
\Pr[K = k \mid \boldsymbol{Z}=\boldsymbol{z}]
&= \Pr[K = k \mid \boldsymbol{y} = \boldsymbol{z}_{\sigma^{-1}}], \label{eq:posterior_start}
\end{align}
since knowing $\boldsymbol{Z}=\boldsymbol{z}$ is equivalent to knowing $\boldsymbol{y}$ under the inverse permutation.

Using the definition of conditional probability,
\begin{align}
\Pr[K = k \mid \boldsymbol{y} = \boldsymbol{z}_{\sigma^{-1}}]
&= \frac{\Pr[K = k \wedge \boldsymbol{y} = \boldsymbol{z}_{\sigma^{-1}}]}{\Pr[\boldsymbol{y} = \boldsymbol{z}_{\sigma^{-1}}]}. \label{eq:bayes}
\end{align}
We next express the numerator and denominator using the law of total probability over all possible permutations $\sigma_1 \in \mathcal{S}_n$:
\begin{align}
\Pr[K = k \mid \boldsymbol{Z}=\boldsymbol{z}]
&= \frac{\Pr[K = k] \cdot \Pr[\boldsymbol{y} = \boldsymbol{z}_{\sigma^{-1}} \mid K = k]}{\sum_{\sigma_1 \in \mathcal{S}_n} \Pr[\sigma = \sigma_1] \cdot \Pr[\boldsymbol{y} = \boldsymbol{z}_{\sigma^{-1}} \mid \sigma = \sigma_1]}. \label{eq:law_total}
\end{align}

Two key facts are used here:
\begin{enumerate}
\item The position $K$ is uniformly distributed, i.e., $\Pr[K=k]=1/n$.
\item Conditioned on $K=k$, the element at position $z_k$ is drawn from $P$, whereas the remaining $z_j$’s are drawn from $Q$.
\end{enumerate}
Hence, the joint likelihood becomes
\begin{align}
\Pr[\boldsymbol{y} = \boldsymbol{z}_{\sigma^{-1}} \mid K = k]
&= P(z_k)\prod_{j\neq k} Q(z_j)
= \frac{P(z_k)}{Q(z_k)}\prod_{j=1}^n Q(z_j). \label{eq:likelihood}
\end{align}
Plugging~\eqref{eq:likelihood} into~\eqref{eq:law_total}, we obtain
\begin{align}
\Pr[K = k \mid \boldsymbol{Z}=\boldsymbol{z}]
&= \frac{\frac{1}{n}\frac{P(z_k)}{Q(z_k)}\prod_{j=1}^n Q(z_j)}
{\sum_{i=1}^n \frac{1}{n}\frac{P(z_i)}{Q(z_i)}\prod_{j=1}^n Q(z_j)}. \label{eq:posterior_fraction}
\end{align}
After canceling the common factors $\frac{1}{n}$ and $\prod_{j=1}^n Q(z_j)$ in the numerator and denominator, we obtain the normalized posterior:
\begin{align*}
\Pr[K = k \mid \boldsymbol{Z}=\boldsymbol{z}]
&= \frac{\frac{P(z_k)}{Q(z_k)}}{\sum_{i=1}^n \frac{P(z_i)}{Q(z_i)}}. \qedhere
\end{align*}
\end{proof}

\section{Details in the Proof of Theorem \ref{theorem2}}\label{appendix_2}

This section clarifies several technical steps used in the proof.  
Let \(f(t) := t\log t\) and fix an index \(i\). Define
\[
g(t) \;=\; p_i \log p_i \;+\; (\log p_i + 1)\,(t - p_i) \;+\; \frac{1}{2p_i}\,(t - p_i)^2 .
\]
By the second-order Taylor expansion of \(f\) around \(t=p_i\) with Lagrange remainder, we have
\begin{equation}\label{eq:taylor2}
f(t) \;=\; g(t) \;+\; R(t),
\end{equation}
where 
\[
R(t) \;=\; \frac{f^{(3)}(\theta)}{6}\,(t - p_i)^3 
\;=\; -\frac{1}{6\theta^2}\,(t - p_i)^3, 
\qquad \theta \in (t, p_i).
\]
As \(t \to 0\), no uniform bound on \(1/\theta^2\) is available.  
To handle this rigorously, define the event
\[
\mathcal{E} \;:=\; \Bigl\{\tfrac{X}{n} \le \tfrac{p_i}{2}\Bigr\}.
\]
By the Chernoff bound, \(\Pr[\mathcal{E}] = \mathrm{negl}(n)\),\footnote{We use the standard notation \(\phi(n)=\mathrm{negl}(n)\) to mean that for every \(t\in\mathbb{N}\), \(\lim_{n\to\infty} n^t \phi(n)=0\).} i.e., the probability of \(\mathcal{E}\) decays faster than any inverse polynomial in \(n\).

For \(t \ge p_i/2\), the remainder is uniformly bounded as
\[
|R(t)| \;\le\; \frac{2}{3p_i^2}\,|t - p_i|^3 .
\]
Therefore,
\begin{align*}
\mathbb{E}_{X \sim \mathrm{Bin}(n,p_i)}\!\Bigl[f\!\Bigl(\tfrac{X}{n}\Bigr)\Bigr]
&= \mathbb{E}\!\Bigl[g\!\Bigl(\tfrac{X}{n}\Bigr)\Bigr] + \mathbb{E}\!\Bigl[R\!\Bigl(\tfrac{X}{n}\Bigr)\Bigr] \\
&= p_i \log p_i + \frac{1-p_i}{2n}
   + \sum_{x>np_i/2} p_X(x)\, R\!\Bigl(\tfrac{x}{n}\Bigr)
   + \sum_{x\le np_i/2} p_X(x)\, R\!\Bigl(\tfrac{x}{n}\Bigr) \\
&= p_i \log p_i + \frac{1-p_i}{2n}
   + O\!\Bigl(\sum_{x>np_i/2} p_X(x)\,\Bigl|R\!\Bigl(\tfrac{x}{n}\Bigr)\Bigr|\Bigr)
   + O\!\Bigl(\sum_{x\le np_i/2} p_X(x)\,\Bigl| R\!\Bigl(\tfrac{x}{n}\Bigr)\Bigr|\Bigr) \\
&= p_i \log p_i + \frac{1-p_i}{2n}
   + O\!\Bigl(\mathbb{E}\bigl[\,\bigl|\tfrac{X}{n}-p_i\bigr|^3\,\bigr]\Bigr)
   + \mathrm{negl}(n)\cdot M.
\end{align*}
where $M=\max_{t\in[0,1]}|R(t)|=\max_{t\in[0,1]}|f(t)-g(t)|=O(1)$.

To bound \(O\!\bigl(\mathbb{E}\bigl[\,|\tfrac{X}{n}-p_i|^3\,\bigr]\bigr)\), we use the following inequality.

\begin{theorem}[Rosenthal inequality~\cite{CS20}]\label{theorem_Rosenthal_ineq}
Let \(X_1,\dots,X_n\) be independent random variables with \(\mathbb{E}|X_i|^{p}<\infty\) for some \(p\ge 2\), and set \(Y_i := X_i - \mathbb{E}X_i\). Then there exists a constant \(C_p>0\) depending only on \(p\) such that
\[
\mathbb{E}\Bigl|\sum_{i=1}^n Y_i\Bigr|^p
\;\le\;
C_p\max\Bigl\{ \,\Bigl(\sum_{i=1}^n \mathbb{E}|Y_i|^2\Bigr)^{p/2}, \sum_{i=1}^n \mathbb{E}|Y_i|^{p} \Bigr\}.
\]
Consequently, there is \(C_p'>0\) depending only on \(p\) such that
\[
\mathbb{E}\!\left|\sum_{i=1}^n X_i\right|^{p}
\;\le\;
C_p'\!\left[
\left(\sum_{i=1}^n \mathrm{Var}[X_i]  \right)^{p/2}
+\sum_{i=1}^n \mathbb{E}|X_i-\mathbb{E}X_i|^{p}
+
\left|\sum_{i=1}^n \mathbb{E}X_i\right|^{p}
\right].
\]
\end{theorem}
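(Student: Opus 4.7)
I would first derive the second form from the first: applying the $L^p$-triangle inequality to $X_i = Y_i + \mathbb{E}X_i$ and using $(a+b)^p \le 2^{p-1}(a^p+b^p)$ recovers the extra $|\sum_i \mathbb{E}X_i|^p$ term, so it suffices to prove the first (centered) inequality. For that, I would symmetrize by introducing independent copies $Y_i'$ and setting $\tilde Y_i := Y_i - Y_i'$. Jensen's inequality applied to $|\cdot|^p$ conditionally on $(Y_i)$ gives $\mathbb{E}|\sum_i Y_i|^p \le \mathbb{E}|\sum_i \tilde Y_i|^p$, and the moments of $\tilde Y_i$ are comparable to those of $Y_i$ up to factors depending only on $p$, so it is enough to control the symmetrized sum.

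\textbf{Reduction to a square function.} Each $\tilde Y_i$ is symmetric, hence $(\tilde Y_1,\dots,\tilde Y_n)$ has the same joint law as $(\varepsilon_1 \tilde Y_1,\dots,\varepsilon_n \tilde Y_n)$ for independent Rademacher signs $\varepsilon_i$. Conditioning on $(\tilde Y_i)$ and applying the Khintchine--Kahane inequality to the resulting Rademacher sum yields
\[
\mathbb{E}\Bigl|\sum_i \varepsilon_i \tilde Y_i\Bigr|^p \;\le\; K_p\,\mathbb{E}\Bigl(\sum_i \tilde Y_i^2\Bigr)^{p/2}.
\]
Thus the problem reduces to showing
\[
\mathbb{E}\Bigl(\sum_i \tilde Y_i^2\Bigr)^{p/2} \;\le\; C'\!\Bigl[\Bigl(\sum_i \mathbb{E}\tilde Y_i^2\Bigr)^{p/2} + \sum_i \mathbb{E}|\tilde Y_i|^p\Bigr]
\]
for a constant $C'$ depending only on $p$.

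\textbf{Bounding the square function (main obstacle).} Writing $\tilde Y_i^2 = \mathbb{E}\tilde Y_i^2 + U_i$ with $U_i$ independent and centered, the $L^{p/2}$-triangle inequality gives
\[
\Bigl\|\sum_i \tilde Y_i^2\Bigr\|_{p/2} \;\le\; \sum_i \mathbb{E}\tilde Y_i^2 + \Bigl\|\sum_i U_i\Bigr\|_{p/2},
\]
so everything reduces to controlling $\|\sum_i U_i\|_{p/2}$. The base case $p=2$ is immediate since independence gives $\mathbb{E}|\sum_i Y_i|^2 = \sum_i \mathrm{Var}(Y_i)$. For $p>2$ I would induct on the exponent: apply the Rosenthal bound, already established for the strictly smaller exponent $p/2$, to the independent centered sequence $(U_i)$, and use $|U_i| \le \tilde Y_i^2$ so that $\mathbb{E}|U_i|^{p/2} \le \mathbb{E}|\tilde Y_i|^p$. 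The recursion terminates after finitely many halvings (once the exponent drops into $[1,2]$, where Minkowski combined with independence suffices). The main obstacle is ensuring the constants stay finite and depend only on $p$: each recursion step inflates the constant by a factor $C_{p/2}$, and careful bookkeeping shows the resulting product closes to a finite $C_p$ that depends only on the original exponent. Once this square-function bound is in hand, combining it with the Khintchine--Kahane step and the symmetrization step of the previous paragraph yields the claimed inequality.
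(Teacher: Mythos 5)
The paper does not actually prove this statement: it imports the centered Rosenthal inequality from~\cite{CS20} and the second display follows by exactly the recentering step you give (\(X_i=Y_i+\mathbb{E}X_i\), \((a+b)^p\le 2^{p-1}(a^p+b^p)\)), so that part of your proposal is fine and matches the paper's implicit reasoning. Your attempt at the first (core) inequality follows a legitimate classical route — symmetrization, Khintchine conditionally on the values, then an induction that halves the exponent — but as sketched it has two concrete gaps in the induction.

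First, when you apply the already-established Rosenthal bound at exponent \(p/2\) to the centered variables \(U_i=\tilde Y_i^2-\mathbb{E}\tilde Y_i^2\), you only account for the term \(\sum_i\mathbb{E}|U_i|^{p/2}\) of the max; the other term is \(\bigl(\sum_i\mathbb{E}U_i^2\bigr)^{p/4}\), and \(\mathbb{E}U_i^2\le\mathbb{E}\tilde Y_i^4\) involves \emph{fourth} moments, which are not among the two quantities \(\bigl(\sum_i\mathbb{E}\tilde Y_i^2\bigr)^{p/2}\) and \(\sum_i\mathbb{E}|\tilde Y_i|^p\) you are allowed in the conclusion. Closing the recursion requires an interpolation step you never mention: by Lyapunov/H\"older, \(\mathbb{E}\tilde Y_i^4\le(\mathbb{E}\tilde Y_i^2)^{(p-4)/(p-2)}(\mathbb{E}|\tilde Y_i|^p)^{2/(p-2)}\), then H\"older over \(i\) and Young's inequality absorb \(\bigl(\sum_i\mathbb{E}\tilde Y_i^4\bigr)^{p/4}\) into the two admissible terms. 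Second, the base case is mis-justified: for exponent \(q\in[1,2)\) (which is reached immediately when \(2\le p<4\)), ``Minkowski combined with independence'' does not suffice. Minkowski gives \(\mathbb{E}|\sum_iU_i|^q\le\bigl(\sum_i(\mathbb{E}|U_i|^q)^{1/q}\bigr)^q\), which uses neither centering nor independence and has the wrong growth in \(n\) (order \(n^q\) for i.i.d.\ terms instead of \(n\)); passing through \(L^2\) via Jensen is also unavailable, since at that stage the second moments of the iterated variables correspond to moments of \(\tilde Y_i\) of order exceeding \(p\), which need not be finite. What you actually need there is the von Bahr--Esseen inequality \(\mathbb{E}|\sum_iU_i|^q\le2\sum_i\mathbb{E}|U_i|^q\) for independent centered \(U_i\) and \(1\le q\le2\) (or the elementary conditioning argument for nonnegative sums at exponents in \([1,2]\)). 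A minor further slip: \(|U_i|\le\tilde Y_i^2\) is false (take \(\tilde Y_i=0\)); the correct statement is \(\mathbb{E}|U_i|^{p/2}\le2^{p/2}\mathbb{E}|\tilde Y_i|^p\) by convexity, which is all you need. With these repairs the argument goes through, but as written the induction does not close.
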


Let \(A_1,\dots,A_n\) be i.i.d.\ Bernoulli random variables with mean \(p_i\), and write \(X=\sum_{j=1}^n A_j\). Then
\[
\left|\tfrac{X}{n} - p_i\right| \;=\; \frac{1}{n}\,\Bigl|\sum_{j=1}^n (A_j - p_i)\Bigr|.
\]
Applying Theorem~\ref{theorem_Rosenthal_ineq} with \(p=3\),
\begin{align*}
\mathbb{E}\Bigl|\sum_{j=1}^n (A_j - p_i)\Bigr|^3
\;&=\;
O\!\left(\, \Bigl(\sum_{j=1}^n \mathbb{E}[(A_j - p_i)^2]\Bigr)^{3/2}+ \sum_{j=1}^n \mathbb{E}|A_j-p_i|^3     \right)
\;\\
&=\;
O\!\left((n\,p_i(1-p_i))^{3/2}+n[p_i(1-p_i)^3+(1-p_i)p_i^3]  \right)=O(n^{3/2}).    
\end{align*}

Hence
\[
\mathbb{E}\Bigl|\tfrac{X}{n} - p_i\Bigr|^3
\;=\; \frac{1}{n^3}\,\mathbb{E}\Bigl|\sum_{j=1}^n (A_j - p_i)\Bigr|^3
\;=\; O\!\left(n^{-3/2}\right).
\]
Combining the bounds above yields the desired estimate.

\section{Details in the Proof of Theorem \ref{theorem_IK_general1}}\label{appendix_11}

\begin{lemma}\label{lem:ratio-delta}
Let \((S,T)\) be random variables (depending on the parameter \(n\)) with \(S>0\) a.s., and let
\[
\mu_S:=\mathbb{E}[S], \qquad \mu_T:=\mathbb{E}[T].
\]
Assume there exist constants \(c_1,c_2,c_3,c_4,c_5>0\), independent of \(n\), such that
\[
c_1 \le S \le c_2 n,\qquad \Pr[S\le c_3 n]=\mathrm{negl}(n),\qquad |T|\le c_4 n,\qquad 
\mathbb{E}\bigl(|S-\mu_S|^3+|T-\mu_T|^3\bigr)\le c_5 n^{3/2}.
\]
Let \(f(s,t):=t/s\). Then
\begin{equation}\label{eq:ratio-expansion2}
\mathbb{E}\!\left[\frac{T}{S}\right]
= \frac{\mu_T}{\mu_S}\;-\;\frac{\mathrm{Cov}(T,S)}{\mu_S^2}
\;+\;\frac{\mu_T\,\mathrm{Var}(S)}{\mu_S^3}\;+\;O(n^{-3/2}).
\end{equation}
\end{lemma}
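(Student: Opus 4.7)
The plan is to derive the expansion from an exact algebraic identity for $1/S$ and then control a single remainder expectation by a tail split. Iterating the elementary relation $1/S - 1/\mu_S = (\mu_S-S)/(\mu_S S)$ twice yields the identity
\[
\frac{1}{S} \;=\; \frac{1}{\mu_S} \;-\; \frac{S-\mu_S}{\mu_S^2} \;+\; \frac{(S-\mu_S)^2}{\mu_S^3} \;-\; \frac{(S-\mu_S)^3}{\mu_S^3\,S},
\]
which holds pointwise because $S \ge c_1 > 0$. Multiplying by $T$ and taking expectations, the first three summands already reproduce the main terms of the claim: $\mathbb{E}[T(S-\mu_S)] = \mathrm{Cov}(T,S)$, and $\mathbb{E}[T(S-\mu_S)^2] = \mu_T\,\mathrm{Var}(S) + \mathbb{E}[(T-\mu_T)(S-\mu_S)^2]$, where the cross piece is to be absorbed into the $O(n^{-3/2})$ error.

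Before bounding the errors I would record that $\mu_S = \Theta(n)$, which is forced by the hypotheses: combining $S \le c_2 n$ with $\Pr[S \le c_3 n] = \mathrm{negl}(n)$ and the pointwise bound $S \ge c_1$ gives $\mu_S \ge c_3 n(1-o(1)) \ge c_3 n/2$ for all large $n$, and obviously $\mu_S \le c_2 n$. With this in hand, the cross piece is controlled by H\"older's inequality together with the third-moment hypothesis,
\[
\bigl|\mathbb{E}[(T-\mu_T)(S-\mu_S)^2]\bigr| \;\le\; (\mathbb{E}|T-\mu_T|^3)^{1/3}\,(\mathbb{E}|S-\mu_S|^3)^{2/3} \;=\; O(n^{3/2}),
\]
and dividing by $\mu_S^3 = \Theta(n^3)$ produces an $O(n^{-3/2})$ contribution, as required.

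The remaining work, and the main obstacle, is to show that
\[
\mathcal{R} \;:=\; \mathbb{E}\!\left[\frac{T\,(S-\mu_S)^3}{\mu_S^3\,S}\right] \;=\; O(n^{-3/2}),
\]
even though the factor $1/S$ can blow up on the rare event $\mathcal{E}^c := \{S \le c_3 n\}$. I would handle this with a tail split. On $\mathcal{E} = \{S > c_3 n\}$, the estimates $|T| \le c_4 n$, $S \ge c_3 n$, and $\mu_S = \Theta(n)$ bound the integrand by a constant multiple of $|S-\mu_S|^3/n^3$, so the corresponding contribution is at most $O(\mathbb{E}|S-\mu_S|^3/n^3) = O(n^{-3/2})$. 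On $\mathcal{E}^c$, the crude bounds $|T| \le c_4 n$, $|S-\mu_S| \le c_2 n$, and $S \ge c_1$ make the integrand $O(n)$, and multiplying by $\Pr[\mathcal{E}^c] = \mathrm{negl}(n)$ yields a contribution of size $\mathrm{negl}(n) = o(n^{-3/2})$. Combining the three bounds in the expectation of the displayed identity gives the claimed expansion.
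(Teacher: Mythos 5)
Your proof is correct, and it reaches the expansion by a route that differs from the paper's in its core decomposition. The paper performs a second-order bivariate Taylor expansion of $f(s,t)=t/s$ at $(\mu_S,\mu_T)$ with a Lagrange remainder, and must therefore bound the third partial derivatives of $f$ along the random segment joining $(\mu_S,\mu_T)$ to $(S,T)$ before applying the same good-event/bad-event split that you use (good event $\{S> c_3 n\}$ plus third-moment bound; bad event handled by a crude polynomial bound times $\mathrm{negl}(n)$). You instead start from the exact algebraic identity
\[
\frac{1}{S}=\frac{1}{\mu_S}-\frac{S-\mu_S}{\mu_S^{2}}+\frac{(S-\mu_S)^{2}}{\mu_S^{3}}-\frac{(S-\mu_S)^{3}}{\mu_S^{3}S},
\]
so the remainder is explicit and no derivative bounds along a segment are needed; the price is the extra cross term $\mathbb{E}[(T-\mu_T)(S-\mu_S)^2]/\mu_S^3$, which you correctly absorb via H\"older with exponents $3$ and $3/2$ and the assumed third-moment bound. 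Your explicit verification that $\mu_S=\Theta(n)$ follows from the hypotheses (via $\mu_S\ge c_3 n\,\Pr[S>c_3 n]$) is a nice touch: the paper uses this fact implicitly when bounding the third partials on the segment, since the segment's $s$-coordinate also involves $\mu_S$. Both arguments deliver the same $O(n^{-3/2})$ error; yours is somewhat more elementary and self-contained, while the paper's delta-method formulation generalizes more readily to other smooth functions of $(S,T)$.
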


\begin{proof}
Set
\[
\Delta_S:=S-\mu_S,\qquad \Delta_T:=T-\mu_T,
\]
so that \(\mathbb{E}[\Delta_S]=\mathbb{E}[\Delta_T]=0\),
\(\mathrm{Var}(S)=\mathbb{E}[\Delta_S^2]\),
\(\mathrm{Var}(T)=\mathbb{E}[\Delta_T^2]\), and
\(\mathrm{Cov}(S,T)=\mathbb{E}[\Delta_S\Delta_T]\).

For \(f(s,t)=t/s\) on \((0,\infty)\times\mathbb{R}\),
\[
f_s(s,t)=-\frac{t}{s^2},\quad f_t(s,t)=\frac{1}{s},\quad
f_{ss}(s,t)=\frac{2t}{s^3},\quad f_{st}(s,t)=-\frac{1}{s^2},\quad f_{tt}(s,t)=0.
\]
Consider the quadratic Taylor polynomial at \((\mu_S,\mu_T)\):
\[
g(S,T)
= f(\mu_S,\mu_T)
+ f_s(\mu_S,\mu_T)\,\Delta_S
+ f_t(\mu_S,\mu_T)\,\Delta_T
+ \frac{1}{2}\!\left(
f_{ss}(\mu_S,\mu_T)\,\Delta_S^2
+ 2 f_{st}(\mu_S,\mu_T)\,\Delta_S\Delta_T
+ f_{tt}(\mu_S,\mu_T)\,\Delta_T^2\right).
\]
By the two-variable Taylor formula with Lagrange remainder,
\begin{equation}\label{eq:taylor3}
f(S,T) \;=\; g(S,T) + R,
\end{equation}
where the third-order remainder satisfies
\begin{equation}\label{eq:R-bound}
|R|\;\le\; \frac{1}{6}\,\Big(\sup_{\mathcal{N}} \|D^{3} f\|\Big)\,
\bigl(|\Delta_S|+|\Delta_T|\bigr)^3,
\end{equation}
with the supremum taken over the line segment
\(\mathcal{N}=\{(\mu_S,\mu_T)+\theta(\Delta_S,\Delta_T):\theta\in[0,1]\}\), and \(\|D^{3} f\|\) any operator norm of the array of third partials. The nonzero third partials are \(f_{sss}(s,t)=-6t/s^{4}\) and \(f_{sst}(s,t)=2/s^{3}\).

From \eqref{eq:taylor3} and centering, the linear terms vanish, giving
\begin{align*}
\mathbb{E}\!\left[\frac{T}{S}\right]
&= f(\mu_S,\mu_T) 
+ \frac{1}{2}\!\left(
f_{ss}(\mu_S,\mu_T)\,\mathrm{Var}(S)
+ 2 f_{st}(\mu_S,\mu_T)\,\mathrm{Cov}(S,T)
+ f_{tt}(\mu_S,\mu_T)\,\mathrm{Var}(T)\right)
+ \mathbb{E}[R] \\
&= \frac{\mu_T}{\mu_S}
- \frac{\mathrm{Cov}(T,S)}{\mu_S^{2}}
+ \frac{\mu_T\,\mathrm{Var}(S)}{\mu_S^{3}}
+ \mathbb{E}[R].
\end{align*}

\para{Bounding the remainder \(\mathbb{E}[R]\).}
Let \(\mathcal{E}=\{S\ge c_3 n\}\). On \(\mathcal{E}\), since \(|T|\le c_4 n\) and \(S\ge c_3n\), the nonzero third partial derivatives of \(f(s,t)=t/s\) satisfy
\[
\sup_{\mathcal{N}}\|D^{3}f\|
=\max\!\left\{\sup_{\mathcal{N}}\bigl|f_{sss}\bigr|,\ \sup_{\mathcal{N}}\bigl|f_{sst}\bigr|\right\}
\le \max\{\frac{6c_4}{c_3^4n^3},\frac{2}{c_3^3n^3} \} =O(n^{-3}).
\]
Hence, by \eqref{eq:R-bound},
\[
\mathbb{E}\bigl[|R|\mathbf{1}_{\mathcal{E}}\bigr]
\;\le\; O(n^{-3})\,\mathbb{E}\bigl[(|\Delta_S|+|\Delta_T|)^3\bigr]
\;=\; O(n^{-3})\cdot O(n^{3/2})
\;=\; O(n^{-3/2}),
\]
using the moment assumption.

On \(\mathcal{E}^{c}\), we use a crude polynomial bound. Since \(S\ge c_1>0\) and \(|T|\le c_4 n\), we have
\begin{align*} |f(S,T)|&\le \frac{c_4}{c_1}n,\\ |g(S,T)|&\le |f(\mu_S,\mu_T)| + |f_s(\mu_S,\mu_T)\,\Delta_S| + |f_t(\mu_S,\mu_T)\,\Delta_T| + \frac{1}{2}\!\left( |f_{ss}(\mu_S,\mu_T)\,\Delta_S^2| + 2 |f_{st}(\mu_S,\mu_T)\,\Delta_S\Delta_T| + |f_{tt}(\mu_S,\mu_T)\,\Delta_T^2|\right)\\ &\le \frac{c_4}{c_1}n +\frac{c_4n}{c_1^2}\cdot c_2n+\frac{c_4n}{c_1}+\frac{1}{2}(\frac{2c_4n}{c_1^3}\cdot (c_2n)^2+\frac{2}{c_1^2}\cdot c_2c_4n^2)\\ &=O(n^3). \end{align*}
Consequently,
\[
\mathbb{E}\bigl[|R|\mathbf{1}_{\mathcal{E}^{c}}\bigr]
\;\le\; \mathbb{E}\bigl[(|f(S,T)|+|g(S,T)|)\mathbf{1}_{\mathcal{E}^{c}}\bigr]
\;\le\; O(n^3)\,\Pr[\mathcal{E}^{c}]
\;=\; \mathrm{negl}(n).
\]
Combining the two parts yields \(\mathbb{E}[R]=O(n^{-3/2})\), which establishes \eqref{eq:ratio-expansion2}.
\end{proof}

It remains to verify that the assumptions of Lemma~\ref{lem:ratio-delta} hold for the random variables
\[
S:=\sum_{j=1}^n w(Z_j),\qquad 
T:=\sum_{j=1}^n w(Z_j)\log w(Z_j),\qquad 
\boldsymbol{Z}=(Z_1,\dots,Z_n).
\]
Among the \(n\) summands in \(S\), exactly one term is \(W^{(P)}:=w(Y_1)\) with \(Y_1\sim P\), while the remaining \(n-1\) terms are i.i.d.\ \(W_i^{(Q)}:=w(Y_i)\) with \(Y_i\sim Q\).

Let
\[
w_{\min}^{(P)}:=\min_{y\in \mathrm{Supp}(P)} \frac{P(y)}{Q(y)} \;>\;0,\quad
w_{\max}^{(P)}:=\max_{y\in \mathrm{Supp}(P)} \frac{P(y)}{Q(y)},\quad
w_{\max}^{(Q)}:=\max_{y\in \mathrm{Supp}(Q)} \frac{P(y)}{Q(y)},\quad
M:=\max_{y} \bigl|w(y)\log w(y)\bigr|.
\]
Then, deterministically,
\[
S \;\ge\; w_{\min}^{(P)},\qquad 
S \;\le\; w_{\max}^{(P)} + (n-1)\,w_{\max}^{(Q)} \;\le\; n\cdot \max\{w_{\max}^{(P)},w_{\max}^{(Q)}\},
\qquad
|T| \;\le\; nM.
\]
Moreover,
\[
\mu_S:=\mathbb{E}[S] \;=\; \mathbb{E}_P[W] + (n-1)\mathbb{E}_Q[W] \;=\; (1+\chi^2) + (n-1)\cdot 1 \;=\; n+\chi^2,
\]
so \(\mu_S=\Theta(n)\). Since \(\{w(Z_j)\}_{j=1}^n\) are bounded and independent, a standard Chernoff bound yields, for some constant \(c>0\),
\[
\Pr\!\left[S \le \tfrac{1}{2}\mu_S\right] \;\le\; e^{-c n}\;=\;\mathrm{negl}(n).
\]
For the third centered moments, by the Rosenthal inequality (with \(p=3\)),
\[
\mathbb{E}\bigl(|S-\mu_S|^3\bigr) \;
\;=\; O\!\bigl(n^{3/2}\bigr),
\qquad
\mathbb{E}\bigl(|T-\mu_T|^3\bigr) \;
\;=\; O\!\bigl(n^{3/2}\bigr),
\]
since \(\mathrm{Var}\,S=\Theta(n)\) and \(\mathrm{Var}\,T=\Theta(n)\). These bounds verify the hypotheses of Lemma~\ref{lem:ratio-delta} with constants independent of \(n\), completing the check.

\section{The post-processing function of Blanket decomposition}
The corresponding algorithm is shown in Algorithm \ref{alg:post-processing2}.
\begin{algorithm}[t]
\caption{Post-processing function of privacy blanket, $f^{\mathsf{B}}$ }
\label{alg:post-processing2}
\begin{algorithmic}[0]
\State \textbf{Meta parameter:} $\text{LO}_j,j=2,3,\dots,n$ 
\Require $\boldsymbol{Z}^r$
\State $\boldsymbol{Z} \gets \boldsymbol{Z}^r$
\State $J \gets \emptyset$
\State $S \gets \emptyset$
\For{$i = 1, \dots, n$}
\If{$\boldsymbol{Z}^r[i]=\perp$}
    \State Let $j$ be a randomly and uniformly chosen element of $[2 : n] \setminus J$
    \State $s \gets_{\text{LO}_j} \mathbb{Y}$ \Comment{Sample from $\text{LO}_j$}
    \State $J \gets J \cup \{j\}$
    \State $\boldsymbol{Z}[i]=s$
\EndIf
\EndFor
\State \Return $\boldsymbol{Z}$
\end{algorithmic}
\end{algorithm}

\section{Details in Proof of Theorem \ref{theorem_shuffle_IX}}\label{appendix_3}

\begin{fact}\label{fact1}
\(\Pr(X_1=x\mid C^{(s)}) \propto P(x)\,T_x^{(s)}\).
\end{fact}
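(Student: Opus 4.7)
The approach is a direct application of Bayes' rule combined with an explicit computation of the likelihood of the histogram $C^{(s)}$ under a fixed input $x$. First, I would write
\[
\Pr(X_1=x\mid C^{(s)}=c)\;\propto\;P(x)\,\Pr(C^{(s)}=c\mid X_1=x),
\]
so the whole task reduces to evaluating $\Pr(C^{(s)}=c\mid X_1=x)$ up to an $x$-independent factor. Conditionally on $X_1=x$, the signal draw satisfies $Y_1\sim\mathcal R_x$, and independently $M^{(s)}\sim\mathrm{Mult}(s;Q)$, so I would compute the likelihood by summing over which coordinate $y^\star$ received the signal contribution.

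The main step is the bookkeeping identity
\[
\Pr(C^{(s)}=c\mid X_1=x)
=\sum_{y^\star:\,c_{y^\star}\ge 1}
\mathcal R_x(y^\star)\cdot\Pr\!\bigl(M^{(s)}=c-e_{y^\star}\bigr),
\]
into which I would substitute the multinomial pmf $\Pr(M^{(s)}=m)=\frac{s!}{\prod_y m_y!}\prod_y Q(y)^{m_y}$. After pulling out the common factor $\frac{s!}{\prod_y c_y!}\prod_y Q(y)^{c_y}$ (which does not depend on $x$), the remaining sum simplifies as
\[
\sum_{y^\star}c_{y^\star}\cdot\frac{\mathcal R_x(y^\star)}{Q(y^\star)}
=\sum_{y}C^{(s)}_y\,w_x(y)=T_x^{(s)}.
\]
The indicator $c_{y^\star}\ge 1$ drops out automatically because the telescoping factor $c_{y^\star}$ vanishes whenever $c_{y^\star}=0$. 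Thus $\Pr(C^{(s)}=c\mid X_1=x)$ is, as a function of $x$, proportional to $T_x^{(s)}$, and combining with the Bayes step yields the claim.

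\textbf{Expected obstacle.} The computation is mechanical, and the only subtle point is the combinatorial simplification of the factorial ratios when one unit of mass at $y^\star$ is removed from the histogram: the identity $\frac{1}{(c_{y^\star}-1)!}=\frac{c_{y^\star}}{c_{y^\star}!}$ is what converts the sum over $y^\star$ into a weighted sum over the full histogram and ultimately produces $T_x^{(s)}$. Once this is handled cleanly, the proportionality follows without further work.
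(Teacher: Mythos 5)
Your proposal is correct and matches the paper's own proof essentially step for step: Bayes' rule, conditioning on which symbol $Y_1$ contributed, substituting the multinomial pmf, factoring out the $x$-independent term $\frac{s!}{\prod_z c_z!}\prod_z Q(z)^{c_z}$, and using $(c_y-1)!=c_y!/c_y$ to arrive at $\sum_y C^{(s)}_y w_x(y)=T_x^{(s)}$. No gaps; nothing further is needed.
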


\begin{proof}
Recall that \(X_1\sim P\), \(Y_1\mid X_1=x \sim \mathcal R_x\), and \(M^{(s)}\sim \mathrm{Mult}(s;Q)\) is independent of \((X_1,Y_1)\).  
We observe the histogram of size \(s\!+\!1\):
\[
C^{(s)} = e_{Y_1} + M^{(s)}, \quad
\text{i.e., } C^{(s)}_y = \mathbbm 1\{Y_1=y\} + M^{(s)}_y, \ \ \sum_y C^{(s)}_y = s+1.
\]
Fix any histogram \(c=(c_y)_{y\in\mathbb{Y}}\) with total count \(s+1\).  
By Bayes’ rule,
\[
\Pr(X_1=x\mid C^{(s)}=c)\ \propto\ P(x)\,\Pr(C^{(s)}=c\mid X_1=x).
\]

We compute the likelihood by conditioning on \(Y_1\):
\[
\Pr(C^{(s)}=c\mid X_1=x)
=\sum_{y\in\mathbb{Y}} \Pr(Y_1=y\mid X_1=x)\,\Pr\big(M^{(s)}=c-e_y\big)
=\sum_y \mathcal R_x(y)\,\Pr\big(M^{(s)}=c-e_y\big),
\]
where the summand is zero when \(c_y=0\).  
The multinomial probability mass function is
\[
\Pr(M^{(s)}=m)
=\frac{s!}{\prod_z m_z!}\prod_z Q(z)^{m_z}, \qquad \sum_z m_z = s.
\]
For each \(y\) with \(c_y\ge 1\),
\[
\Pr(M^{(s)}=c-e_y)
=\frac{s!}{\prod_z (c_z-\mathbbm 1\{z=y\})!}\prod_z Q(z)^{\,c_z-\mathbbm 1\{z=y\}}.
\]
Using \((c_y-1)! = c_y!/c_y\) and \(Q(z)^{c_z-\mathbbm 1\{z=y\}} = Q(z)^{c_z} Q(y)^{-1}\), we can separate the factor independent of \(y\):
\[
\Pr(M^{(s)}=c-e_y)
=\Bigg[\frac{s!}{\prod_z c_z!}\prod_z Q(z)^{c_z}\Bigg]\frac{c_y}{Q(y)}.
\]
Thus,
\[
\Pr(C^{(s)}=c\mid X_1=x)
=\Bigg[\frac{s!}{\prod_z c_z!}\prod_z Q(z)^{c_z}\Bigg]
\sum_y \mathcal R_x(y)\,\frac{c_y}{Q(y)}.
\]
The bracketed term does not depend on \(x\). Hence,
\[
\Pr(X_1=x\mid C^{(s)}=c)
\ \propto\ P(x)\sum_y c_y\,\frac{\mathcal R_x(y)}{Q(y)}
= P(x)\,T_x^{(s)},
\]
where we define
\[
T_x^{(s)} := \sum_y c_y\,w_x(y), \quad w_x(y):=\frac{\mathcal R_x(y)}{Q(y)}.
\]
Finally, replacing \(c\) by the random histogram \(C^{(s)}\) yields
\[
\Pr(X_1=x\mid C^{(s)}) \propto P(x)\,T_x^{(s)}, \quad
T_x^{(s)} = \sum_y C^{(s)}_y\,w_x(y).\qedhere
\]
\end{proof}

\begin{lemma}\label{lemma3}
Let $P$ be a probability mass function on a finite alphabet $\mathbb{X}$ and let
\[
\pi(x)\;=\;\frac{P(x)\bigl(1+\varepsilon_x\bigr)}{1+\bar\varepsilon},\qquad
\bar\varepsilon:=\sum_{u\in\mathbb{X}}P(u)\,\varepsilon_u,
\]
with $\sum_x P(x)\varepsilon_x$ finite and $1+\bar\varepsilon>0$, $1+\varepsilon_x>0$ for all $x$.
Then, for $\|\varepsilon\|_\infty$ sufficiently small, we have the second–order expansion
\begin{equation}\label{eq:quad}
D\!\bigl(\pi\Vert P\bigr)
=\frac{1}{2}\sum_{x\in\mathbb{X}} P(x)\,\bigl(\varepsilon_x-\bar\varepsilon\bigr)^2
+ R(\varepsilon),
\end{equation}
with a remainder satisfying
\begin{equation}\label{eq:remainder}
\bigl|R(\varepsilon)\bigr|
\;\le\; K\,\|\varepsilon\|_3^3,
\qquad
\|\varepsilon\|_3:=\Bigl(\sum_{x\in\mathbb{X}} P(x)\,|\varepsilon_x|^3\Bigr)^{\!\!1/3},
\end{equation}
where $K$ is a constant.
\end{lemma}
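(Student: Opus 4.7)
The plan is to rewrite $D(\pi\Vert P)$ as a $P$-average of a one-dimensional function and Taylor expand, exploiting a cancellation that kills the first-order term. Set
\[
\alpha_x \;:=\; \frac{\pi(x)}{P(x)} - 1 \;=\; \frac{\varepsilon_x - \bar\varepsilon}{1+\bar\varepsilon},
\]
so that $D(\pi\Vert P) = \sum_x P(x)\,h(\alpha_x)$ with $h(t) := (1+t)\log(1+t)$. The key observation is that $\sum_x P(x)\alpha_x = \sum_x \pi(x) - \sum_x P(x) = 0$, so the linear term in a Taylor expansion of $h$ at $0$ integrates to zero.

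I would Taylor expand $h$ to second order with a Lagrange remainder: $h(0)=0$, $h'(0)=1$, $h''(0)=1$, and $h'''(t) = -(1+t)^{-2}$ is bounded on any fixed neighborhood of $0$. Hence $h(t) = t + t^2/2 + \rho(t)$ with $|\rho(t)| \le C|t|^3$ whenever $|t|\le \tfrac12$, say. Once $\|\varepsilon\|_\infty$ is small (which implies $|\bar\varepsilon|$ and $|\alpha_x|$ are small for every $x$), this yields
\[
D(\pi\Vert P) \;=\; \tfrac12 \sum_x P(x)\alpha_x^2 \;+\; \sum_x P(x)\rho(\alpha_x),
\qquad \Bigl|\sum_x P(x)\rho(\alpha_x)\Bigr| \;\le\; C\sum_x P(x)|\alpha_x|^3.
\]

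Next I would reduce the quadratic main term to the target form. Substituting the definition of $\alpha_x$ gives
\[
\tfrac12 \sum_x P(x)\alpha_x^2 \;=\; \frac{\mathrm{Var}_P(\varepsilon)}{2(1+\bar\varepsilon)^2},
\qquad \mathrm{Var}_P(\varepsilon) := \sum_x P(x)(\varepsilon_x - \bar\varepsilon)^2.
\]
The discrepancy between this and $\tfrac12\mathrm{Var}_P(\varepsilon)$ equals $\tfrac12\mathrm{Var}_P(\varepsilon)\cdot\bigl[(1+\bar\varepsilon)^{-2}-1\bigr]$, which is $O\bigl(\mathrm{Var}_P(\varepsilon)\cdot|\bar\varepsilon|\bigr)$. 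I would absorb this into $R(\varepsilon)$ and bound it uniformly in $\|\varepsilon\|_3$ by Hölder. Using the $L^p(P)$-monotonicity $\|\varepsilon\|_1 \le \|\varepsilon\|_2 \le \|\varepsilon\|_3$, one has $|\bar\varepsilon|\le\|\varepsilon\|_1\le\|\varepsilon\|_3$ and $\mathrm{Var}_P(\varepsilon)\le\|\varepsilon\|_2^2\le\|\varepsilon\|_3^2$, giving
\[
\bigl|\mathrm{Var}_P(\varepsilon)\cdot\bar\varepsilon\bigr| \;\le\; \|\varepsilon\|_3^3.
\]

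Finally I would bound the cubic remainder $\sum_x P(x)|\alpha_x|^3$. Since $|1+\bar\varepsilon|\ge \tfrac12$ for $\|\varepsilon\|_\infty$ small enough,
\[
\sum_x P(x)|\alpha_x|^3 \;\le\; 8\sum_x P(x)|\varepsilon_x - \bar\varepsilon|^3 \;\le\; 8\bigl(\|\varepsilon\|_3 + |\bar\varepsilon|\bigr)^3 \;\le\; 64\,\|\varepsilon\|_3^3,
\]
by Minkowski's inequality in $L^3(P)$ together with $|\bar\varepsilon|\le\|\varepsilon\|_3$. Combining the two bounds yields \eqref{eq:remainder} with an absolute constant $K$, and together with the equality in the previous paragraph, this proves \eqref{eq:quad}. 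The only mildly delicate step is justifying that all Taylor remainders can be bounded by a single third moment $\|\varepsilon\|_3^3$ rather than a mixture of norms; this is handled uniformly by the $L^p(P)$ nesting above.
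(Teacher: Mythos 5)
Your proof is correct, and it takes a somewhat different (and tidier) route than the paper's. The paper expands $D(\pi\Vert P)=\sum_x P(x)\frac{1+\varepsilon_x}{1+\bar\varepsilon}\bigl(\log(1+\varepsilon_x)-\log(1+\bar\varepsilon)\bigr)$ directly, Taylor-expanding both the log difference and the prefactor $\frac{1+\varepsilon_x}{1+\bar\varepsilon}$, then collecting terms: the linear part cancels by $\sum_x P(x)(\varepsilon_x-\bar\varepsilon)=0$, the quadratic parts combine to $\tfrac12\sum_x P(x)(\varepsilon_x-\bar\varepsilon)^2$, and an inventory of cubic cross terms is bounded via $|\bar\varepsilon|^3\le\sum_x P(x)|\varepsilon_x|^3$ (Jensen). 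You instead parametrize by the likelihood-ratio perturbation $\alpha_x=\pi(x)/P(x)-1=(\varepsilon_x-\bar\varepsilon)/(1+\bar\varepsilon)$, write $D(\pi\Vert P)=\sum_x P(x)h(\alpha_x)$ with $h(t)=(1+t)\log(1+t)$, and use the exact normalization $\sum_x P(x)\alpha_x=0$ so the first-order term vanishes identically rather than through term-by-term cancellation; the price is one extra step converting $\tfrac12\sum_x P(x)\alpha_x^2=\tfrac{\mathrm{Var}_P(\varepsilon)}{2(1+\bar\varepsilon)^2}$ into $\tfrac12\mathrm{Var}_P(\varepsilon)$, whose discrepancy you correctly absorb into the remainder via $\mathrm{Var}_P(\varepsilon)\,|\bar\varepsilon|\le\|\varepsilon\|_3^3$ using the $L^p(P)$ nesting, and the cubic Taylor remainder is controlled by Minkowski plus $|\bar\varepsilon|\le\|\varepsilon\|_3$, exactly as needed. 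Your single-variable expansion substantially reduces the bookkeeping of cross terms that the paper must carry, while the paper's version works directly in $\varepsilon$-coordinates and so needs no conversion step; both yield the same conclusion with an absolute constant once a threshold like $\|\varepsilon\|_\infty\le 1/2$ is fixed, and both hinge on the same two facts (zero-mean of the first-order term and control of all cubics by $\|\varepsilon\|_3^3$).
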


\begin{proof}
By definition,
\[
D(\pi\Vert P)=\sum_{x} \pi(x)\log\frac{\pi(x)}{P(x)}
=\sum_x \frac{P(x)(1+\varepsilon_x)}{1+\bar\varepsilon}\,
\Bigl(\log(1+\varepsilon_x)-\log(1+\bar\varepsilon)\Bigr).
\]
Write $a_x:=\varepsilon_x$ and $b:=\bar\varepsilon$ for brevity. Using the Taylor expansion
\[
\log(1+u)=u-\frac{u^2}{2}+r_3(u),\qquad |r_3(u)|\le C\,|u|^3
\quad\text{for }|u|\le u_0,
\]
we obtain, for all $x$ with $|a_x|,|b|\le u_0$,
\begin{align*}
\log(1+a_x)-\log(1+b)
&=(a_x-b)-\frac{1}{2}(a_x^2-b^2)+\bigl(r_3(a_x)-r_3(b)\bigr).
\end{align*}
Also expand the prefactor:
\[
\frac{1+a_x}{1+b}=1+(a_x-b)+\rho_x,
\qquad
\rho_x:=-\frac{(a_x-b)b}{1+b},
\]
so that $|\rho_x|\le C'(|a_x|+|b|)\,|b|$ for $|b|\le u_0$.
Hence
\begin{align*}
D(\pi\Vert P)
&=\sum_x P(x)\Bigl(1+(a_x-b)+\rho_x\Bigr)\Bigl[(a_x-b)-\frac{1}{2}(a_x^2-b^2)+r_3(a_x)-r_3(b)\Bigr]\\
&=\sum_x P(x)\Bigl[(a_x-b)-\frac{1}{2}(a_x^2-b^2)\Bigr]
+\sum_x P(x)(a_x-b)\Bigl[(a_x-b)-\frac{1}{2}(a_x^2-b^2)\Bigr]\\
&\quad+\sum_x P(x)\rho_x\Bigl[(a_x-b)-\frac{1}{2}(a_x^2-b^2)\Bigr]
+\sum_x P(x)\Bigl(1+(a_x-b)+\rho_x\Bigr)\bigl(r_3(a_x)-r_3(b)\bigr).
\end{align*}
The \emph{linear} term vanishes:
\[
\sum_x P(x)(a_x-b)=\sum_x P(x)a_x-b=0.
\]
For the \emph{quadratic} contribution, note that
\[
-\frac{1}{2}\sum_x P(x)(a_x^2-b^2)=-\frac{1}{2}\sum_x P(x)a_x^2+\frac{1}{2}b^2,
\qquad
\sum_x P(x)(a_x-b)^2
=\sum_x P(x)a_x^2-b^2.
\]
Therefore,
\[
-\frac{1}{2}\sum_x P(x)(a_x^2-b^2)+\sum_x P(x)(a_x-b)^2
=\frac{1}{2}\sum_x P(x)(a_x-b)^2.
\]
All the remaining pieces are cubic (or higher) in $\{a_x\}$ and $b$:
\begin{itemize}
\item $\sum_x P(x)(a_x-b)\cdot\bigl(-(a_x^2-b^2)/2\bigr)$ is $O\bigl(\sum P|a|^3+|b|^3\bigr)$;
\item $\sum_x P(x)\rho_x\bigl[(a_x-b)-\frac{1}{2}(a_x^2-b^2)\bigr]$ is bounded by a constant times
$\sum_x P(x)\bigl(|a_x|+|b|\bigr)|b|\bigl(|a_x-b|+|a_x|^2+|b|^2\bigr)$, hence $O\bigl(\sum P|a|^3+|b|^3\bigr)$;
\item $\sum_x P(x)\bigl(1+(a_x-b)+\rho_x\bigr)\bigl(r_3(a_x)-r_3(b)\bigr)$ is $O\bigl(\sum P|a_x|^3+|b|^3\bigr)$ by the uniform cubic bound on $r_3$.
\end{itemize}
Since \(b=\sum_x P(x)\,a_x\), the triangle inequality gives
\(
|b| \;\le\; \sum_x P(x)\,|a_x|.
\)
Applying Jensen’s inequality to the convex function \(\varphi(t)=|t|^3\) yields
\[
|b|^{3}
\;\le\;
\Bigl(\sum_x P(x)\,|a_x|\Bigr)^{\!3}
\;\le\;
\sum_x P(x)\,|a_x|^{3}.
\]
and thus each cubic contribution is bounded by a constant multiple of $\sum_x P(x)|a_x|^3=\|\varepsilon\|_3^3$.
Collecting terms yields \eqref{eq:quad}--\eqref{eq:remainder}.
\end{proof}

\begin{fact}\label{fact2}
Let $M^{(s)}=(M^{(s)}_y)_{y\in\mathbb{Y}}\sim\mathrm{Mult}(s;Q)$ with $Q(y)>0$ and
\[
c:=M^{(s)}-sQ,\qquad c_y:=M^{(s)}_y-sQ(y).
\]
Then
\[
\mathbb E[c]=0,\qquad 
\mathbb E\!\big[cc^\top\big]=\mathrm{Cov}(M^{(s)})=s\big(\mathrm{Diag}(Q)-QQ^\top\big).
\]   
\end{fact}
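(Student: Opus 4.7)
The plan is to reduce everything to single-trial indicator vectors and then invoke linearity of expectation and independence across trials. Concretely, I would write $M^{(s)} = \sum_{t=1}^{s} X_t$, where $X_t = (X_{t,y})_{y \in \mathbb{Y}} \in \{0,1\}^{|\mathbb{Y}|}$ is the indicator vector of the $t$-th categorical draw, so that $X_{t,y} = \mathbbm{1}\{\text{trial } t \text{ outputs } y\}$ and exactly one coordinate of $X_t$ equals $1$. The vectors $X_1, \dots, X_s$ are i.i.d., with $\Pr[X_{t,y} = 1] = Q(y)$.

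For a single trial, linearity gives $\mathbb{E}[X_t] = Q$ (as a column vector), so $\mathbb{E}[M^{(s)}] = sQ$ and hence $\mathbb{E}[c] = 0$. For the second moment, the key observation is that in a single trial the coordinates of $X_t$ are mutually exclusive: $X_{t,y} X_{t,y'} = 0$ whenever $y \neq y'$, while $X_{t,y}^2 = X_{t,y}$. Taking expectations yields
\[
\mathbb{E}\!\bigl[X_t X_t^{\top}\bigr] = \mathrm{Diag}(Q), \qquad \mathrm{Cov}(X_t) = \mathrm{Diag}(Q) - QQ^{\top}.
\]

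Finally I would aggregate across trials using independence. Since the $X_t$ are i.i.d.,
\[
\mathrm{Cov}(M^{(s)}) = \sum_{t=1}^{s} \mathrm{Cov}(X_t) = s\bigl(\mathrm{Diag}(Q) - QQ^{\top}\bigr),
\]
and because $c = M^{(s)} - \mathbb{E}[M^{(s)}]$, the identity $\mathbb{E}[cc^{\top}] = \mathrm{Cov}(M^{(s)})$ holds by definition. This gives both assertions of the fact.

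There is no real obstacle here; the statement is the standard multinomial mean–covariance computation. The only thing worth doing carefully is the per-trial identity $\mathbb{E}[X_t X_t^{\top}] = \mathrm{Diag}(Q)$, since it is precisely the mutual exclusivity of categorical outcomes that produces the off-diagonal $-Q(y)Q(y')$ terms after centering — an observation that makes the proof essentially one line once the indicator representation is in place.
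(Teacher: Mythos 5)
Your proposal is correct and follows essentially the same route as the paper: both decompose $M^{(s)}$ into i.i.d.\ one-hot indicator vectors, exploit mutual exclusivity of coordinates within a single trial to get the per-trial covariance $\mathrm{Diag}(Q)-QQ^{\top}$, and sum over independent trials. The only cosmetic difference is that you phrase the per-trial computation in matrix form via $\mathbb{E}[X_tX_t^{\top}]=\mathrm{Diag}(Q)$, whereas the paper verifies the same covariances entry-wise.
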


\begin{proof}
Write the multinomial vector as a sum of i.i.d.\ one-hot indicators.
Let $Y_1,\dots,Y_s$ be i.i.d.\ on $\mathbb{Y}$ with $\Pr(Y_t=y)=Q(y)$ and define
\[
\mathbf 1_t(y):=\mathbbm 1\{Y_t=y\},\qquad 
M^{(s)}_y=\sum_{t=1}^{s}\mathbf 1_t(y).
\]
Then
\[
\mathbb E[M^{(s)}_y]
=\sum_{t=1}^{s}\mathbb E[\mathbf 1_t(y)]
=\sum_{t=1}^{s}Q(y)=sQ(y),
\]
so $\mathbb E[c_y]=\mathbb E[M^{(s)}_y]-sQ(y)=0$, i.e.\ $\mathbb E[c]=0$.

For the second moments, for $a,b\in\mathbb{Y}$,
\[
\mathrm{Cov}(M^{(s)}_a,M^{(s)}_b)
=\sum_{t=1}^{s}\mathrm{Cov}\big(\mathbf 1_t(a),\mathbf 1_t(b)\big)
\]
because different trials $t\neq t'$ are independent. Now
\[
\mathrm{Cov}\big(\mathbf 1_t(a),\mathbf 1_t(b)\big)
=\begin{cases}
\mathrm{Var}(\mathbf 1_t(a))=Q(a)\big(1-Q(a)\big), & a=b,\\[4pt]
\mathbb E[\mathbf 1_t(a)\mathbf 1_t(b)]-\mathbb E[\mathbf 1_t(a)]\mathbb E[\mathbf 1_t(b)]
=0-Q(a)Q(b)=-Q(a)Q(b), & a\neq b,
\end{cases}
\]
since $\mathbf 1_t(a)\mathbf 1_t(b)=0$ a.s.\ for $a\neq b$ (mutually exclusive one-hot).
Therefore
\[
\mathrm{Cov}(M^{(s)}_a,M^{(s)}_b)
=s\Big(Q(a)\mathbf 1\{a=b\}-Q(a)Q(b)\Big).
\]
Stacking these entries gives the covariance matrix
\[
\mathrm{Cov}(M^{(s)})=s\big(\mathrm{Diag}(Q)-QQ^\top\big).
\]

Finally, because $c=M^{(s)}-\mathbb E[M^{(s)}]$ is the centered vector, we have
\[
\mathbb E[cc^\top]=\mathrm{Cov}(M^{(s)}),
\]
hence
\[
\mathbb E[cc^\top]=s\big(\mathrm{Diag}(Q)-QQ^\top\big).
\qedhere
\]    
\end{proof}

\begin{fact}[Variance of a Linear Statistic via Covariance Matrix]\label{fact3}
Let $\mathbb{Y}=\{y_1,\dots,y_m\}$ and define, for each $x$,
\[
w_x := \big(w_x(y_1),\dots,w_x(y_m)\big)^\top \in \mathbb R^m,
\qquad 
\bar w := \mathbb E_{x\sim P}[w_x].
\]
Let the covariance matrix
\[
\Sigma_w := \mathrm{Cov}_{x\sim P}(w_x)
= \mathbb E_x\!\big[(w_x-\bar w)(w_x-\bar w)^\top\big].
\]
For any (possibly random) vector $c=(c_{y_1},\dots,c_{y_m})^\top\in\mathbb R^m$ 
that is independent of $x$, define the linear statistic
\[
A_x := \sum_{y} c_y\, w_x(y) = c^\top w_x .
\]

\begin{enumerate}
\item[(1)] \textbf{(Conditional variance identity)}  
For any fixed $c$,
\[
\mathrm{Var}_{x\sim P}(A_x) \;=\; c^\top \Sigma_w\, c.
\]

\item[(2)] \textbf{(Taking expectation over $c$)}  
If $c$ is random and independent of $x$, then
\[
\mathbb E\!\left[\mathrm{Var}_{x\sim P}(A_x)\right]
= \mathbb E\big[c^\top \Sigma_w\, c\big]
= \mathrm{Tr}\!\big(\Sigma_w\,\mathbb E[c\,c^\top]\big).
\]
\end{enumerate}
\end{fact}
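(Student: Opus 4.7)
The plan is to prove both parts by direct algebraic manipulation, reducing the statement to the standard identity relating the variance of a linear statistic to a quadratic form in the covariance matrix. No sophisticated machinery is needed; only linearity of expectation, the transpose identity for scalars, and the cyclic property of the trace.

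For part~(1), I would start from the definitions. With $c$ fixed, $A_x = c^\top w_x$ is a scalar affine functional of $w_x$, so by linearity of expectation $\mathbb{E}_{x\sim P}[A_x] = c^\top \bar w$. The centered random variable is then $A_x - c^\top \bar w = c^\top(w_x - \bar w)$. Because this is a scalar, its square can be rewritten as the outer-product form $c^\top(w_x-\bar w)(w_x-\bar w)^\top c$. Taking expectation over $x\sim P$ and pulling the (deterministic) vector $c$ outside the expectation yields
\[
\mathrm{Var}_{x\sim P}(A_x)
= c^\top\,\mathbb E_{x\sim P}\!\big[(w_x-\bar w)(w_x-\bar w)^\top\big]\,c
= c^\top \Sigma_w\, c,
\]
which is the claim.

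For part~(2), the key device is the trace trick. Since $c^\top \Sigma_w c$ is a scalar, it equals its own trace, and by cyclicity of the trace, $c^\top \Sigma_w c = \mathrm{Tr}(\Sigma_w\,cc^\top)$. Because $c$ is independent of $x$, the identity from part~(1) holds pointwise in $c$, so one can take expectation over $c$ on both sides. Exchanging the expectation with the (finite-dimensional, linear) trace operation, one obtains
\[
\mathbb E\!\bigl[\mathrm{Var}_{x\sim P}(A_x)\bigr]
= \mathbb E\!\bigl[c^\top \Sigma_w\, c\bigr]
= \mathbb E\!\bigl[\mathrm{Tr}(\Sigma_w\, cc^\top)\bigr]
= \mathrm{Tr}\!\bigl(\Sigma_w\,\mathbb E[cc^\top]\bigr),
\]
as desired.

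I do not anticipate a real obstacle here: the entire statement is bookkeeping over a finite alphabet. The only subtle point worth highlighting is the role of the independence hypothesis between $c$ and $x$, which is what permits the two expectations to decouple cleanly and lets part~(1) be applied conditionally on $c$ without any recourse to the law of total variance or to conditional covariances.
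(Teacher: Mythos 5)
Your proposal is correct and follows essentially the same route as the paper's proof: center $A_x$ via $\mathbb{E}_x[A_x]=c^\top\bar w$, rewrite the squared scalar as the quadratic form $c^\top(w_x-\bar w)(w_x-\bar w)^\top c$ to get part (1), then use the trace trick with cyclicity and linearity (justified by independence of $c$ and $x$) for part (2). No gaps.
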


\begin{proof}
Since $\mathbb E_x[A_x]=c^\top \bar w$, we have
\[
A_x - \mathbb E_x[A_x] = c^\top (w_x-\bar w).
\]
Thus,
\[
\mathrm{Var}_{x}(A_x)
= \mathbb E_x\!\left[(c^\top (w_x-\bar w))^2\right]
= c^\top\,\mathbb E_x[(w_x-\bar w)(w_x-\bar w)^\top]\,c
= c^\top \Sigma_w c.
\]
If $c$ is random but independent of~$x$, taking expectation yields
\[
\mathbb E[\mathrm{Var}_x(A_x)]
= \mathbb E[c^\top \Sigma_w c]
= \mathrm{Tr}\big(\Sigma_w\,\mathbb E[c c^\top]\big),
\]
using $\mathrm{Tr}(AB)=\mathrm{Tr}(BA)$ and linearity of trace.
\end{proof}

\begin{lemma}\label{lemma_main}
$$\mathbb E\!\big[\mathrm{Var}_{x\sim P}(A_x^{(s)})\big]
=s\big(\overline{\chi^2}-\chi^2(\bar P\Vert Q)\big)\;+\;O(1).$$
\end{lemma}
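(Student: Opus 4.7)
The plan is to split $A_x^{(s)}$ into its ``multinomial fluctuation'' part $B_x := \sum_y (M_y^{(s)}-sQ(y))\,w_x(y) = c^\top w_x$ and its ``single-draw'' part $D_x := w_x(Y_1)-1$, and then to decompose
\[
\mathrm{Var}_{x\sim P}(A_x^{(s)}) \;=\; \mathrm{Var}_{x}(B_x) \;+\; \mathrm{Var}_{x}(D_x) \;+\; 2\,\mathrm{Cov}_{x}(B_x,D_x).
\]
Taking expectation over $(M^{(s)},Y_1)$, I aim to show that $\mathbb E[\mathrm{Var}_x(B_x)]$ contributes the claimed leading term $s\bigl(\overline{\chi^2}-\chi^2(\bar P\Vert Q)\bigr)$, while both $\mathbb E[\mathrm{Var}_x(D_x)]$ and $\mathbb E[\mathrm{Cov}_x(B_x,D_x)]$ are $O(1)$.

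For the leading term, Fact~\ref{fact3}(2) gives $\mathbb E[\mathrm{Var}_x(B_x)] = \mathrm{Tr}\bigl(\Sigma_w\,\mathbb E[cc^\top]\bigr)$ with $\Sigma_w := \mathrm{Cov}_{x\sim P}(w_x)$, and Fact~\ref{fact2} rewrites $\mathbb E[cc^\top] = s\bigl(\mathrm{Diag}(Q) - QQ^\top\bigr)$. A key cancellation is that $Q^\top \Sigma_w Q = \mathrm{Var}_{x\sim P}\!\bigl(\sum_y Q(y) w_x(y)\bigr) = \mathrm{Var}_{x\sim P}(1) = 0$, since $\sum_y Q(y)\,w_x(y) = \sum_y \mathcal R_x(y) = 1$ for every $x$. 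What survives is the diagonal piece
\[
\mathbb E[\mathrm{Var}_x(B_x)] \;=\; s\sum_y Q(y)\,\mathrm{Var}_{x\sim P}\bigl(w_x(y)\bigr).
\]
To match this to the target expression, I will use $\overline{\chi^2} = \sum_y Q(y)\,\mathbb E_x[w_x(y)^2] - 1$ (since $\chi^2(\mathcal R_x\Vert Q) = \mathbb E_Q[w_x^2]-1$) together with $\chi^2(\bar P\Vert Q) = \sum_y Q(y)\bigl(\mathbb E_x w_x(y)\bigr)^2 - 1$ (since $\bar P(y)/Q(y) = \mathbb E_x[w_x(y)]$); subtracting the two cancels the constant $-1$'s and produces exactly $\sum_y Q(y)\,\mathrm{Var}_x(w_x(y))$.

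For the lower-order contributions, since $\mathbb Y$ is finite and $Q>0$ on its support, all ratios $w_x(y)$ are uniformly bounded by some constant $W$; hence $\mathbb E_{Y_1}\bigl[\mathrm{Var}_x(w_x(Y_1))\bigr] \le W^2 = O(1)$. The cross-term vanishes in expectation: writing $\mathrm{Cov}_x(B_x,D_x) = \sum_y c_y\,\mathrm{Cov}_x\bigl(w_x(y),\,w_x(Y_1)\bigr)$ and invoking both $\mathbb E[c]=0$ (Fact~\ref{fact2}) and the independence of $c$ and $Y_1$ built into the setup of Lemma~\ref{lem:fixed-s}, one obtains $\mathbb E[\mathrm{Cov}_x(B_x,D_x)] = 0$.

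The main obstacle is really the algebraic identification of the diagonal contribution $\sum_y Q(y)\,\mathrm{Var}_x(w_x(y))$ with the target quantity $\overline{\chi^2}-\chi^2(\bar P\Vert Q)$, together with the clean vanishing $Q^\top \Sigma_w Q = 0$ that kills the off-diagonal piece of $\mathbb E[cc^\top]$. Once those two identities are in place, the rest is bookkeeping built on Facts~\ref{fact2}--\ref{fact3} and the uniform boundedness of the likelihood ratios $w_x(y)$.
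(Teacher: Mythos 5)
Your proposal is correct and follows essentially the same route as the paper's own proof: the same split of $A_x^{(s)}$ into the multinomial-fluctuation and single-draw parts, the same use of Facts~\ref{fact2}--\ref{fact3} to get $\mathrm{Tr}\bigl(\Sigma_w\,\mathbb E[cc^\top]\bigr)$, the same cancellation $\mathrm{Var}_{x\sim P}\bigl(\sum_y Q(y)w_x(y)\bigr)=0$ killing the $QQ^\top$ piece, the same identification $\sum_y Q(y)\,\mathrm{Var}_x(w_x(y))=\overline{\chi^2}-\chi^2(\bar P\Vert Q)$, and the same $O(1)$ and vanishing-cross-term bookkeeping. No gaps.
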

\begin{proof}
Let $\boldsymbol{c}\in\mathbb R^{|\mathbb{Y}|}$ be defined by $c_y:=M^{(s)}_y-sQ(y)$.
Then $\mathbb E[\boldsymbol{c}]=\boldsymbol{0}$ and $\mathbb E[\boldsymbol{c}\boldsymbol{c}^\top]=s(\mathrm{Diag}(Q)-\boldsymbol{Q}\boldsymbol{Q}^\top)$, where $\mathrm{Diag}(Q)$ is a $|\bar{\mathbb{Y}}|\times |\bar{\mathbb{Y}}|$ matrix with diagnosis $\mathrm{Diag}(Q)_{ii}=Q(i)$, and $\boldsymbol{Q}$ is a column vector with $\boldsymbol{Q}_i=Q(i)$ (Fact \ref{fact2} in Appendix~\ref{appendix_3}).

Define the covariance matrix (with respect to $x\sim P$)
\[
\Sigma_w:=\mathrm{Cov}_{x\sim P}\big(w_x(\cdot)\big),\qquad
(\Sigma_w)_{ab}=\mathrm{Cov}_x\big(w_x(a),w_x(b)\big).
\]
For fixed $(M^{(s)},Y_1)$,
\[
\mathrm{Var}_{x\sim P}\!\big(A_x^{(s)}\big)
=\mathrm{Var}_{x\sim P}\!\Big(\sum_y c_y w_x(y)\Big)\;+\;\mathrm{Var}_{x\sim P}\!\big(w_x(Y_1)\big)
\;+\;2\,\mathrm{Cov}_{x\sim P}\!\Big(\sum_y c_y w_x(y),\,w_x(Y_1)\Big).
\]
Taking expectation in $(M^{(s)},Y_1)$, the cross term vanishes since $\mathbb E[c_y]=0$ and
$\boldsymbol{c}$ is independent of $(x,Y_1)$.
Therefore
\[
\mathbb E\big[\mathrm{Var}_{x\sim P}(A_x^{(s)})\big]
=\mathbb E\!\left[\mathrm{Var}_{x\sim P}\!\Big(\sum_y c_y w_x(y)\Big)\right]
+\mathbb E\!\left[\mathrm{Var}_{x\sim P}\big(w_x(Y_1)\big)\right].
\]
The first term equals (Fact \ref{fact3} in Appendix~\ref{appendix_3})
\[
\mathbb E\!\big[\boldsymbol{c}^\top \Sigma_w \boldsymbol{c}\big]
=\mathrm{Tr}\!\big(\Sigma_w\,\mathbb E[\boldsymbol{c}\boldsymbol{c}^\top]\big)
=s\,\mathrm{Tr}\!\big(\Sigma_w(\mathrm{Diag}(Q)-\boldsymbol{Q}\boldsymbol{Q}^\top)\big)
=s\,\mathrm{Tr}\!\big(\Sigma_w\cdot\mathrm{Diag}(Q)\big)
=s\sum_y Q(y)\,\mathrm{Var}_{x\sim P}(w_x(y)),
\]
because $\mathrm{Tr}(\boldsymbol{Q}\boldsymbol{Q}^\top\Sigma_w)=\mathrm{Var}_{x\sim P}(\sum_y Q(y)w_x(y))=0$, as
$\sum_y Q(y)w_x(y)=\sum_y \mathcal R_x(y)=1$ for all $x$.
Now observe that
\begin{align*}
\sum_{y} Q(y)\,\mathrm{Var}_x(w_x(y))
&=\sum_yQ(y)\mathbb{E}_x[w_x^2(y)]-\sum_yQ(y)\left( \mathbb{E}_x[w_x(y)] \right)^2  \\
&=\mathbb E_x\!\left[\sum_y \frac{\mathcal R_x(y)^2}{Q(y)}\right]
-\sum_y \frac{\bar P(y)^2}{Q(y)}\\
&=\overline{\chi^2}-\chi^2(\bar P\Vert Q).
\end{align*}

The second term $\mathbb E[\mathrm{Var}_x(w_x(Y_1))]$ is $O(1)$: the marginal of $Y_1$ is $\bar P$, and on a finite alphabet with $Q(y)>0$, the family $\{w_x(y)\}$ is uniformly bounded.  
Thus
\begin{equation}\label{eq:VarA-final}
\mathbb E\!\big[\mathrm{Var}_{x\sim P}(A_x^{(s)})\big]
=s\big(\overline{\chi^2}-\chi^2(\bar P\Vert Q)\big)\;+\;O(1).
\end{equation}
\end{proof}

}

\end{document}